\newtheorem{thm}{Theorem}[section]
\newtheorem{lemma}[thm]{Lemma}
\newtheorem{assumption}[thm]{Assumption}
\newtheorem{prop}[thm]{Proposition}
\newtheorem{cor}[thm]{Corollary}
\theoremstyle{definition}
\newtheorem{remark}[thm]{Remark}
\newtheorem{design}[thm]{Design}
\numberwithin{equation}{section}
\newcommand{\prob}{\mathbb{P}}
\newcommand{\wh}{\widehat}
\newcommand{\var}{\mathrm{Var}}
\newcommand{\lag}{\langle}
\newcommand{\rag}{\rangle} 
\newcommand{\eproof}{\end{proof}}
\newcommand{\convp}{\overset{p}{\to}}
\newcommand{\convd}{\overset{d}{\to}}
\newcommand{\one}{\mathds{1}}
\newcommand{\inv}{^{-1}}
\newcommand{\ov}{\overline}
\newcommand{\wt}{\Tilde}
\DeclareMathOperator*{\argmin}{argmin}
\DeclareMathOperator{\vect}{vec}
\DeclareMathOperator{\poly}{Poly}
\newcommand{\numgroups}{B} 
\newcommand{\kvec}{k} 
\newcommand{\xvec}{(x_{it}^1, \dots, x_{it}^{\numgroups})}
\newcommand{\thetaci}{\theta(c_i)}
\newcommand{\truethetaci}{\theta^0(c_i^0)}
\newcommand{\thetaoracle}{\Tilde{\theta}}
\newcommand{\estthetaci}{\wh{\theta}(\wh{c}_i)}
\newcommand{\samplerisk}{\wh{Q}}
\newcommand{\mapping}{\sigma}
\newcommand{\qtilde}{\Tilde{Q}}
\newcommand{\qoracle}{\Tilde{Q}}
\newcommand{\qconc}{\wh{Q}}
\newcommand{\distab}{d(\ell, a, b)}
\newcommand{\neighborhood}{\mathcal{N}}
\newcommand{\boundfn}{B_i^T}
\newcommand{\clusterdist}{d}
\newcommand{\clusterdistmin}{d_{min}}
\newcommand{\erroracle}{\Tilde{e}}
\newcommand{\ellaux}{\Tilde{\ell}}
\newcommand{\evalboundtwo}{\underline{\rho}}
\newcommand{\covxmatrix}{\wh{M}}
\newcommand{\cltvec}{v}
\newcommand{\estvariance}{\wh{V}}
\newcommand{\deltatheta}{\Delta \wh{\theta}}
\newcommand{\cp}{C_p}
\newcommand{\biascorr}{\wh{b}}
\newcommand{\baserisk}{\wh{Q}^0}
\newcommand{\cest}{\wh{c_i}}
\newcommand{\ctrue}{c_i^0}
\newcommand{\clusterspace}{\mathcal{C}}
\newcommand{\thetak}{\wh{\theta}^k}
\newcommand{\gammak}{\wh{\gamma}^k}
\newcommand{\modelk}{\wh{m}^k}
\newcommand{\modeltrue}{m^{k_0}}
\newcommand{\model}{m}
\newcommand{\rate}{r_{NT}}
\newcommand{\avgdist}{d}
\newcommand{\unifboundxxsup}{C_{NT}}
\newcommand{\eiglowerbound}{\underline{\lambda}}
\newcommand{\rootrate}{b_T}
\newcommand{\thetarate}{a_{T}}
\newcommand{\qopt}{\wh{Q}^{opt}}
\newcommand{\thetaopt}{\wh{\theta}^{opt}}
\title{Blocked Clusterwise Regression}
\author{Max Cytrynbaum \\ MIT}
\begin{document}
\maketitle

\begin{abstract}
A recent literature in econometrics models unobserved cross-sectional heterogeneity in panel data by assigning each cross-sectional unit a one-dimensional, discrete latent type. 
Such models have been shown to allow estimation and inference by regression clustering methods. 
This paper is motivated by the finding that the clustered heterogeneity models studied in this literature can be badly misspecified, even when the panel has significant discrete cross-sectional structure. 
To address this issue, we generalize previous approaches to discrete unobserved heterogeneity by allowing each unit to have multiple, imperfectly-correlated latent variables that describe its response-type to different covariates.
We give inference results for a k-means style estimator of our model and develop information criteria to jointly select the number clusters for each latent variable. 
Monte Carlo simulations confirm our theoretical results and give intuition about the finite-sample performance of estimation and model selection.
We also contribute to the theory of clustering with an over-specified number of clusters and derive new convergence rates for this setting. 
Our results suggest that over-fitting can be severe in k-means style estimators when the number of clusters is over-specified. 
\end{abstract}

\section{Introduction}

We often believe that there may be significant cross-sectional heterogeneity in the structural relationship between observed covariates and response.
Even with panel data, however, estimating distinct regression coefficients for each cross-sectional unit can be noisy or infeasible when the time dimension is small.
Clusterwise regression methods (e.g. \cite{LinNg2012}, \cite{BonhommeManresa2015}), which model individual heterogeneity as a function of a one-dimensional discrete latent type, have recently become popular as a viable compromise between the common parameter assumption and full heterogeneity. 
However, as we show, this discretization of heterogeneity can be badly misspecified even when the panel has significant cross-sectional structure.
By introducing multiple, imperfectly correlated latent types, we can relieve this issue and significantly enrich the set of panel structures that can be handled by clustering methods.
In particular, our approach is motivated by a class of data-generating processes where units are clustered along multiple latent dimensions or ``response-types'' to distinct blocks of the covariate vector. 
We motivate this generalization with several examples from finance and production function estimation. 
The main contribution of this paper is to modify existing clustering methods for use in this larger family of models and show that they can likewise be used to perform inference on regression parameters. \\  

Following \cite{BonhommeManresa2015} (henceforth BM), we establish consistency and asymptotic normality for a k-means style estimator in our setting. 
The estimation algorithm is iterative and alternates between (1) solving a least-squares problem to estimate cluster parameters for each block and (2) updating latent types for each cross-sectional unit based on a unit-wise predictive criterion. 
As in BM, our proof proceeds by establishing asymptotic equivalence with the oracle estimator where each unit's latent types are known. 
We extend the approach in \cite{AndoBai2016} to give a $\cp$ style information criterion to choose the number of clusters (types) for all the latent variables simultaneously. \\ 

In general, the true number of clusters in a given data set is unknown. 
Thus, the behavior of estimators with a misspecified number of clusters is important both for model selection theory as well as for our understanding the finite-sample properties of clustering estimators. 
Here, we make some contributions to the theory of models with an over-specified number of clusters, improving the convergence rates given in \cite{LiuOverspecifiedGroups} for the linear regression setting. 
In contrast to the well-specified case, difficulty obtaining the ``fast rate'' $O_p(\frac{1}{NT})$ when we over-specify the number of clusters suggests that over-fitting may be severe when the number of clusters is over-specified. 
We conjecture that $\sqrt{T}$-consistency may be optimal for over-specified models.

\subsection{Motivating Example - Production Function Estimation}
Consider panel data on firms' production levels and factor usage. 
We are interested in estimating the firm-specific production functions
\begin{equation} \label{equation:production_function}
y_{it} = \theta_{i1}L_{it} + \theta_{i2} K_{it} + \theta_{i3} M_{it} + \theta_{i4} \text{Elec}_{it} + e_{it}    
\end{equation}

where $y_{it}$ is a measure of output and $L_{it}, K_{it}, M_{it}$ are labor, capital and materials (all in logs), and $\text{Elec}_{it}$ is a measure of electricity usage.  
Suppose that the heterogeneity in factor elasticities can be well approximated by 
\[
\theta_{i\ell} \in \{\theta^{low}_{\ell}, \theta^{mid}_{\ell}, \theta^{high}_{\ell} \} \quad 1 \leq \ell \leq 4
\]

Ignoring endogeneity in input choice, we consider estimation of \ref{equation:production_function} with clusterwise regression. 
The problem with this approach is readily apparent - although each firm can only have one of $3 \cdot 4 = 12$ elasticity types, $\theta_i$ can take up to $3^4 = 81$ distinct values. 
Thus, estimating this model with clusterwise regression would require $k=81$ clusters to be well-specified. 
For a panel of $200$ firms, this would lead to estimation with approximately $N/81 \leq 3$ firms in each regression, in spite of significant cross-sectional homogeneity. 
However, with $k=81$ clusters the model is also significantly over-parameterized. 
For instance, there will be $27$ distinct clusters with each level of labor elasticity coefficient. \\

The problem is that current clustering models assume limited heterogeneity in the individual parameter vectors $\theta_i$. 
In our example, however, cross-sectional heterogeneity takes the form of a few discrete elasticity levels for \emph{each input factor}, while the support of $\theta_i$ itself is large. 
This suggests a model with multiple latent heterogeneity types. 
For instance
\[
\theta_i = (\theta_1(c_{i1}), \theta_2(c_{i2}), \theta_3(c_{i3}), \theta_4(c_{i4}))
\]
with latent type $c_{i\ell}$ for $1 \leq \ell \leq 4$ controlling the elasticity level of factor $\ell$.  

\subsection{Related Literature and Outline}
Early contributions to the econometric literature on clustering include \cite{Sun2005} and \cite{BuchinskyHotz2005}. 
Linear panel data models with discrete unobservable heterogeneity have recently been studied in \cite{LinNg2012}, \cite{BonhommeManresa2015}, \cite{Su2016}, \cite{Su2016Homogeneity}, \cite{DzemskiOkui2018}. 
Our asymptotic normality results for the well-specified case closely follow the analysis pioneered in \cite{BonhommeManresa2015}. 
\cite{AndoBai2016} extends clustering methods to linear factor models and gives an information criterion for choosing the number of clusters. 
We develop a similar $\cp$-style criterion in our setting. 
Outside of the linear case, \cite{Zhang2019} and \cite{Chen2019} study clustered linear conditional quantile regression, and \cite{BonhommeManresa2019Discretizing} considers discrete latent types as an approximation to continuous unobserved heterogeneity. 
\cite{LiuOverspecifiedGroups} studies clustering in M-estimation with an over-specified number of groups. 
We build on their techniques and significantly sharpen their rate results for the linear case. 
In contemporaneous work, \cite{Cheng2019} consider a clustering model with two latent types in a GMM setting. 
By contrast, we allow for $\numgroups > 1$ latent types in a linear model with individual fixed effects. 
Clusterwise regression was initially proposed in \cite{Spath1979} as ``Algorithm 39 - Clusterwise Linear Regression.''\\

Further afield, this paper is related to a number of literatures in statistics and computer science, such as the literature on clustering functional data, e.g. \cite{Wasserman2005CATS}, \cite{Yamamoto2014}, \cite{LintonVogt2019}, and subspace clustering, e.g. \cite{Candes2012}.  
In statistics, related methods include homogeneity pursuit, proposed in  \cite{Ke2015}. See also \cite{Ke2016} and \cite{Lian2019}. 
In the Bayesian literature, clusterwise regression is also known as multilevel regression; see \cite{GelmanHill2007}. \\

The paper is organized as follows - we introduce our model and estimator in section \ref{section:model}. 
Asymptotic properties of the estimator and consistency of model selection are given in section \ref{section:asymptotic}. 
Section \ref{section:overspecification} discusses models with an over-specified number of clusters. 
Monte Carlo simulations are given in section \ref{section:monte_carlo}, and proofs in section \ref{section:proofs}. 
Supplementary appendix \ref{section:supplement} collects technical lemmas and other ancillary discussions. 

\section{Model and Estimation} \label{section:model}
\subsection{Model}
Let $y_{it}$ and $x_{it}$ denote repsonse and covariates for $t = 1, \dots, T$ time periods and $i = 1, \dots, N$ cross-sectional observations. 
The covariate vector $x_{it} \in \mathbb{R}^p$ is divided into $1 \leq \ell \leq \numgroups$ blocks $x_{it}^{\ell}$, where $x_{it} = (x_{it}^1, \dots, x_{it}^{\numgroups})$, and $\numgroups$ denotes the total number of blocks. 
We let $\kvec = (k_1, \dots, k_{\numgroups})$, where $k_{\ell}$ denotes the number of distinct latent types (clusters) associated with the $\ell^{th}$ block.
Possible cluster assignments are denoted $c = (c_1, \dots, c_{\numgroups}) \in \prod_{\ell} [k_{\ell}] \equiv \mathcal{C}$.
For instance, a unit in cluster $1$ in the first block and cluster $3$ in the second block would have $c=(1, 3)$. \\

Each cross-sectional unit belongs to exactly one cluster for each block.
We let $\gamma: [N] \to \prod_{\ell} [k_{\ell}]$ denote an assignment of cross-sectional units to cluster vectors, so that $\gamma(i) = c_i$.  
The set of all possible cluster assignments is denoted $\Gamma$.
In the main specification, we assume that the response $y_{it}$ is given by 

\begin{equation}\label{problem}
y_{it} = x_{it}'\theta(c_i) + e_{it}
\end{equation}

with the $\ell^{th}$ block parameter selected by the latent variable $c_{i\ell}$ 
\begin{equation} \label{equation:cluster_structure}
\theta(c_i) = (\theta_1(c_{i1}), \dots, \theta_{\numgroups}(c_{i \numgroups}))
\end{equation}

Thus, each covariate grouping $\ell$ is associated with $k_{\ell}$ parameter sub-vectors with $\theta_{\ell}(c_{\ell}) \in \mathbb{R}^{d_{\ell}}$.
Our goal is to jointly estimate the true parameter $\theta^0$ and true cluster assignments $\gamma^0(i) = (c^0_{i1}, \dots, c^0_{i\numgroups})$ of each cross-sectional unit.
In appendix \ref{proof:fe}, we also give results for the model with individual fixed effects
\begin{equation}
y_{it} = x_{it}'\theta(c_i) + a_i + e_{it} \\
\end{equation}
\quad \\
\textbf{Relationship with Clusterwise Regression}: The model above nests clusterwise regression (as in \cite{LinNg2012}) when $\numgroups=1$. 
For $\numgroups > 1$, it is statistically equivalent to clusterwise regression when the conditional pdf $\prob(c^0_{i(-\ell)} | c^0_{i\ell})$ is degenerate (perfectly correlated types). 
Similarly, clusterwise regression ($\numgroups=1$) with exponentially many clusters $k = \prod_{\ell=1}^{\numgroups} k_{\ell}$ and exponentially many constraints nests our model.  
For instance, let $p=\numgroups$ and assume $\theta_{\ell} \in \{\pm 1\}$ for each $\ell$. 
Then our model would be equivalent to a clusterwise regression model with $2^p$ clusters and $p2^{p-1}$ linear equality constraints. \\

\textbf{Example - Exchange Rates}:
Consider financial market data with $y_{it}$ the exchange rate against USD of country $i$ at time $t$, $p^{oil}_{t}$ the price of crude oil, $b_{it}$ a measure of the country's business cycle and $r_{t}$ the US discount rate, we model 
\[
y_{it} = \theta_{i1} p^{oil}_{t} + \theta_{i2} b_{it} + \theta_{i3} r_{t} + e_{it}
\]
Due to differences in national industry composition, the magnitude and composition of foreign trade, financial openness and so on, we may expect heterogenous marginal responses $\theta_{i \ell}$ of $y_{it}$ to each of the factors above. 
As in the introduction, we may model $\theta_{i\ell} \in \{\theta_{\ell}(1), \dots, \theta_{\ell}(k_{\ell})\}$, corresponding to $k_{\ell}$ different sensitivity levels to factor $\ell$. 
However, we don't expect these unobserved types to be perfectly correlated across factors. 
For instance, we might expect both Venezuela and China to have large $\theta_{i1}$ but very different $\theta_{i3}$. 
A factor error structure could be accommodated using the techniques in \cite{AndoBai2016}. 

\subsection{Estimator}
We define our estimator of the parameter $\theta^0$ and cluster assignment $\gamma^0$ as

\begin{equation} \label{estimator}
 (\wh{\theta}, \wh{\gamma}) = \argmin_{\gamma \in \Gamma, \theta \in \Theta} \frac{1}{NT}\sum_{i=1}^N\sum_{t=1}^T (y_{it} - x_{it}'\theta(c_i))^2
\end{equation}

We let $\samplerisk(\theta, \gamma)$ denote the sample risk in \ref{estimator}. 
There are many algorithms available for the least squares partitioning problem above\footnote{See, for instance, the discussion in BM Appendix S1.}.
One benchmark approach, known as Lloyd's algorithm (\cite{Lloyd1982}) in the setting of k-means clustering, takes a coordinate ascent approach to the problem in \ref{estimator}, alternately updating the parameters $\theta$ and assignments $\gamma$ until convergence. \\

\textbf{Lloyd's Algorithm} - Fix a division of the covariate vector $x_{it} = \xvec$ into blocks with $x_{it}^{\ell} \in \mathbb{R}^{d_{\ell}}$ and fix the number of clusters $\kvec = (k_1, \dots, k_{\numgroups})$ in each block.  
Our approach is a modification of Lloyd's algorithm for k-means clustering. 
We perform coordinate ascent on the sample objective $\samplerisk(\theta, \gamma)$ by alternating parameter updates and cluster assignment updates until convergence. \\  

(1) Randomly initialize parameters $\theta^1$ and cluster assignments $\gamma^1$. \\
(2) Given $\theta^s$, set $\gamma^{s + 1} = \argmin_{\gamma \in \Gamma} \samplerisk(\theta^s, \gamma)$. \\
(3) Given $\gamma^{s + 1}$, update $\theta^s \to \theta^{s + 1}$. \\
(4) Repeat (2) and (3) until convergence. \\

Since problem \ref{estimator} is not generally convex, we repeat steps (1) through (4) from different random initializations (in parallel), and take the estimate that achieves the lowest sample risk $\samplerisk$.
See appendix \ref{appendix_computation} for more discussion of the implementation of this algorithm and related computational issues. 

\section{Asymptotic Properties} \label{section:asymptotic}

In this section, we investigate the asymptotic properties of the estimator $(\wh{\theta}, \wh{\gamma})$ defined above as $N, T \to \infty$. 
In what follows, we assume the data is generated from the model \ref{problem} with $(\theta^0, \gamma^0)$ the true slope parameters and cluster assignment function. 
We let $\| \cdot \|$ denote the usual Euclidean norm. 

\subsection{Consistency}

\begin{assumption} \label{assumptions:consistency} 
We make the following assumptions
\begin{enumerate}[label={(\alph*)}, ref ={\ref*{assumptions:consistency}.(\alph*)}, itemindent=.5pt, itemsep=.5pt]
    \item For each $\ell \in [\numgroups]$, the parameter space $\Theta_{\ell} \subset \mathbb{R}^{d_{\ell} \times k_{\ell}}$ is compact \label{consistency:compact}
    \item $\frac{1}{NT^2} \sum_{i=1}^N \sum_{t=1}^T \sum_{s=1}^T e_{it} e_{is} x_{it}' x_{is} \convp 0$ \label{consistency:dependence}
    \item $\|\theta^0_{\ell a} - \theta^0_{\ell b} \| \equiv \distab > 0$ for each pair of clusters $a, b \in [k_{\ell}]$\label{consistency:separation}
    \item Define $M(c, c', \gamma) \equiv \frac{1}{NT} \sum_{i, t} x_{it} x_{it}' \mathds{1}(c_i^0 = c')\mathds{1}(c_i = c)$ and let $\rho(c, c', \gamma) \equiv \lambda_{min}(M(c, c', \gamma))$. 
    Then there exists $\delta > 0$ such that $\inf_{c', \gamma} \max_{c} \rho(c, c', \gamma) \geq \delta - o_P(1)$ as $N, T \to \infty$. \label{consistency:eigenvalue} 
\end{enumerate}
\end{assumption}

Assumption \ref{consistency:compact} is the usual parameter space compactness condition. 
Assumption \ref{consistency:dependence} can be seen as limiting the time-series dependence of errors and covariates, averaged over cross-sectional units.
Condition \ref{consistency:separation} ensures that the clusters within each grouping are non-identical. 
The final assumption \ref{consistency:eigenvalue} is the analogue in our setting of assumption S2(a). in BM. 
This condition is used to ensure curvature of the sample risk function $\wh{Q}$. 
If there is a common parameter ($\numgroups = 1$, $k = 1$), this is the usual non-collinearity condition for pooled panel regression. 
See section \ref{appendix:eigenvalue_discussion} in the appendix for further discussion, as well as section S4.2 in the supplementary appendix of BM.\\ 

\textbf{Cluster Label Ambiguity.} The minimizer $\argmin_{\gamma \in \Gamma, \theta \in \Theta} \samplerisk(\theta, \gamma)$ is only unique up to permutations of the labels in $\mathcal{C}$ and their associated parameter vectors in $\theta$.
Thus, the $c \in \mathcal{C}$ used to label estimated clusters in each block is arbitrary, and to resolve this ambiguity we need to fix a correspondence $\sigma_{\ell}: [k_{\ell}] \to [k_{\ell}]$ between true and estimated cluster parameters for each $\ell$\footnote{ Note that, for finite $T$, it can be the case that $\wh{\theta}(\wh{c}_i) \not = \wh{\theta}(\wh{c}_j)$, but $c_i^0 = c_j^0$, so the estimates $\wh{\theta}(\wh{c}_i)$ do not in general induce a well-defined estimator of any fixed cluster parameter $\theta^0_{\ell a}$.}.
Let
\begin{equation} \label{equation:permutation_function}
\mapping_{\ell}(a) \equiv \argmin_{b \in [k_{\ell}]} \|\wh{\theta}_{\ell b} - \theta^0_{\ell a} \|
\end{equation}

and define the estimator of the true parameter $\theta^0_{\ell a}$ to be $\wh{\theta}_{\ell \mapping({a})}$.
Note that this is infeasible without access to the true parameters $\theta^0$. 

\begin{lemma} \label{lemma:permutation}
Under the assumptions in \ref{assumptions:consistency}, $\mathbb{P}(\sigma_{\ell} \, \text{invertible}) \to 1$ as $N, T \to \infty$
\end{lemma}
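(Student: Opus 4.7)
My plan is to derive Lemma \ref{lemma:permutation} as a consequence of a preliminary ``Hausdorff'' consistency for the estimated cluster centers, then use the separation assumption \ref{consistency:separation} to upgrade injectivity to bijectivity.

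\textbf{Step 1 (preliminary consistency).} First I would establish that, for each block $\ell$ and each true label $a \in [k_\ell]$, with probability tending to one there exists an estimated label $b(\ell,a) \in [k_\ell]$ satisfying $\|\wh\theta_{\ell b(\ell,a)} - \theta^0_{\ell a}\| < \clusterdistmin/4$, where $\clusterdistmin = \min_{a\neq b, \ell} \distab$. The natural route mimics the approach in \cite{BonhommeManresa2015}, Lemma B.1. Starting from optimality $\samplerisk(\wh\theta,\wh\gamma) \leq \samplerisk(\theta^0,\gamma^0)$ and expanding the difference, assumption \ref{consistency:dependence} kills the cross-term in $e_{it}$ uniformly, yielding a pooled inequality of the form
\[
\frac{1}{NT}\sum_{i,t} \bigl(x_{it}'(\wh\theta(\wh c_i) - \theta^0(c_i^0))\bigr)^2 = o_P(1).
\]
Partitioning the sum by $(c, c') = (\wh c_i, c_i^0)$ and invoking the eigenvalue condition \ref{consistency:eigenvalue}, for each true $c'$ at least one estimated label $c$ must satisfy $\|\wh\theta(c) - \theta^0(c')\| = o_P(1)$. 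Projecting onto the $\ell$-th block, each $(\ell,a)$ admits at least one nearby $\wh\theta_{\ell b}$, as claimed.

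\textbf{Step 2 (apply the definition of $\mapping$).} By the definition \ref{equation:permutation_function}, $\mapping_\ell(a) = \argmin_b \|\wh\theta_{\ell b} - \theta^0_{\ell a}\|$, so the minimum value is at most the value at $b(\ell,a)$ from Step 1. Hence, on an event of probability tending to one,
\[
\|\wh\theta_{\ell \mapping_\ell(a)} - \theta^0_{\ell a}\| < \clusterdistmin / 4 \quad \text{for every } a \in [k_\ell].
\]

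\textbf{Step 3 (bijectivity via separation).} Suppose for contradiction that $\mapping_\ell(a) = \mapping_\ell(a')$ for some $a \neq a'$. The triangle inequality then gives
\[
\|\theta^0_{\ell a} - \theta^0_{\ell a'}\| \leq \|\wh\theta_{\ell \mapping_\ell(a)} - \theta^0_{\ell a}\| + \|\wh\theta_{\ell \mapping_\ell(a')} - \theta^0_{\ell a'}\| < \clusterdistmin / 2,
\]
contradicting assumption \ref{consistency:separation}. So $\mapping_\ell$ is injective on the finite set $[k_\ell]$, hence a permutation, with probability tending to one. A union bound over the finitely many $\ell \in [\numgroups]$ closes the argument.

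\textbf{Main obstacle.} Steps 2 and 3 are routine; the substantive content is Step 1, the pointwise consistency of every true cluster center. The difficulty is ruling out ``collapse,'' where two distinct true clusters get absorbed into a single estimate while a third estimate drifts arbitrarily. Assumption \ref{consistency:eigenvalue} is exactly what prevents this: the uniform lower bound on $\max_c \rho(c,c',\gamma)$ ensures that for every true label $c'$ some estimated label $c$ carries enough curvature of $\samplerisk$ to force $\|\wh\theta(c) - \theta^0(c')\| \to 0$. Getting the partitioned eigenvalue argument to work uniformly over assignments $\gamma \in \Gamma$ is the delicate step; once achieved, the lemma follows.
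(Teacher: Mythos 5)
Your proposal is correct and follows essentially the same route as the paper: first establish, via the optimality of $(\wh\theta,\wh\gamma)$, the vanishing cross-term from assumption \ref{consistency:dependence}, and the curvature bound from assumption \ref{consistency:eigenvalue}, that $\min_{x}\|\wh\theta(x)-\theta^0(c)\| = o_P(1)$ for every true $c$; then conclude injectivity of $\mapping_\ell$ from the triangle inequality and the separation condition \ref{consistency:separation}. The only cosmetic difference is that you fix the explicit threshold $\clusterdistmin/4$ where the paper carries an $o_P(1)$ remainder, which changes nothing of substance.
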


By the lemma, we can relabel the estimated clusters $\wh{\theta}_{\ell \sigma(a)} \to \wh{\theta}_{\ell a}$, and this is well-defined w.h.p as $N, T \to \infty$.

\begin{thm} \label{thm:consistency}
Under the assumptions in \ref{assumptions:consistency}, for all groupings $\ell$ and $a \in [k_{\ell}]$, we have 
\[
\|\theta^0_{\ell a} - \wh{\theta}_{\ell a} \| = o_P(1) 
\]
equivalently
\[
\min_{x \in \clusterspace} \|\wh{\theta}(x) - \theta(c)\| = o_p(1) \quad \forall c \in \clusterspace
\]
as $N, T \to \infty$. 
\end{thm}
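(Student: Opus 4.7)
The plan is to mimic the classical Bonhomme–Manresa consistency argument, with an extra bookkeeping step to disentangle the per‑block permutations $\sigma_\ell$.

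First, I would exploit optimality of $(\wh{\theta}, \wh{\gamma})$ to compare $\samplerisk(\wh{\theta}, \wh{\gamma})$ with $\samplerisk(\theta^0, \gamma^0)$. Substituting $y_{it} = x_{it}'\theta^0(c_i^0) + e_{it}$, the $e_{it}^2$ terms cancel and we obtain
\begin{equation*}
\frac{1}{NT}\sum_{i,t}\bigl(x_{it}'[\wh{\theta}(\wh{c}_i) - \theta^0(c_i^0)]\bigr)^2 \;\le\; \frac{2}{NT}\sum_{i,t} e_{it}\, x_{it}'[\theta^0(c_i^0) - \wh{\theta}(\wh{c}_i)].
\end{equation*}
Writing $\Delta_i \equiv \theta^0(c_i^0) - \wh{\theta}(\wh{c}_i)$, the RHS equals $\frac{2}{N}\sum_i (\frac{1}{T}\sum_t e_{it} x_{it})'\Delta_i$. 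Applying Cauchy–Schwarz across $i$ and using compactness (Assumption \ref{consistency:compact}) to bound $\frac{1}{N}\sum_i \|\Delta_i\|^2$, the dependence assumption \ref{consistency:dependence} forces the cross term to be $o_p(1)$. Hence
\begin{equation*}
\frac{1}{NT}\sum_{i,t}\bigl(x_{it}'\Delta_i\bigr)^2 = o_p(1).
\end{equation*}

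Second, I would decompose the quadratic form above along the joint partition of units by $(\wh{c}_i, c_i^0)$. Grouping the sum,
\begin{equation*}
\frac{1}{NT}\sum_{i,t}(x_{it}'\Delta_i)^2 = \sum_{c,c' \in \clusterspace} \bigl(\theta^0(c')-\wh{\theta}(c)\bigr)' \covxmatrix(c,c',\wh{\gamma})\bigl(\theta^0(c')-\wh{\theta}(c)\bigr),
\end{equation*}
where $\covxmatrix$ is the matrix in Assumption \ref{consistency:eigenvalue}. Because every summand is non-negative, the whole sum being $o_p(1)$ implies every single term is $o_p(1)$. For each fixed true cluster vector $c'$, Assumption \ref{consistency:eigenvalue} guarantees some $c^\star(c') \in \clusterspace$ (depending on $\wh\gamma$) with $\lambda_{\min}(\covxmatrix(c^\star(c'),c',\wh\gamma)) \ge \delta - o_p(1)$. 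Applied to that $(c^\star(c'),c')$ term, this yields $\|\theta^0(c') - \wh\theta(c^\star(c'))\| = o_p(1)$, that is, every true joint parameter vector is matched by some estimated one.

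Third, I would descend to the block level. For any $(\ell,a)$, pick any $c' \in \clusterspace$ with $c'_\ell = a$; projecting the previous display to the $\ell$-th block gives $\|\theta^0_{\ell a} - \wh\theta_{\ell, c^\star(c')_\ell}\| = o_p(1)$. By the definition of $\mapping_\ell$ in \eqref{equation:permutation_function}, $\wh\theta_{\ell \mapping_\ell(a)}$ is the closest estimated parameter to $\theta^0_{\ell a}$, so it is no farther, and hence $\|\theta^0_{\ell a} - \wh\theta_{\ell \mapping_\ell(a)}\| = o_p(1)$. Relabeling via $\mapping_\ell$ (valid w.h.p. by Lemma \ref{lemma:permutation}, whose invertibility proof in turn uses the separation Assumption \ref{consistency:separation}) delivers both claims in the theorem; the equivalent "$\min_{x \in \clusterspace}$" form is then immediate by taking $x$ to be the relabeled cluster vector coordinatewise.

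The main obstacle is the third step: because $c^\star$ acts on joint clusters rather than block-wise clusters, one has to be careful that the block-wise matches $a \mapsto c^\star(\cdot)_\ell$ produced by different choices of $c' \supset a$ are compatible and that $\mapping_\ell$ is eventually a bijection. Compatibility is handled by always choosing $\mapping_\ell(a)$ (the argmin), and bijectivity is exactly the content of Lemma \ref{lemma:permutation}, which uses separation to rule out two distinct true parameters being matched to the same estimated parameter.
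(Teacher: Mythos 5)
Your proposal is correct and follows essentially the same route as the paper's proof: the basic inequality plus Cauchy--Schwarz and Assumptions \ref{consistency:compact}--\ref{consistency:dependence} to kill the cross term, the decomposition of the quadratic form over the joint partition $(\wh{c}_i, c_i^0)$ combined with Assumption \ref{consistency:eigenvalue} to extract $\min_{x}\|\wh\theta(x)-\theta^0(c')\| = o_p(1)$ for every $c'$, and then block-wise separability plus the argmin relabeling $\sigma_\ell$ (with injectivity from the separation assumption) to get the per-block statement. The only cosmetic difference is that you isolate the single $(c^\star(c'), c')$ term using nonnegativity of the summands where the paper keeps the full sum and lower-bounds it by the max term; these are the same argument.
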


See section \ref{proof:consistency} for the proof of the theorem and lemma. 

\subsection{Asymptotic Equivalence}

In this section, we establish asymptotic equivalence of $\wh{\theta}$ to the infeasible oracle estimator with known clusters. 
We need the following assumptions in addition to those already stated in \ref{assumptions:consistency}.

\begin{assumption} \label{assumptions:inference} 
Make the following assumptions
\begin{enumerate}[label={(\alph*)}, ref ={\ref*{assumptions:inference}.(\alph*)}, itemindent=.5pt, itemsep=.5pt]
    \item $\frac{1}{NT^2} \sum_{i} \sum_{t, s} \|x_{it} \|^2 \|x_{is}\|^2 = O_P(1)$ \label{assumptions:inference:xnorm}
    \item Define $M_{NT}^c = \frac{1}{NT} \sum_{i, t} \one(c_i^0 = c) x_{it}x_{it}'$.
    Then there exists $\evalboundtwo > 0$ such that for all $a > 0$, this sequence of matrices satisfies $\min_{c \in \mathcal{C}} \lambda_{min}(M_{NT}^c) \convp \evalboundtwo$ as $N, T \to \infty$. \label{assumptions:inference:evals}
    \item There exist constants $b > 0$ and $d_1 > 0$ and sequence $\alpha(t) \leq e^{-bt^{d_1}}$ such that for all $i \in [N]$ $\{x_{it}\}_t$ and $\{x_{it} e_{it}\}_t$ are strongly mixing with coefficients $\alpha(t)$. \label{assumptions:inference:mixing}
    \item There exist constants $f>0$ and $d_2>0$ such that for all $i \in [N]$ and all $z > 0$, for all components $x_{it}^j$, $x_{it}^{j'}$ of the vector $x_{it}$ we have $\prob(|x_{it}^j x_{it}^{j'}  - E(x_{it}^j x_{it}^{j'}) | > z)$ and $\prob( |e_{it} x_{it}^j - E e_{it} x_{it}^j |  > z)$ are bounded above by $e^{1 - (z/f)^{d_2}}$. \label{assumptions:inference:tails} 
    \item The uniform limits $\max_{i \in [N]} \frac{1}{T} \sum_t E[e_{it} x_{it}] \to 0$ and $\min_{i \in [N]} \frac{1}{T} \sum_t \mathbb{E}(x_{it}'(\theta(c) - \theta(c')))^2 \to \clusterdist(c, c')$ hold as $T \to \infty$, and $d(c, c') \geq \clusterdistmin > 0$ for $c \not = c'$. \label{assumptions:inference:limits}
    \item There exists $M' > 0$ such that for all $a > 0$
    \[
    \max_{i \in [N]} \prob \left (\frac{1}{T} \sum_t \|x_{it} \|^2 > M' \right ) = o(T^{-a})
    \] \label{assumptions:inference:xbound}
\end{enumerate} 
\end{assumption}

We will show that $\wh{\theta}$ is asymptotically equivalent to the infeasible oracle estimator where true cluster membership $c_i^0$ is known for all $i$.  
Define the problem
\begin{align}
\qoracle(\theta) &\equiv \frac{1}{NT} \sum_{i = 1}^N \sum_{t = 1}^T (y_{it} - x_{it}' \theta(c_i^0))^2 \nonumber \\
\thetaoracle &= \argmin_{\theta \in \Theta} \qoracle(\theta) \label{equation:qoracle}
\end{align}

The following theorem shows that $\wh{\theta}$ and $\thetaoracle$ are asymptotically equivalent.
\begin{thm} \label{thm:inference}
Let the assumptions in \ref{assumptions:consistency} and \ref{assumptions:inference} hold. 
Then for any $a > 0$ and as $N, T \to \infty$, we have
\begin{equation} \label{equation:equivalence}
\wh{\theta} = \thetaoracle + o_P(T^{-a})
\end{equation}

Moreover, individual cluster estimates satisfy

\begin{equation} \label{equation:cluster_assignment}
\prob \left (\exists i \in [N] \, s.t. \, \wh{c}_i \not = c_i^0 \right ) = o(1) + o(NT^{-a})
\end{equation}
\end{thm}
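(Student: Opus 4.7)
The plan is to attack \eqref{equation:cluster_assignment} first and then deduce \eqref{equation:equivalence} as an essentially immediate corollary. Define the event $E_{NT} = \{\wh{c}_i = c_i^0 \text{ for all } i \in [N] \}$, where labels have been aligned via the permutation $\sigma$ from Lemma \ref{lemma:permutation}. On $E_{NT}$ the sample objective $\samplerisk(\theta, \wh{\gamma})$ coincides exactly with the oracle objective $\qoracle(\theta)$, so the minimizers coincide and $\wh{\theta} = \thetaoracle$. Consequently $\prob(\wh{\theta} \neq \thetaoracle) \leq \prob(E_{NT}^c)$, and once the right-hand side is shown to decay faster than any polynomial in $T$, \eqref{equation:equivalence} follows for every $a > 0$ by choosing the exponent in \eqref{equation:cluster_assignment} large enough.

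To bound $\prob(\wh{c}_i \neq c_i^0)$, fix $i$ and a candidate wrong label $c \neq c_i^0$. The event $\{\wh{c}_i = c\}$ forces $\frac{1}{T}\sum_t (y_{it} - x_{it}'\wh{\theta}(c))^2 \leq \frac{1}{T}\sum_t (y_{it} - x_{it}'\wh{\theta}(c_i^0))^2$. Substituting $y_{it} = x_{it}'\theta^0(c_i^0) + e_{it}$ and expanding, this rearranges into
$\frac{1}{T}\sum_t (x_{it}'(\theta^0(c_i^0) - \theta^0(c)))^2 \leq 2\,\frac{1}{T}\sum_t e_{it} x_{it}'\Delta_i + R_i,$
where $\Delta_i = \wh{\theta}(c_i^0) - \wh{\theta}(c)$ and $R_i$ collects cross-terms involving the estimation errors $\wh{\theta}(c) - \theta^0(c)$ and $\wh{\theta}(c_i^0) - \theta^0(c_i^0)$. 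By the consistency result in Theorem \ref{thm:consistency}, $\Delta_i$ is close to $\theta^0(c_i^0) - \theta^0(c)$ and $R_i = o_P(1)$, while the separation part of \ref{assumptions:inference:limits} gives that the LHS converges in mean to $d(c_i^0, c) \geq d_{min} > 0$. Hence $\{\wh{c}_i = c\}$ forces a fluctuation of order at least $d_{min}/4$ in one of the sample averages $\frac{1}{T}\sum_t x_{it}x_{it}'$ or $\frac{1}{T}\sum_t e_{it}x_{it}$ away from its mean.

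The super-polynomial rate on these fluctuations is the core analytical step. Under the strong mixing assumption \ref{assumptions:inference:mixing} with $\alpha(t) \leq e^{-bt^{d_1}}$ and the exponential tail condition \ref{assumptions:inference:tails}, a Bernstein-type inequality for strongly mixing sequences (e.g.\ Merlev\`ede--Peligrad--Rio) yields, for any scalar coordinate $\xi_{it}$ of $e_{it}x_{it}$ or $x_{it}x_{it}'$,
$\prob\left ( \left | \tfrac{1}{T}\sum_t \xi_{it} - E\xi_{it} \right | > z \right ) \leq c_1 \exp(-c_2 T^{d})$
for some $d > 0$, uniformly in $i$. Combined with the mean condition $\max_i \tfrac{1}{T}\sum_t E[e_{it}x_{it}] \to 0$ and the truncation \ref{assumptions:inference:xbound}, this gives $\max_c \prob(\wh{c}_i = c \neq c_i^0) = o(T^{-a})$ uniformly in $i$ for every $a > 0$, and a union bound over $i \in [N]$ produces the advertised $o(1) + o(NT^{-a})$ bound in \eqref{equation:cluster_assignment}.

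The main obstacle is handling the randomness of $\wh{\theta}$ inside $\Delta_i$ and $R_i$, since the Bernstein bound applies to fixed sequences. I would address this by first proving the concentration uniformly over a shrinking neighborhood of $\theta^0$ in the compact parameter space $\Theta$ (using \ref{consistency:compact} together with a covering argument and assumption \ref{assumptions:inference:xnorm} to control Lipschitz constants), and then inserting $\wh{\theta}$ via Theorem \ref{thm:consistency}. A secondary subtlety is that the per-unit super-polynomial bound must dominate the $N$-union-bound factor; this is exactly what the $o(NT^{-a})$ term in \eqref{equation:cluster_assignment} records, and it is why \eqref{equation:equivalence} holds for arbitrary $a$ provided one picks the exponent in the classification bound correspondingly large, as in the Bonhomme--Manresa argument.
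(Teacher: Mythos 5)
Your treatment of the classification bound is essentially the paper's own argument: reduce $\{\wh{c}_i(\theta)\neq c_i^0\}$ to a comparison of per-unit criteria, split off a signal term bounded below by $\clusterdistmin$ via assumption \ref{assumptions:inference:limits}, control the remainder uniformly over a fixed neighborhood $\neighborhood_{\eta}$ of $\theta^0$ by Lipschitz-type bounds in $\eta$ (the paper does this directly rather than by covering, but that is immaterial), invoke the Merlev\`ede--Peligrad--Rio concentration inequality for strongly mixing sequences, and insert $\wh{\theta}$ via Theorem \ref{thm:consistency}. That yields \eqref{equation:cluster_assignment} correctly, with the $o(1)$ term coming from $\prob(\wh{\theta}\notin\neighborhood_{\eta})$ and the $o(NT^{-a})$ term from the union bound over $i$.

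The gap is in deducing \eqref{equation:equivalence} from \eqref{equation:cluster_assignment}. Your argument needs $\prob(E_{NT}^c)\to 0$, where $E_{NT}$ is the event of perfect classification, and the only bound available is $\prob(E_{NT}^c)=o(1)+o(NT^{-a})$. The term $o(NT^{-a})$ is $o(1)$ only if $N$ grows at most polynomially in $T$ (more precisely, only if $N$ times the per-unit misclassification probability vanishes). Theorem \ref{thm:inference} imposes no such restriction on the joint growth of $N$ and $T$; the polynomial-growth condition is deliberately deferred to the CLT (Theorem \ref{thm:clt}, which assumes $\sqrt{N}T^{-r}=o(1)$). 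So your route proves \eqref{equation:equivalence} only under an unstated side condition. The paper avoids this by never conditioning on perfect classification: it works with the cross-sectional \emph{fraction} of misclassified units, $\sup_{\theta\in\neighborhood_{\eta}}\frac{1}{N}\sum_i\one(\wh{c}_i(\theta)\neq c_i^0)=o_P(T^{-a})$ (Lemma \ref{lemma:cluster_errors}), which carries no factor of $N$. Combined with Cauchy--Schwarz and assumptions \ref{consistency:dependence} and \ref{assumptions:inference:xnorm}, this gives $\sup_{\theta\in\neighborhood_{\eta}}|(\wh{Q}-\qoracle)(\theta)|=o_P(T^{-a})$, hence $\qoracle(\wh{\theta})-\qoracle(\thetaoracle)=o_P(T^{-a})$ by a sandwich argument, and finally $\|\wh{\theta}-\thetaoracle\|=o_P(T^{-a})$ via a curvature lower bound on $\qoracle$ built from the first-order conditions and assumption \ref{assumptions:inference:evals}. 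Note also that even on $E_{NT}$ your identity $\wh{\theta}=\thetaoracle$ requires uniqueness of the minimizer of $\qoracle$; this is minor, but it is another reason the paper converts an objective-value gap into a parameter gap through curvature rather than asserting exact equality of minimizers. If you restrict attention to sequences with $N=O(T^r)$, your shortcut is valid and genuinely simpler, but it does not deliver the theorem as stated.
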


See appendix \ref{proof:inference} for the proof. 
Because of this theorem, for asymptotic sequences with $N$ growing at a sub-polynomial rate relative to $T$, it suffices to characterize the asymptotic distribution of the estimator $\thetaoracle$. 

\subsection{Inference}
\textbf{Notation} - To aid the exposition, we start with a few definitions. 
For $A \in \mathbb{R}^{p \times q}$, let $\vect(A) \equiv ((A^1)', \dots, (A^q)')' \in \mathbb{R}^{pq}$. 
Thinking of $\theta = \{\theta_1, \dots, \theta_{\numgroups}\}$ as a collection of matrices $\theta_{\ell} \in \mathbb{R}^{d_{\ell} \times k_{\ell}}$, we denote $\vect(\theta) = (\vect(\theta_1)', \dots, \vect(\theta_{\numgroups})')' \in \mathbb{R}^{d_{\theta}}$, where $d_{\theta} \equiv \sum_{\ell} k_{\ell} d_{\ell}$ is the total dimension of $\vect(\theta)$. 
For $1 \leq \ell \leq \numgroups$ and $a \in [k_{\ell}]$, we use the block index convention that $\vect(\theta)_{\ell a}$ refers to the $d_{\ell}$ dimensional sub-vector in the $a^{th}$ position of the $\ell^{th}$ block. 
Using the notation above, for $1 \leq \ell, s \leq \numgroups$ and $a \in [k_{\ell}]$, $b \in [k_s]$ define $\covxmatrix \in \mathbb{R}^{d_{\theta} \times d_{\theta}}$ and $\cltvec \in \mathbb{R}^{d_{\theta}}$ by  
\begin{align}
\covxmatrix_{\ell a, s b} &= \frac{1}{NT} \sum_{i=1}^N \sum_{t=1}^T x_{it \ell} x_{its}' \one(c^0_{is} = b) \one(c^0_{i\ell} = a) \label{equation:covxmatrix} \\
\cltvec_{\ell a} &= \frac{1}{NT} \sum_{i=1}^N \sum_{t=1}^T y_{it} \one(c^0_{i \ell} = a) x_{it\ell} \\
w_{\ell a} &= \frac{1}{NT} \sum_{i=1}^N \sum_{t=1}^T e_{it} \one(c^0_{i \ell} = a) x_{it\ell}
\end{align}

\begin{prop}
The solution $\thetaoracle$ to problem \ref{equation:qoracle} satisfies
\begin{equation}
\covxmatrix \vect(\thetaoracle) = \cltvec
\end{equation}
\end{prop}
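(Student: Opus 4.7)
The proposition is essentially a first-order-condition calculation, and the plan is to derive the normal equations for the oracle least-squares problem and verify that they coincide, block-by-block, with the definitions of $\covxmatrix$ and $\cltvec$.

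First I would write the oracle objective in expanded block form. Because $\theta(c_i^0) = (\theta_1(c_{i1}^0),\dots,\theta_{\numgroups}(c_{i\numgroups}^0))$, we have
\[
x_{it}'\theta(c_i^0) \;=\; \sum_{s=1}^{\numgroups} (x_{it}^{s})' \theta_s(c_{is}^0) \;=\; \sum_{s=1}^{\numgroups}\sum_{b=1}^{k_s} (x_{it}^{s})' \theta_{sb}\,\one(c_{is}^0 = b),
\]
so that $\qoracle(\theta)$ is a quadratic function of $\vect(\theta) \in \mathbb{R}^{d_\theta}$. Convexity is immediate since $\qoracle$ is a sum of squares of affine functions of $\vect(\theta)$, hence any stationary point is a global minimizer, and it suffices to verify the stationarity conditions.

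Next I would differentiate $\qoracle$ with respect to the block $\theta_{\ell a} \in \mathbb{R}^{d_\ell}$ and set the gradient to zero. Using the chain rule, the partial derivative with respect to $\theta_{\ell a}$ is
\[
\frac{\partial \qoracle}{\partial \theta_{\ell a}} \;=\; -\frac{2}{NT}\sum_{i=1}^N\sum_{t=1}^T x_{it}^{\ell}\,\one(c_{i\ell}^0 = a)\Bigl(y_{it} - \sum_{s=1}^{\numgroups}\sum_{b=1}^{k_s}(x_{it}^{s})'\theta_{sb}\,\one(c_{is}^0 = b)\Bigr).
\]
Setting this to zero and rearranging yields, for every $\ell \in [\numgroups]$ and $a \in [k_\ell]$,
\[
\frac{1}{NT}\sum_{i,t} y_{it}\,\one(c_{i\ell}^0=a)\, x_{it}^{\ell} \;=\; \sum_{s=1}^{\numgroups}\sum_{b=1}^{k_s}\Bigl(\frac{1}{NT}\sum_{i,t} x_{it}^{\ell}(x_{it}^{s})'\,\one(c_{i\ell}^0=a)\one(c_{is}^0=b)\Bigr)\theta_{sb}.
\]

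Finally I would match this to the matrix equation. The left-hand side is precisely $\cltvec_{\ell a}$ by definition, and the quantity in parentheses on the right is precisely $\covxmatrix_{\ell a, sb}$. Stacking these identities over all block-cluster indices $(\ell,a)$ in the vectorization order used to define $\vect(\theta)$ yields $\covxmatrix\, \vect(\thetaoracle) = \cltvec$. The only real care needed is bookkeeping: making sure the partial derivative with respect to the $d_\ell$-dimensional sub-vector $\theta_{\ell a}$ lines up with the row block of $\covxmatrix$ indexed by $(\ell,a)$, and that the summation over $(s,b)$ corresponds to multiplication by the column blocks of $\covxmatrix$ against $\vect(\theta)$. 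No anticipated obstacle beyond this indexing; no probabilistic or limiting arguments are required, and invertibility of $\covxmatrix$ is not needed for this statement (that would be used only to uniquely solve for $\thetaoracle$, and is handled by Assumption~\ref{assumptions:inference:evals}).
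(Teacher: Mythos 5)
Your proposal is correct and follows exactly the route the paper intends: the paper's proof is the one-line remark that the result ``follows by taking the first order conditions of the oracle problem and rearranging,'' and your block-by-block differentiation, matching of the resulting normal equations to the definitions of $\covxmatrix_{\ell a, sb}$ and $\cltvec_{\ell a}$, and the observation that convexity makes stationarity sufficient are precisely the details being elided. Nothing further is needed.
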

The proof follows by taking the first order conditions of \ref{equation:qoracle} and rearranging.
Note that the first order conditions $\nabla_{\theta_{\ell a}} \qoracle(\thetaoracle) = 0$ can potentially vary with all the other parameters $\theta_{sb}$ in the model (for $s \not = \ell$). 
Therefore, in contrast to the $\numgroups = 1$ case considered in the existing literature, the estimator $\thetaoracle$ is \emph{not} equivalent to simply running $k$ separate regressions over the partition of the cross-sectional units.  \\ 

Consider the following assumptions that allow us to characterize the asymptotic distribution of the infeasible $\thetaoracle$. 

\begin{assumption} \label{assumptions:clt} 
We make the following assumptions
\begin{enumerate}[label={(\alph*)}, ref ={\ref*{assumptions:clt}.(\alph*)}, itemindent=.5pt, itemsep=.5pt]
    \item There is a matrix $\Omega \succ 0$ such that for all $\ell, s \in [\numgroups]$ and $1 \leq a \leq k_{\ell}$, $1 \leq b \leq k_s$
    \begin{equation}
    \frac{1}{NT} \sum_{i, j = 1}^N \sum_{t, t' = 1}^T E[e_{it} e_{jt'} \one(c_{i \ell}^0 = a) \one(c_{js}^0 = b) x_{it\ell} x_{jt's}'] \to \Omega_{\ell a, s b} \label{equation:covariance}
    \end{equation}
    as $N, T \to \infty$. \label{assumptions:clt:omega} 
    \item $\mathbb{E}[e_{it} \one(c_{i \ell}^0 = a) x_{it \ell}] = 0$ for all $\ell, a$. \label{assumptions:clt:exogeneity}
    \item $\covxmatrix \convp M$ as $N, T \to \infty$, with $M \succ 0$. \label{assumptions:clt:prob}
    \item $\sqrt{NT} w \convd \mathcal{N}(0, \Omega)$ as $N, T \to \infty$. \label{assumptions:clt:clt}
\end{enumerate} 
\end{assumption}

\begin{thm} \label{thm:clt}
Suppose that the assumptions in \ref{assumptions:clt} are satisfied. 
Also suppose there is some $r > 0$ such that $\sqrt{N}T^{-r} = o(1) $ as $N, T \to \infty$. 
Then we have   

\begin{equation}
\sqrt{NT}(\vect(\wh{\theta} - \theta^0)) \convd \mathcal{N}(0, M \inv \Omega M)
\end{equation}
\end{thm}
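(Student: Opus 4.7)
The plan is to leverage Theorem \ref{thm:inference} to reduce the problem to the infeasible oracle estimator $\thetaoracle$, derive a closed form for $\thetaoracle - \theta^0$ from the first-order conditions, and then combine Assumptions \ref{assumptions:clt:prob} and \ref{assumptions:clt:clt} with Slutsky's lemma.

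First, I would show that the asymptotic equivalence in Theorem \ref{thm:inference} survives $\sqrt{NT}$-rescaling. Since $\wh{\theta} = \thetaoracle + o_P(T^{-a})$ for any $a > 0$, and the rate hypothesis $\sqrt{N} T^{-r} = o(1)$ gives $\sqrt{NT} = o(T^{r+1/2})$, choosing $a > r + 1/2$ yields $\sqrt{NT}(\wh{\theta} - \thetaoracle) = o_P(1)$. Thus it suffices to establish the CLT for $\vect(\thetaoracle - \theta^0)$.

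Second, I would use the proposition preceding the theorem: $\covxmatrix \vect(\thetaoracle) = \cltvec$. Substituting $y_{it} = x_{it}' \theta^0(c_i^0) + e_{it}$ into the definition of $\cltvec$ and expanding $x_{it}'\theta^0(c_i^0) = \sum_{s,b} x_{its}' \theta^0_{sb} \one(c_{is}^0 = b)$, the quadratic term reassembles into $\covxmatrix \vect(\theta^0)$ block by block, because
\begin{equation*}
\frac{1}{NT}\sum_{i,t} x_{it\ell} x_{its}' \theta^0_{sb} \one(c_{is}^0 = b) \one(c_{i\ell}^0 = a) = \covxmatrix_{\ell a, sb}\, \theta^0_{sb}.
\end{equation*}
Therefore $\cltvec = \covxmatrix \vect(\theta^0) + w$, and rearranging gives $\covxmatrix\, \vect(\thetaoracle - \theta^0) = w$. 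This step is where the block structure matters: the off-diagonal blocks of $\covxmatrix$ couple the parameters across groupings, reflecting the remark that the oracle estimator is not equivalent to $k$ separate pooled regressions.

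Third, I would invert and apply standard limit theorems. By Assumption \ref{assumptions:clt:prob} and the continuous mapping theorem, $\covxmatrix^{-1} \convp M^{-1}$ (using $M \succ 0$). By Assumption \ref{assumptions:clt:clt}, $\sqrt{NT}\, w \convd \mathcal{N}(0, \Omega)$. Slutsky's lemma then delivers
\begin{equation*}
\sqrt{NT}\,\vect(\thetaoracle - \theta^0) = \covxmatrix^{-1} \sqrt{NT}\, w \convd \mathcal{N}(0, M^{-1} \Omega M^{-1}),
\end{equation*}
and combining with the first step yields the stated asymptotic distribution for $\sqrt{NT}\,\vect(\wh{\theta} - \theta^0)$.

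Given that Theorem \ref{thm:inference} has already done the heavy lifting of classification consistency at a super-polynomial rate, none of the steps here is especially hard. The only care-demanding point is the algebra in step two, verifying that the cross-block indicator products reassemble precisely into $\covxmatrix$ so that the residual term is exactly $w$; Assumption \ref{assumptions:clt:exogeneity} is implicit in the CLT hypothesis for $w$ via the centering at zero.
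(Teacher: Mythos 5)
Your proposal is correct and follows essentially the same route the paper intends: reduce to the oracle via Theorem \ref{thm:inference} using the rate condition $\sqrt{N}T^{-r}=o(1)$ to kill the $o_P(T^{-a})$ discrepancy after $\sqrt{NT}$-scaling, rewrite the normal equations as $\covxmatrix\,\vect(\thetaoracle-\theta^0)=w$, and conclude by Slutsky with Assumptions \ref{assumptions:clt:prob} and \ref{assumptions:clt:clt}. Note that your limiting variance $M^{-1}\Omega M^{-1}$ is the correct sandwich form (consistent with the estimator $\covxmatrix\inv\wh{\Omega}\covxmatrix\inv$ in equation \ref{equation:variance_estimator}); the expression $M^{-1}\Omega M$ in the theorem statement appears to be a typographical slip.
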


The proof of this theorem is given in appendix \ref{proof:inference}. \\

Consider the case where cross-sectional units are independent, then under assumption \ref{assumptions:clt:exogeneity}, the terms in \ref{equation:covariance} with $i \not = j$ vanish. 
In this case, we propose the HAC estimators
\begin{align}
\wh{\Omega}_{\ell a, s b} &= \frac{1}{NT} \sum_{i = 1}^N \sum_{t, t' = 1}^T \wh{e}_{it} \wh{e}_{it'} \one(\wh{c}_{i \ell} = a) \one(\wh{c}_{is} = b) x_{it\ell} x_{it's}' \nonumber \\
\estvariance &=  \covxmatrix \inv \wh{\Omega} \covxmatrix \inv \label{equation:variance_estimator} 
\end{align}

where $\covxmatrix$ is as in equation \ref{equation:covxmatrix}.  
Variance estimators of this form were originally proposed in \cite{Arellano1987}, and their asymptotic theory for $N, T \to \infty$ jointly was first analyzed in \cite{Hansen2007Robust}. 
For further discussion on adapting the results of \cite{Hansen2007Robust} to our setting, see appendix \ref{proof:variance_estimation}. 

\subsection{Model Selection}

In this section we let $k^0 = (k_1^0, \dots, k_{\numgroups}^0)$ denote the true number of clusters and develop a Cp-like criterion to estimate $k^0$. 
We suppose that prior information can be used to bound the true number of clusters from above $k^0 \leq k_{max}$. 
So far, we have defined the sample risk $\samplerisk(\theta, \gamma)$ with domain the true parameter space i.e. $\theta \in \prod_{\ell} \mathbb{R}^{d_{\ell} \times k^0_{\ell}}$ and $\gamma: [N] \to \prod_{\ell} [k^0_{\ell}]$. 
However, note that $\samplerisk = \frac{1}{NT} \sum_{i, t} (y_{it} - x_{it}'\theta(\gamma(i)))^2$ only varies through $\theta(\gamma(i)) \in \mathbb{R}^p$. 
Thus, we can extend the domain of $\samplerisk$ to models with $k \not = k^0$, since $\theta(\gamma(i)) \in \mathbb{R}^p$ for any conformable $(\theta, \gamma)$.\footnote{Formally, let $\samplerisk: \bigcup_{k \geq 0} \left \{\prod_{\ell} \mathbb{R}^{d_{\ell} \times k_{\ell}} \times [\, [N] \to \prod_{\ell} [k^0_{\ell}]\, ]\right \} \to \mathbb{R}_{\geq 0}$ with $\samplerisk(\theta, \gamma) = \frac{1}{NT} \sum_{i, t} (y_{it} - x_{it}'\theta(\gamma(i)))^2$}\\

We slightly strengthen some assumptions above
\begin{assumption} \label{assumptions:mo} 
Impose the following assumptions
\begin{enumerate}[label={(\alph*)}, ref ={\ref*{assumptions:mo}.(\alph*)}, itemindent=.5pt, itemsep=.5pt]
    \item For all $c \in \clusterspace^{k_0}$, $\frac{1}{NT} \sum_{i, t} e_{it} x_{it} \one(\ctrue = c) = O_p(1/\sqrt{NT})$ \label{assumptions:mo:ci_uncorrelated}
    \item With $\rho(c, c', \gamma)$ defined as in assumption \ref{assumptions:inference:evals}, there exists $\delta > 0$ such that for all $k \geq k^0$ we have 
    \[
     \rho^k_{NT} \equiv \min_{c' \in \clusterspace^{k_0}} \min_{\gamma \in \Gamma^k} \max_{c \in \clusterspace^k} \rho(c, c', \gamma) \geq \delta - o_p(1)   
    \] \label{assumptions:mo:evals} 
    \item As $N, T \to \infty$ \label{assumption:mo:lambdamin} 
    \[
    \inf_{j \in [N]} \lambda_{min} \left (\frac{1}{T} \sum_t E[x_{jt} x_{jt}'] \right ) \to \eiglowerbound > 0
    \]
    \item For some $0 < \epsilon < \frac{1}{2} \wedge \frac{d_1 d_2}{d_1 + d_2}$ we have $\log N = o(T^{\epsilon})$ as $N, T \to \infty$ \label{assumptions:mo:epsilon}, where $d_1, d_2$ are the mixing and tail parameters defined in assumptions \ref{assumptions:inference:mixing} and \ref{assumptions:inference:tails}  
\end{enumerate} 
\end{assumption}

We can think of assumption \ref{assumptions:mo:ci_uncorrelated} as stating that a CLT holds for $e_{it} x_{it} \one(c_i^0 = c)$ for each $c \in \clusterspace^{k_0}$. 
This will be easiest to satisfy when $E[e_{it} x_{it} \one(c_i^0 = c)] = 0$, a stronger form of unconfoundedness. 
Assumption \ref{assumptions:mo:evals} is the extension of assumption \ref{consistency:eigenvalue} in our consistency proof to the case of models with misspecified number of clusters.
See section \ref{appendix:eigenvalue_discussion} of the supplementary appendix for a discussion of this condition. 
In the stationary case with identically distributed cross-sectional units, having $E[x_{it} x_{it}']$ of full rank is sufficient for assumption \ref{assumption:mo:lambdamin}. 
Finally, assumption \ref{assumptions:mo:epsilon} requires that $\log N$ is sub-polynomial in $T$ as $N, T \to \infty$. \\

\textbf{Information Criterion} - Let $(\wh{\theta}^k, \wh{\gamma}^k)$ be the minimizer of $\samplerisk$ with $(k_i)_i$ clusters and denote $\samplerisk(k) = \samplerisk(\wh{\theta}^k, \wh{\gamma}^k)$.  
Then we define the $\cp$ criterion
\begin{align} \label{equation:cp}
\cp(k) &\equiv \samplerisk(k) + f(N, T) \sum_i k_i \\
\wh{k} &= \argmin_{k \leq k_{\text{max}}} \cp(k) \nonumber 
\end{align}

We have the following result on consistency of model selection

\begin{thm} \label{thm:cp}
Suppose the assumptions in \ref{assumptions:mo} hold. 
Let $f(N, T)$ be such that $f(N, T) \to 0$ and for some $\epsilon$ as in assumption \ref{assumptions:mo:epsilon}, $f(N, T) T^{1-3\epsilon} \to \infty$ as $N, T \to \infty$.
Then 
\[
\prob(\wh{k} = k^0) \to 1
\]
as $N, T \to \infty$
\end{thm}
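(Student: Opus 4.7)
The strategy is to establish $\prob(\cp(k) < \cp(k^0)) \to 0$ for each fixed $k \leq k_{max}$ with $k \neq k^0$ and then finite-union-bound over candidate models. Write
\[
\cp(k) - \cp(k^0) = [\samplerisk(k) - \samplerisk(k^0)] + f(N,T) \sum_{\ell=1}^{\numgroups} (k_\ell - k_\ell^0),
\]
and split into an under-specified case (some $k_\ell < k_\ell^0$) and an over-specified case ($k \geq k^0$ componentwise with at least one strict inequality); mixed cases follow by combining the two arguments block-by-block.

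\textbf{Under-specified block.} If $k_\ell < k_\ell^0$ for some $\ell$, then for any feasible $(\theta, \gamma)$ in the $k$-cluster model, pigeonhole forces two distinct true clusters $a \neq b \in [k_\ell^0]$ to be merged onto a single estimated cluster in block $\ell$ for a non-vanishing fraction of cross-sectional units. Expanding
\[
\samplerisk(\theta, \gamma) - \qoracle(\theta^0) = \frac{1}{NT} \sum_{i,t} [x_{it}'(\theta(\gamma(i)) - \theta^0(\ctrue))]^2 - \frac{2}{NT} \sum_{i,t} e_{it} x_{it}'(\theta(\gamma(i)) - \theta^0(\ctrue)),
\]
the quadratic term is bounded below by a constant $c_0 > 0$ via the separation condition \ref{consistency:separation} (using $\|\theta^0_{\ell a} - \theta^0_{\ell b}\|^2 \geq \clusterdistmin^2$) combined with the eigenvalue bound \ref{assumptions:mo:evals}, uniformly over $(\theta,\gamma)$ with $k_\ell < k_\ell^0$. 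The cross-term is $o_P(1)$ uniformly in $\gamma$ via assumption \ref{assumptions:mo:ci_uncorrelated}. Since $\samplerisk(k^0) - \qoracle(\theta^0) = o_P(1)$ by consistency (Theorem \ref{thm:consistency}), it follows that $\samplerisk(k) - \samplerisk(k^0) \geq c_0 - o_P(1)$. The penalty correction $f(N,T) \sum_\ell (k_\ell - k_\ell^0) = o(1)$ is dominated, so $\cp(k) > \cp(k^0)$ w.h.p.

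\textbf{Over-specified block.} Here $\samplerisk(k) \leq \samplerisk(k^0)$ by embedding $\gamma^0$ into $\Gamma^k$, so the argument reverses: we must show $\samplerisk(k^0) - \samplerisk(k) = o_P(f(N,T))$. Theorem \ref{thm:inference} gives $\wh c_i = \ctrue$ for all $i$ with probability $1 - o(1) - o(NT^{-a})$ for any $a > 0$, whence $\samplerisk(k^0) = \qoracle(\thetaoracle) + o_P(T^{-a})$. Using the same decomposition as above, the quadratic piece is non-negative, so the gain $\qoracle(\theta^0) - \samplerisk(k)$ is upper bounded by
\[
\sup_{\theta \in \Theta,\, \gamma \in \Gamma^k} \left| \frac{2}{NT} \sum_{i,t} e_{it} x_{it}' (\theta(\gamma(i)) - \theta^0(\ctrue)) \right|.
\]
A Bernstein-type tail bound built from the strong-mixing assumption \ref{assumptions:inference:mixing} and sub-exponential tails \ref{assumptions:inference:tails}, combined with a union bound over $\Gamma^k$ (of size $|\clusterspace|^N$) using $\log N = o(T^\epsilon)$ from \ref{assumptions:mo:epsilon} and the parameter-space compactness from \ref{consistency:compact}, controls this supremum at rate $O_P(T^{3\epsilon - 1})$ up to log factors. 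The hypothesis $f(N,T) T^{1-3\epsilon} \to \infty$ then makes the penalty strictly dominant, yielding $\cp(k) > \cp(k^0)$ w.h.p.

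\textbf{Main obstacle.} The over-specification step is the delicate one. The assignment set $\Gamma^k$ has cardinality $|\clusterspace|^N$, so the combinatorial minimization interacts non-trivially with the mixing and tail exponents $d_1, d_2$, which enter through tails of the form $\exp(-c\, T^{d_1 d_2/(d_1+d_2)})$. Getting the correct $T^{3\epsilon - 1}$ uniform rate — and therefore the specific scaling condition on $f(N,T)$ — requires careful bookkeeping of the uniform-in-$\gamma$ concentration together with the quadratic curvature used to reduce the $\theta$ supremum to a finite-dimensional one. Tightening this rate further appears tied to the open question (flagged by the paper's section on overspecification) of whether overspecified clusterwise regression admits a genuine $1/T$ fast rate, and is exactly why the theorem's penalty condition cannot simply read $f(N,T) T \to \infty$ as in classical parametric $\cp$ theory.
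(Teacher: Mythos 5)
Your under-specified case matches the paper's argument (pigeonhole on block $\ell$, cluster separation \ref{consistency:separation}, the eigenvalue condition \ref{assumptions:mo:evals}, and an $o_P(1)$ cross term), and your overall reduction to per-model comparisons plus a finite union bound is exactly the paper's structure. The problem is the over-specified case, and the gap there is not bookkeeping — it is the central difficulty the paper's Section \ref{section:overspecification} exists to solve.

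Your claim that
\[
\sup_{\theta \in \Theta,\; \gamma \in \Gamma^k} \left| \frac{2}{NT} \sum_{i,t} e_{it} x_{it}' \bigl(\theta(\gamma(i)) - \theta^0(c_i^0)\bigr) \right| = O_P(T^{3\epsilon - 1})
\]
is false. The expression is \emph{linear} in $\Delta\theta_i = \theta(\gamma(i)) - \theta^0(c_i^0)$, so taking the supremum over the full compact $\Theta$ (where $\|\Delta\theta_i\|$ is only bounded, not small) reduces it to roughly $\frac{1}{N}\sum_i \|\frac{1}{T}\sum_t e_{it}x_{it}\|$, which under the mixing and tail assumptions is of order $T^{-1/2+\epsilon}$ — nowhere near $T^{-1+3\epsilon}$. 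Moreover, the union bound over $\Gamma^k$ you invoke cannot rescue this: $|\clusterspace^k|^N$ events require per-event tails of order $e^{-N\log|\clusterspace^k|}$, and since the paper only assumes $\log N = o(T^{\epsilon})$, $N$ may be nearly exponential in $T$, so the Fuk–Nagaev tails $e^{-c(Tz)^d}$ do not beat the cardinality at any useful $z$. With only the $T^{-1/2+\epsilon}$ bound you would need $f(N,T)T^{1/2-\epsilon} \to \infty$, a strictly stronger penalty condition than the theorem asserts. The paper gets $o_p(T^{-1+3\epsilon})$ by an entirely different mechanism: (i) Cauchy–Schwarz bounds the cross term by $(\frac{1}{N}\sum_i\|\Delta\wh\theta_i^k\|^2)^{1/2}\cdot\sup_{i\in[N]}\|\frac{1}{T}\sum_t e_{it}x_{it}\|$, so the only union bound needed is over the $N$ units (handled by Lemma \ref{lemma:mo:deviations}), never over $\Gamma^k$; and (ii) a self-improving recursion (Lemma \ref{lemma:mo:recursion}) feeds the current rate of $\frac{1}{N}\sum_i\|\Delta\wh\theta_i^k\|^2$ back into the cross term and iterates, boosting $O_p(1)$ up to $o_p(T^{-1+3\epsilon})$ (Corollary \ref{cor:mo:rates}). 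Without this bootstrap — or some substitute that exploits the smallness of $\wh\theta^k$ at the optimum rather than a worst-case supremum over $\Theta\times\Gamma^k$ — your argument does not deliver the stated penalty scaling.
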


For the proof, see appendix \ref{proof:cp}. 

\begin{remark}[Choice of $f$]
While any function $f(N, T)$ satisfying the conditions of the theorem will give asymptotically consistent model selection, the choice of $f$ will significantly affects finite sample performance. 
To put $f(N, T)$ on the same scale as $\samplerisk(k)$, we use $f(N, T) = \wh{\sigma}^2 g(N, T)$ in our simulations, where $\wh{\sigma}^2$ is a consistent estimate of the long run variance $\lim_{N, T} \frac{1}{NT} \sum_{i, t} E[e_{it}^2]$. 
By lemma \ref{lemma:cp:risk_order} in the appendix, $\wh{\sigma}^2 \equiv \samplerisk(k_{max})$ is such a consistent estimator.  
We find good performance with $g(N, T) = \frac{\log T}{T}$ in our simulations. 
Alternatively, e.g. $g(N, T) = \frac{\log T}{T^{1-\epsilon'}}$ for small $\epsilon'$ can be used to be technically consistent with the theory. 
\end{remark}

\section{Overspecification of $k$} \label{section:overspecification}
In this section, we report new results on the performance of k-means style estimators with an over-specified number of clusters. 
The work in this section builds on and sharpens the results in \cite{LiuOverspecifiedGroups} for the case of linear regression. 
Our proof of model selection consistency in Theorem \ref{thm:cp} heavily relies on the following result. 

\begin{thm} \label{thm:overspecified}
Suppose the assumptions in \ref{assumptions:consistency} hold. 
Then 
\begin{align}
\sup_{i \in [N]} \|\thetak(\wh{c}^k_i) - \theta^0(c_i^0)\|^2 &= o_p(T^{-1 + 4 \epsilon}) \\
\min_{x \in \clusterspace^k} \|\thetak(x) - \theta^0(c)\|^2 &= o_p(T^{-1 + 3 \epsilon}) \\
\frac{1}{N} \sum_i (\thetak(\wh{c}_i^k) - \theta^0(c_i^0))^2 &= o_p(T^{-1 + 3 \epsilon}) \label{equation:overspecified_thm:avg_loss}
\end{align}
\end{thm}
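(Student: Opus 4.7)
The plan is to follow the two-step template used for $k$-means style estimators (BM, Liu)---first obtain an average-loss rate via the global optimality of $(\thetak, \gammak)$, then upgrade to a supremum rate via individual-unit optimality---while carefully bookkeeping the polynomial-in-$T$ cost of the union bounds that the overspecified setting forces on us. Since $k \geq k^0$, the true parameters $(\theta^0, \gamma^0)$ embed as a feasible $k$-cluster point (duplicate labels), so $\samplerisk(\thetak, \gammak) \leq \samplerisk(\theta^0, \gamma^0)$. Writing $\Delta_i = \thetak(\wh{c}_i^k) - \truethetaci$ and $y_{it} = x_{it}'\truethetaci + e_{it}$, this yields the basic inequality
\begin{equation*}
\frac{1}{NT}\sum_{i,t}(x_{it}'\Delta_i)^2 \;\leq\; \frac{2}{NT}\sum_{i,t} e_{it}\, x_{it}'\Delta_i.
\end{equation*}
Grouping units by their joint label $(\wh{c}_i^k, c_i^0)$, the left side becomes $\sum_{c,c'}(\thetak(c) - \theta^0(c'))' M(c,c',\gammak)(\thetak(c) - \theta^0(c'))$; the eigenvalue hypothesis (assumption \ref{consistency:eigenvalue}, extended in \ref{assumptions:mo:evals}) lower-bounds this by a positive multiple of $\tfrac{1}{N}\sum_i\|\Delta_i\|^2$.

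The right side is handled by a uniform concentration bound. Decomposing it group-by-group introduces the quantities $w(c,c',\gamma) = \frac{1}{NT}\sum_{i,t}\one(\gamma(i)=c)\one(c_i^0=c')\, e_{it} x_{it}$; the Bernstein-type inequality for exponentially-tailed strongly-mixing sequences (assumptions \ref{assumptions:inference:mixing}, \ref{assumptions:inference:tails}) gives $\|w(c,c',\gamma)\| = O_p(T^{-1/2})$ pointwise in $(c,c',\gamma)$, and a union bound over the at most $k^N$ feasible $\gamma$, combined with $\log N = o(T^\epsilon)$ from \ref{assumptions:mo:epsilon}, produces $\sup_{c,c',\gamma \in \Gamma^k}\|w(c,c',\gamma)\| = O_p(T^{-1/2+\epsilon})$. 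Cauchy--Schwarz in $\Delta$-space then delivers the average loss rate \eqref{equation:overspecified_thm:avg_loss}, $\tfrac{1}{N}\sum_i \|\Delta_i\|^2 = o_p(T^{-1+3\epsilon})$. The middle statement of the theorem follows immediately: for each $c \in \clusterspace^{k_0}$ with non-vanishing limit frequency, picking any unit $i$ with $c_i^0 = c$ and setting $x = \wh{c}_i^k$ yields $\|\thetak(x) - \theta^0(c)\|^2 = o_p(T^{-1+3\epsilon})$.

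Finally I would upgrade to the supremum rate via unit-wise optimality: each $\wh{c}_i^k$ minimizes $\frac{1}{T}\sum_t(y_{it}-x_{it}'\thetak(c))^2$ over $c \in \clusterspace^k$. Comparing its value to the value at the ``close'' cluster $x^*(c_i^0)$ obtained from the middle statement and expanding produces a quadratic inequality in $\|\Delta_i\|$. The uniform-in-$i$ lower curvature bound $\lambda_{\min}(\tfrac{1}{T}\sum_t x_{it}x_{it}') \geq \eiglowerbound - o_p(1)$ (assumption \ref{assumption:mo:lambdamin}, together with the tail/mixing controls \ref{assumptions:inference:tails} and \ref{assumptions:inference:xbound}) converts the curvature on the left side into $\|\Delta_i\|^2$, and the same Bernstein-plus-union-bound argument (now over $i \in [N]$ rather than $\gamma$) yields $\sup_i \|\tfrac{1}{T}\sum_t e_{it} x_{it}\| = O_p(T^{-1/2+\epsilon})$. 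Solving the resulting quadratic gives $\sup_i \|\Delta_i\|^2 = o_p(T^{-1+4\epsilon})$, the extra factor $T^{\epsilon}$ over the average rate being precisely the cost of the $i$-wise union bound.

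The main obstacle is the uniform-in-$\gamma$ concentration: $|\Gamma^k|$ grows like $k^N$, so the combinatorial cost of the union bound must be absorbed by the exponential tails and mixing decay---which is exactly why $\log N = o(T^\epsilon)$ is imposed in \ref{assumptions:mo:epsilon}. All three of the rate exponents ($3\epsilon$ and $4\epsilon$) are essentially the bookkeeping cost of these union bounds; the slack between them reflects the additional supremum over $i$ needed for the unit-wise statement.
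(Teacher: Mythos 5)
Your overall architecture (global optimality $\Rightarrow$ average loss; read off the min-over-$x$ loss; unit-wise optimality $\Rightarrow$ sup loss) matches the paper's, and your treatment of the last step is essentially Proposition \ref{prop:worst_case_rate}. However, your derivation of the average-loss rate \eqref{equation:overspecified_thm:avg_loss} has two genuine gaps. First, the curvature step: grouping by $(\wh{c}_i^k, c_i^0)$ and invoking assumption \ref{assumptions:mo:evals} does \emph{not} lower-bound $\frac{1}{NT}\sum_{i,t}(x_{it}'\Delta_i)^2$ by a positive multiple of $\frac{1}{N}\sum_i\|\Delta_i\|^2$. That assumption has the structure $\inf_{c',\gamma}\max_{c}\rho(c,c',\gamma)\ge\delta$, so only one estimated cluster per true cluster is guaranteed a well-conditioned design matrix; units assigned to the remaining (possibly tiny or collinear) clusters receive no curvature. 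The chain of inequalities it supports ends at $\max_{c'}\min_{x}\|\thetak(x)-\theta^0(c')\|^2$ — which is why the paper uses it for Proposition \ref{prop:mo:maxmin_rate} — not at the average loss. To control $\frac{1}{N}\sum_i\|\Delta_i\|^2$ the paper instead writes the quadratic form unit-by-unit and uses the per-unit population condition \ref{assumption:mo:lambdamin}, $\inf_j\lambda_{min}(\frac{1}{T}\sum_tE[x_{jt}x_{jt}'])\ge\eiglowerbound$, plus a uniform-in-$i$ bound on $\|\frac{1}{T}\sum_t(x_{it}x_{it}'-E[x_{it}x_{it}'])\|$ (Lemma \ref{lemma:mo:deviations}).

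Second, the concentration step: you propose a union bound over the $|\Gamma^k|\asymp(\prod_\ell k_\ell)^N$ feasible assignments to control $\sup_{c,c',\gamma}\|w(c,c',\gamma)\|$, absorbed by $\log N=o(T^{\epsilon})$. But $\log|\Gamma^k|\asymp N$, not $\log N$, and the Fuk--Nagaev tail available from \ref{assumptions:inference:mixing}--\ref{assumptions:inference:tails} is per unit-time-series; at deviation $z=T^{-1/2+\epsilon}$ its subexponential part decays only like $e^{-cT^{d\epsilon}}$, which cannot absorb a cardinality of order $e^{cN}$ under the stated growth condition (and no cross-sectional independence is assumed that would give $e^{-cNT^{2\epsilon}}$ concentration of the pooled sum). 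The paper avoids the $\gamma$-union-bound entirely: by Cauchy--Schwarz at the unit level,
\begin{equation*}
\Bigl|\tfrac{1}{NT}\textstyle\sum_{i,t}e_{it}x_{it}'\Delta_i\Bigr|\le\Bigl(\tfrac{1}{N}\textstyle\sum_i\|\Delta_i\|^2\Bigr)^{1/2}\sup_{i\in[N]}\Bigl\|\tfrac{1}{T}\textstyle\sum_te_{it}x_{it}\Bigr\|,
\end{equation*}
the stochastic factor is assignment-free and only an $N$-fold union bound is needed — that is exactly what $\log N=o(T^{\epsilon})$ pays for. (The resulting self-bounding inequality $A\le A^{1/2}\cdot o_p(T^{-1/2+\epsilon})$ is then iterated in Lemma \ref{lemma:mo:recursion} and Corollary \ref{cor:mo:rates} to reach $o_p(T^{-1+3\epsilon})$; your instinct to solve the quadratic in one shot would be fine once the two ingredients above are repaired.)
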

\begin{proof}
Follows from Proposition \ref{prop:mo:maxmin_rate}, Proposition \ref{prop:worst_case_rate}, and Corollary \ref{cor:mo:rates} in the appendix. 
\end{proof}

\begin{remark}
The preceding result can be compared with \cite{LiuOverspecifiedGroups} Theorem 1 and Lemma 5.16, which give $o_p(1)$, $o_p(1)$, and $o_p(T^{\frac{-1}{2(1 + d)}})$ rates (respectively) for each of the losses above, with $d = \frac{d_1 d_2}{d_1 + d_2}$. 
Note that their setting is a more general model of clustered M-estimation. 
Our rate improvements come from (1) optimizing the Fuk-Nagaev inequality in \cite{Rio2011} for our purposes (see lemma \ref{lemma:mo:deviations}) and (2) an inductive strategy that allows us to ``boost'' $O_p(T^{-1/4})$ rates arbitrarily close to $O_p(T^{-1/2})$. 
For a description of this approach, see lemma \ref{lemma:mo:recursion} as well as the propositions and corollary referenced above. 
\end{remark}

\begin{remark}
The rate established in equation \ref{equation:overspecified_thm:avg_loss} above is used to bound the magnitude of over-fitting for estimators with $k > k^0$, as in the second part of corollary \ref{cor:mo:rates}.  
A result of this form is necessary to determine the complexity penalty in \ref{thm:cp}. 
In particular, in contrast to the result in \cite{LiuOverspecifiedGroups}, theorem \ref{thm:overspecified} gives feasible rates for $f(T)$ that do not depend on mixing parameters and tail bounds of $e_{it}$ and $x_{it}$, which may be difficult or impossible to estimate. 
\end{remark}

\begin{remark}
Difficulty obtaining the fast rate $\samplerisk(k) - \baserisk = O_p(\frac{1}{NT})$ for $k > k^0$ suggests that over-fitting may be severe under over-specification of $k$. 
Difficulty obtaining $\sqrt{NT}$-consistency of $\thetak$ when $k > k^0$ suggests a type of incidental parameter problem. 
In fact, in the linear case it is known\footnote{See the example in BM, appendix S3.1.} that under $N \to \infty$, finite $T$ asymptotics, estimators with $k > k^0$ can suffer a bias of order up to $\frac{1}{\sqrt{T}}$.
\end{remark}

\section{Monte Carlo Simulations} \label{section:monte_carlo}

In this section, we describe the results of our Monte Carlo simulations.
All tables are reported in section \ref{section:tables} of the appendix.
Throughout, we denote
\begin{enumerate}[itemsep=0.4pt]
    \item $\text{Param. MSE} = \frac{1}{N} \sum_{i=1}^N \|\estthetaci - \truethetaci\|^2$ 
    \item $\text{Function MSE} = \frac{1}{NT} \sum_{i=1}^N \sum_{t=1}^T \|\estthetaci'x_{it} - \truethetaci'x_{it}\|^2$ 
    \item $\text{Cluster Loss} = \frac{1}{N} \sum_{i=1}^N \one(\wh{c}_i \not = c_i^0)$
\end{enumerate}

We use two specifications for the joint distribution $(x_{it}, y_{it})$. 
(1) Specifications labeled \emph{AR(1)} take $e_{it} \sim \text{AR}(1)$ and $x_{it} \sim \text{VAR}(1)$.
The AR process $e_{it}$ has normal innovations, and $x_{it}$ has multivariate normal innovations with constant, diagonal covariance matrix. 
The respective autocorrelation parameters are $\rho_e = 0.3$, $\rho_x = 0.5$. 
(2) Specifications labeled \emph{HK} use a heteroskedastic design inspired by \cite{Hansen2007Robust}. 
With $x_{it}$ as above, we use $e_{it} = \rho e_{it-1} + v_{it} \cdot \sqrt{\frac{1}{2} + \frac{\|x_{it}\|^2}{2p}}$, with independent normal innovations $v_{it}$.
Innovation variances are normalized so that $\var(e_{it}) =1$ for all designs and all $(i, t)$. 
All simulations use $500$ independent samples. 
See appendix \ref{appendix_computation} for additional details on the computational specification. 

\subsection{Estimator Performance}

\begin{design}[Cluster Separation]
We let $p=4$, $k = (2, 2)$, $\numgroups=2$ and parameters
\[
\theta_1^0 = \begin{pmatrix} 1 & \cos\alpha \\ 0 & \sin\alpha \end{pmatrix} \quad \theta_2^0 = \begin{pmatrix} 0 & -\sin\alpha \\ 1 & \cos\alpha \end{pmatrix} 
\]
where the columns of $\theta_{\ell}$ list parameters of block $1 \leq \ell \leq \numgroups$. 
Thus, as $\alpha \to 0$, the cluster parameters in each block rotate towards each other. 
Cluster estimation accuracy radically worsens for small $\alpha$. 
Coverage is around $80-90\%$ for well-separated clusters. 
As $\alpha \to 0$, our confidence intervals do not account for variation due to cluster estimation, and coverage is poor. 
Parameter loss is inverse U-shaped in cluster separation $\alpha$. 
For small $\alpha$, classification of $c_i^0$ becomes worse, giving large losses on some units. 
For $\alpha$ near $0$, misclassification contributes less to parameter loss since the cluster centers are very close. 
Results are shown in Table \ref{table:separation}. 
\end{design}

\begin{design}[Sample Size $(N, T)$]
We use the specification in the simulation above with $\alpha=\frac{\pi}{2}$.
Cluster loss is quite insensitive to $N$, in line with the theory. 
Increasing $T$ has a much larger effect than $N$ on parameter loss and coverage. 
For $T=5$, we find coverage actually decreases with larger $N$, which could be an example of the over-fitting issue discussed in section \ref{section:overspecification}. 
Results are shown in Table \ref{table:nt}. 
\end{design}

\begin{design}[Number of Clusters] \label{design:number}
Again with $p=4$ and $\numgroups=2$, we let $k=(k_1, k_2)$ vary. 
We define clusters $\theta_{1a} = (\cos(\frac{2 \pi}{5} \cdot a), \sin(\frac{2 \pi}{5} \cdot a))'$ for $1 \leq a \leq k_1$ and similarly for the second block. 
All performance measures decrease as the number of clusters increase. 
Results are shown in Table \ref{table:clusters}. 
\end{design}

\begin{design}[Misspecification]
In this simulation, we repeat the design above using $B=1$ and $k = k_1 \cdot k_2$, the minimal number of clusters for consistent estimation using the single latent type assumption ($B=1$) considered in the literature. 
As expected, there is a significant power loss. 
Results are shown in Table \ref{table:misspec}. 
\end{design}

\begin{design}[Block Dimension Imbalance]
We let $p=12$, $B=2$ and $(k_1, k_2) = (2, 2)$. 
We vary the grouping of covariates, taking the first block to be $(x_{it1}, \dots, x_{itm})$ for $m \in \{1, \dots, 6\}$. 
Coverage-small denotes the average coverage for parameters belonging to the small block $(x_{it1}, \dots, x_{itm})$ and conversely for Coverage-large.
Cluster loss-large and Cluster loss-small are defined similarly. 
Classification and coverage are worse for the block of smaller dimension $m$ when $m / (p-m)$ is very small, but quickly equalize as $m$ gets larger.  
Results are shown in Table \ref{table:imbalance}. 
\end{design}

\begin{design}[Covariate Dimension]
In this simulation, we take $B = p$ (one latent variable for each covariate) and study the effect of increasing $p$. 
We let $k_{\ell} = 2$ for $1 \leq \ell \leq p$ and clusters $\theta_{\ell a} = \pm 1$ ($a \in \{1, 2\}$). 
Performance only slightly deteriorates as $p$ increases.
Results are shown in Table \ref{table:dimension}
\end{design}

\subsection{Model Selection}

\begin{design}[Model Selection - Number of Clusters]
We implement the $\cp$ criterion and study its performance on the DGP in design \ref{design:number} above. 
We use penalty sequence $f(N, T) = \wh{\sigma}^2 \frac{\log T}{T}$, as in section \ref{section:model}.
Model loss is calculated using average (over $k_1, \dots k_{\numgroups}$) distance from the truth $\|\wh{k} - k^0\|_1 / \numgroups$. 
We set $k_{max} = (6, 6)$ and use $200$ independent samples. 
For $k^0=(2, 3)$, we estimate $\mathbb{E}\frac{\|\wh{k}-k^0\|}{\numgroups} = 0.03$. 
For $k^0 = (4, 4)$, we find $\mathbb{E}\frac{\|\wh{k}-k^0\|}{\numgroups} = 0.73$, with all estimates $\wh{k} = (3, 3), (3, 4)$, or $(4, 3)$.
We view this performance as reasonable given that the clusters are quite close in this design, though the results suggest we may be slightly over-penalizing. 
\end{design}

\section{Conclusion}
Clustering methods have recently become popular as a way of modeling limited heterogeneity in panel data. 
This paper motivates a family of panel structures, nesting the standard regression clustering model, that have significant cross-sectional homogeneity but are nevertheless ill-suited to estimation by the clustering methods currently considered in the literature 
We propose a modified procedure that simultaneously clusters on multiple discrete latent types, significantly expanding the set of panel structures that can be accommodated by these methods. 
We employ Lloyd's algorithm to compute the estimator, and give consistency and asymptotic normality results for the resulting estimates. 

\clearpage

\appendix
\section{Proofs} \label{section:proofs}

Throughout the following proofs, unless otherwise specified $\max_i$, $\sum_i$, $\sum_t$, $\sum_{c}$ denote $\max_{i \in [N]}$, $\sum_{i \in [N]}$, $\sum_{t \in [T]}$, $\sum_{c \in \mathcal{C}}$ respectively. 

\subsection{Proof of Theorem \ref{thm:consistency}} \label{proof:consistency}
We define 
\begin{equation} \label{equation:qtilde_def}
\qtilde(\theta, \gamma) = \frac{1}{NT} \sum_{i=1}^N \sum_{t=1}^T (x_{it}'(\theta(c_i) - \theta^0(c_i^0))^2 + \sum_{i=1}^N\sum_{t=1}^T e_{it}^2
\end{equation}

and recall that
\[
\wh{Q}(\theta, \gamma) = \frac{1}{NT} \sum_{i=1}^N \sum_{t=1}^T (x_{it}'(\theta^0(c_i^0) - \theta(c_i)) + e_{it})^2
\]

we begin by showing uniform convergence of risk surfaces

\begin{lemma} \label{unif:conv}
$\sup_{\theta \in \Theta, \gamma \in \Gamma} \left [\wh{Q}(\theta, \gamma) - \qtilde(\theta, \gamma) \right ] \convp 0$ as $N, T \to \infty$ 
\end{lemma}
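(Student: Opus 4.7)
The plan is to show that $\wh{Q} - \qtilde$ reduces, after algebraic cancellation, to a single cross term that factorizes via Cauchy--Schwarz into a $(\theta, \gamma)$-dependent piece bounded by parameter-space compactness and a $(\theta, \gamma)$-free stochastic piece that vanishes by assumption \ref{consistency:dependence}. Because the stochastic factor does not depend on $(\theta, \gamma)$, the supremum passes through for free and we avoid any uniform LLN machinery over the combinatorially enormous space $\Gamma$ (which has cardinality $(\prod_\ell k_\ell)^N$).

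First I would expand $(x_{it}'(\theta^0(c_i^0) - \theta(c_i)) + e_{it})^2$ inside $\wh{Q}$. The quadratic-in-$\theta$ term coincides with the first sum in $\qtilde$, and the $e_{it}^2$ term coincides with the second (reading the definition of $\qtilde$ in \ref{equation:qtilde_def} with the natural $\tfrac{1}{NT}$ prefactor on both sums). What remains is the cross term
\[
\wh{Q}(\theta, \gamma) - \qtilde(\theta, \gamma) = \frac{2}{NT}\sum_{i,t} e_{it}\, x_{it}'\bigl(\theta^0(c_i^0) - \theta(c_i)\bigr).
\]
Writing $\Delta_i(\theta, \gamma) \equiv \theta^0(c_i^0) - \theta(c_i) \in \mathbb{R}^p$ and swapping the order of summation, this equals $\tfrac{2}{N}\sum_i \Delta_i(\theta, \gamma)'\bigl(\tfrac{1}{T}\sum_t e_{it} x_{it}\bigr)$.

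Next I would apply Cauchy--Schwarz across the $i$-sum to obtain
\[
\bigl|\wh{Q}(\theta, \gamma) - \qtilde(\theta, \gamma)\bigr| \;\leq\; 2\Bigl(\tfrac{1}{N}\sum_i \|\Delta_i(\theta, \gamma)\|^2\Bigr)^{1/2}\Bigl(\tfrac{1}{N}\sum_i \Bigl\|\tfrac{1}{T}\sum_t e_{it} x_{it}\Bigr\|^2\Bigr)^{1/2}.
\]
Compactness of each $\Theta_\ell$ (assumption \ref{consistency:compact}), together with boundedness of $\theta^0$, gives a constant $C < \infty$ with $\sup_{\theta, \gamma} \max_i \|\Delta_i(\theta, \gamma)\| \leq C$, so the first factor is bounded by $C$ uniformly in $(\theta, \gamma)$. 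The second factor does not depend on $(\theta, \gamma)$ and simplifies to $\bigl(\tfrac{1}{NT^2}\sum_i \sum_{t,s} e_{it} e_{is} x_{it}' x_{is}\bigr)^{1/2}$, which is exactly the quantity appearing in assumption \ref{consistency:dependence} and is therefore $o_p(1)$. Taking the supremum on the left then gives $\sup_{\theta, \gamma}|\wh{Q} - \qtilde| = o_p(1)$.

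There is essentially no technical obstacle here; the only conceptual point is that one might naively try to control the $\gamma$-supremum by a union bound, which would be hopeless. The Cauchy--Schwarz step completely decouples the $(\theta, \gamma)$-dependence (absorbed into a compactness-bounded factor) from the noise (absorbed into a $(\theta, \gamma)$-free factor controlled directly by \ref{consistency:dependence}), so the proof is a two-line computation once the algebraic cancellation is spotted.
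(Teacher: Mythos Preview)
Your proposal is correct and follows essentially the same route as the paper: identify the difference as the cross term $\tfrac{2}{NT}\sum_{i,t} e_{it} x_{it}'\Delta_i$, then apply Cauchy--Schwarz and compactness to separate a uniformly bounded $(\theta,\gamma)$-dependent factor from the $(\theta,\gamma)$-free stochastic term $\tfrac{1}{NT^2}\sum_i\sum_{t,s} e_{it}e_{is}x_{it}'x_{is}$ controlled by assumption \ref{consistency:dependence}. The only cosmetic difference is that the paper squares the difference first and uses Jensen plus term-wise Cauchy--Schwarz, whereas you apply Cauchy--Schwarz directly across the $i$-sum; both yield the same bound.
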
 

\begin{proof}
Define $\Delta \theta_i = \theta^0(c_i^0) - \theta(c_i)$ and note that 
\[
\wh{Q}(\theta, \gamma) - \qtilde(\theta, \gamma) = \frac{2}{NT}\sum_{i, t} e_{it} x_{it}' \Delta \theta_i
\]
Then we can compute
\begin{align}
\left (\wh{Q}(\theta, \gamma) - \qtilde(\theta, \gamma) \right)^2 &= \left [\frac{1}{N} \sum_i \Delta \theta_i' \left ( \frac{1}{T} \sum_t e_{it} x_{it} \right ) \right]^2 \leq \frac{1}{N} \sum_i \| \Delta \theta_i \|^2 \frac{1}{T^2} \left \| \sum_t e_{it} x_{it} \right \|^2  \nonumber \\
& \lesssim \frac{1}{NT^2} \sum_i \sum_{t, s} e_{it} e_{is} x_{it}' x_{is} = o_P(1) \label{equation:cross_term_bound}
\end{align}

The first inequality follows from Jensen and Cauchy-Schwarz, the next uses assumption \ref{consistency:compact} on compactness, and the final equality is assumption \ref{consistency:dependence}.  
Taking $\sup_{\theta \in \Theta, \gamma \in \Gamma}$ on both sides of the inequality gives the statement of the lemma. 
\end{proof}

Now we make the usual observation that $\qtilde(\wh{\theta}, \wh{\gamma}) - \qtilde(\theta^0, \gamma^0) = o_P(1)$ since
\begin{align*}
\qtilde(\wh{\theta}, \wh{\gamma}) &= \wh{Q}(\wh{\theta}, \wh{\gamma}) + [\qtilde(\wh{\theta}, \wh{\gamma}) - \wh{Q}(\wh{\theta}, \wh{\gamma})] \leq \wh{Q}(\wh{\theta}, \wh{\gamma}) + \sup_{\theta \in \Theta, \gamma \in \Gamma} \left [(\qtilde - \wh{Q})(\theta, \gamma) \right ]\\
&= \wh{Q}(\wh{\theta}, \wh{\gamma}) + o_P(1) \leq \wh{Q}(\theta^0, \gamma^0) + o_P(1) = \qtilde(\theta^0, \gamma^0) + o_P(1) \\
& \implies 0 \leq \qtilde(\wh{\theta}, \wh{\gamma}) - \qtilde(\theta^0, \gamma^0) \leq o_P(1)
\end{align*}

The second equality follows from lemma \ref{unif:conv}, and the third equality from the definition of the estimator.
The next step is to show curvature of the auxiliary sample risk $\qtilde$.
The following curvature calculation is almost identical to the proof in appendix S6 in BM. 
For arbitrary $\theta \in \Theta$ and $\gamma \in \Gamma$, we have 
\begin{align}
\qtilde(\theta, \gamma) - \qtilde(\theta^0, \gamma^0) &= \frac{1}{NT} \sum_{i, t} (\thetaci - \truethetaci)'x_{it}x_{it}'(\thetaci - \truethetaci)  \label{equation:consistency:curvature} \\
&= \frac{1}{NT} \sum_{i, t} \sum_{c \in \mathcal{C}} \sum_{c' \in \mathcal{C}} (\theta(c) - \theta^0(c'))'x_{it}x_{it}'(\theta(c) - \theta^0(c')) \mathds{1}(c_i^0 = c')\mathds{1}(c_i = c) \nonumber \\
&= \sum_{c, c'} (\theta(c) - \theta^0(c'))' \left ( \frac{1}{NT} \sum_{i, t} x_{it} x_{it}' \mathds{1}(c_i^0 = c')\mathds{1}(c_i = c) \right )(\theta(c) - \theta^0(c')) \nonumber
\end{align}

Let $M(c, c', \gamma) \equiv \frac{1}{NT} \sum_{i, t} x_{it} x_{it}' \mathds{1}(c_i^0 = c')\mathds{1}(c_i = c)$ and define $\rho(c, c', \gamma) \equiv \lambda_{min}(M(c, c', \gamma))$. 
Then the last line is bounded below by 
\begin{align*}
 \sum_{c, c'} \|\theta(c) - \theta^0(c')\|^2 \rho(c, c', \gamma) & \geq \sum_{c, c'} \rho(c, c', \gamma) \inf_{x \in \mathcal{C}} \|\theta(x) - \theta^0(c')\|^2 \\
 & \geq \sum_{c'} \inf_{\Tilde{c}, \gamma} \max_{c} \rho(c, \Tilde{c}, \gamma) \inf_{x \in \mathcal{C}} \|\theta(x) - \theta^0(c')\|^2 \\
 & \geq \left (\inf_{\Tilde{c}, \gamma} \max_{c} \rho(c, \Tilde{c}, \gamma) \right )\max_{c'} \inf_{x \in \mathcal{C}} \|\theta(x) - \theta^0(c')\|^2 
\end{align*}

Then we see that 
\begin{align*}
o_P(1) &= \qtilde(\wh{\theta}, \wh{\gamma}) - \qtilde(\theta^0, \gamma^0) \geq \left (\inf_{\Tilde{c}, \gamma} \max_{c} \rho(c, \Tilde{c}, \gamma) \right )\max_{c'} \inf_{x \in \mathcal{C}} \|\wh{\theta}(x) - \theta^0(c')\|^2 \\
&\geq \delta \max_{c} \inf_{x \in \mathcal{C}} \|\wh{\theta}(x) - \theta^0(c)\|^2 - o_P(1) \\
& \implies \max_{c} \inf_{x \in \mathcal{C}} \|\wh{\theta}(x) - \theta^0(c)\|^2 = o_P(1) 
\end{align*}

So that $ \max_{c} \inf_{x \in \mathcal{C}} \|\wh{\theta}(x) - \theta^0(c)\|^2 = o_P(1)$.
The second equality uses assumptions \ref{consistency:eigenvalue} and \ref{consistency:compact}.
As noted in the main text, the problem $\argmin_{\gamma \in \Gamma, \theta \in \Theta} \samplerisk(\theta, \gamma)$ is invariant to permutations of the labels in $\mathcal{C}$ and their associated parameter vectors in $\theta$.
The next step is resolve this degeneracy by giving a well-defined estimator of $\theta^0_{\ell a}$ for each $\ell \in [\numgroups]$, $a \in [k_{\ell}]$. \\

\begin{lemma} \label{lemma:permutation_proof}
Define $\sigma(c) \equiv \argmin_{x \in \mathcal{C}} \|\wh{\theta}(x) - \theta^0(c) \|^2$. 
The map $\sigma_{\ell}(x) \equiv \sigma(c)_{\ell}$ for any $c$ such that $c_{\ell} = x$ is well defined. 
\end{lemma}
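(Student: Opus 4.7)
The plan is to exploit the block-separable structure of the parameter vector $\theta(c) = (\theta_1(c_1), \dots, \theta_{\numgroups}(c_{\numgroups}))$, which makes the joint minimization defining $\sigma(c)$ decouple across blocks. The conclusion will then be immediate.

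First, I would write out the objective defining $\sigma(c)$ using \eqref{equation:cluster_structure}:
\[
\|\wh{\theta}(x) - \theta^0(c)\|^2 = \sum_{\ell=1}^{\numgroups} \|\wh{\theta}_{\ell}(x_{\ell}) - \theta^0_{\ell}(c_{\ell})\|^2.
\]
Because $\mathcal{C} = \prod_{\ell} [k_{\ell}]$, the minimization over $x \in \mathcal{C}$ separates: each summand depends only on the single coordinate $x_{\ell}$, so
\[
\argmin_{x \in \mathcal{C}} \|\wh{\theta}(x) - \theta^0(c)\|^2 = \bigl(\argmin_{x_1 \in [k_1]} \|\wh{\theta}_1(x_1) - \theta^0_1(c_1)\|^2, \ldots, \argmin_{x_{\numgroups} \in [k_{\numgroups}]} \|\wh{\theta}_{\numgroups}(x_{\numgroups}) - \theta^0_{\numgroups}(c_{\numgroups})\|^2\bigr),
\]
where ties in any block are broken by a fixed rule (say, smallest index). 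Thus the $\ell$th coordinate $\sigma(c)_{\ell}$ depends on $c$ only through $c_{\ell}$.

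Consequently, for any two $c, c' \in \mathcal{C}$ with $c_{\ell} = c'_{\ell} = x$, we have $\sigma(c)_{\ell} = \sigma(c')_{\ell}$, so the definition $\sigma_{\ell}(x) \equiv \sigma(c)_{\ell}$ for any $c$ with $c_{\ell} = x$ is unambiguous. Moreover, this coincides with the per-block definition in \eqref{equation:permutation_function}, namely $\sigma_{\ell}(x) = \argmin_{b \in [k_{\ell}]} \|\wh{\theta}_{\ell b} - \theta^0_{\ell x}\|$. There is no substantive obstacle here; the only minor point to be careful about is a consistent tie-breaking convention across blocks, which is moot asymptotically since Theorem \ref{thm:consistency} together with assumption \ref{consistency:separation} ensures that the argmin is unique with probability tending to one.
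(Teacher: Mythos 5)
Your proposal is correct and follows essentially the same argument as the paper: both exploit the block-separability $\|\wh{\theta}(x) - \theta^0(c)\|^2 = \sum_{\ell} \|\wh{\theta}_{\ell}(x_{\ell}) - \theta^0_{\ell}(c_{\ell})\|^2$ so that the argmin decouples across blocks and $\sigma(c)_{\ell}$ depends on $c$ only through $c_{\ell}$. Your added remark on tie-breaking is a minor refinement the paper leaves implicit.
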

\begin{proof}
Existence of the map is clear; we show it is a function.
Note that $\min_{x \in \mathcal{C}} \|\wh{\theta}(x) - \theta^0(c) \|^2 = \sum_{\ell = 1}^{\numgroups} \min_{x_{\ell} \in [k_{\ell}]} \|\wh{\theta}_{\ell}(x_{\ell}) - \theta_{\ell}^0(c_{\ell}) \|^2 $. 
Then for any $c \in \mathcal{C}$, we have $\sigma(c)_{\ell} = f(c_{\ell}, \theta^0, \wh{\theta})$, so $c_{\ell} = c'_{\ell} \implies \sigma(c)_{\ell} = \sigma(c')_{\ell}$. 
\end{proof}

In fact, since $\min_{x \in \mathcal{C}} \|\wh{\theta}(x) - \theta^0(c) \|^2 = o_P(1)$, the proof of the lemma above shows that $\|\theta_{\ell a}^0 - \wh{\theta}_{\ell \sigma_{\ell}(a)}\| = o_P(1)$ for all groupings $\ell$ and each cluster $a \in [k_{\ell}]$, completing the main statement of the theorem. \\

We show that for each $\ell$, $\sigma_{\ell}$ is a bijection w.h.p.
Since $\sigma_{\ell}: [k_{\ell}] \to [k_{\ell}]$, it suffices to show injection. 
Let $a, b \in [k_{\ell}]$, then we have
\[
\|\theta^0_{\ell a} - \theta^0_{\ell b}\| \leq  \|\theta^0_{\ell a} - \wh{\theta}_{\ell \sigma_{\ell}(a)}\| + \|\wh{\theta}_{\ell \sigma_{\ell}(a)} - \wh{\theta}_{\ell \sigma_{\ell}(b)}\| + \|\wh{\theta}_{\ell \sigma_{\ell}(b)} - \theta^0_{\ell}(b)\| \\
 \leq \|\wh{\theta}_{\ell \sigma_{\ell}(a)} - \wh{\theta}_{\ell \sigma_{\ell}(b)}\| + X_{N, T}
\]

Where $X_{N, T} = o_P(1)$ as $N, T \to \infty$. 
Then $\{\sigma_{\ell}(a) = \sigma_{\ell}(b) \implies a = b \} \subset \{X_{N, T} < \distab\}$ by assumption \ref{consistency:separation}, and the latter event has probability going to $1$ as $N, T \to \infty$. 
Since $\cap_{\ell} \{\sigma_{\ell} \, \text{injective}\}$ is an intersection of finitely many events of the form above, we have 

\[
\mathbb{P}(\sigma_{\ell} \, \text{injective} \, \forall \ell) \to 1
\]

as $N, T \to \infty$. 

\subsection{Proof of Theorem \ref{thm:inference}} \label{proof:inference} 

The proof closely follows the strategy used in \cite{BonhommeManresa2015}. 
We define the problem

\begin{equation} \label{equation:conc_problem}
\qconc(\theta) = \inf_{\gamma \in \Gamma} \wh{Q}(\theta, \gamma)
\end{equation}

And let $\wh{c}_i(\theta)$ denote the cluster assignments that minimize the RHS of \ref{equation:conc_problem}. 
Thus, $\qconc$ is the original problem from \ref{estimator} with the cluster assignments concentrated out. 
The proof of theorem \ref{thm:inference} crucially relies on the following lemma

\begin{lemma} \label{lemma:cluster_errors}
For $\eta > 0$, define $\neighborhood_{\eta} = \{\theta \in \Theta: \max_{c \in \mathcal{C}} \|\theta(c) - \theta^0(c) \| < \eta \}$.
Then there exists $\eta > 0$ such that for all $a > 0$

\[
\sup_{\theta \in \neighborhood_{\eta}} \frac{1}{N} \sum_i^N \mathds{1}(\wh{c}_i(\theta) \not = c_i^0) = o_P(T^{-a})
\]

\end{lemma}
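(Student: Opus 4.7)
The strategy is to bound, for each unit $i$, the probability of misclassification at a rate faster than any polynomial in $T^{-1}$, then aggregate via Markov. Fix $i$ and $c \neq c_i^0$. Since $\wh{c}_i(\theta)$ minimizes the per-unit sum of squared residuals over $\mathcal{C}$, the event $\wh{c}_i(\theta) = c$ forces the inequality $\frac{1}{T}\sum_t (y_{it} - x_{it}'\theta(c))^2 \leq \frac{1}{T}\sum_t (y_{it} - x_{it}'\theta(c_i^0))^2$. Substituting $y_{it} = x_{it}'\theta^0(c_i^0) + e_{it}$ and writing $\Delta(c) = \theta(c) - \theta(c_i^0)$, $\Delta_c = \theta^0(c) - \theta^0(c_i^0)$, and $r_i = \theta^0(c_i^0) - \theta(c_i^0)$, this rearranges to $\frac{1}{T}\sum_t (x_{it}'\Delta(c))^2 \leq \frac{2}{T}\sum_t e_{it} x_{it}' \Delta(c) + \frac{2}{T}\sum_t (x_{it}'r_i)(x_{it}'\Delta(c))$. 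For $\theta \in \neighborhood_\eta$, the $\theta$-dependent vectors $\Delta(c)$ and $r_i$ differ from $\Delta_c$ and $0$ respectively by at most $2\eta$ in norm, so the left side equals $\frac{1}{T}\sum_t (x_{it}'\Delta_c)^2$ up to an error of order $\eta \cdot \frac{1}{T}\sum_t \|x_{it}\|^2$, and the second term on the right is also of that order.

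\textbf{Separation and tail bounds.} By Assumption \ref{assumptions:inference:limits}, $\frac{1}{T}\sum_t \mathbb{E}[(x_{it}'\Delta_c)^2] \to d(c_i^0, c) \geq \clusterdistmin > 0$ uniformly in $i$, while Assumption \ref{assumptions:inference:xbound} gives $\prob(\frac{1}{T}\sum_t \|x_{it}\|^2 > M') = o(T^{-a})$ uniformly in $i$. Choose $\eta > 0$ small enough that the $\eta M'$ error terms can absorb at most $\clusterdistmin/8$. Then on the intersection of the three $\theta$-independent events $\{\frac{1}{T}\sum_t \|x_{it}\|^2 \leq M'\}$, $\{|\frac{1}{T}\sum_t (x_{it}'\Delta_c)^2 - d(c_i^0, c)| \leq \clusterdistmin/4\}$, and $\{\|\frac{1}{T}\sum_t e_{it} x_{it}\| \leq \kappa'\}$ for $\kappa'$ sufficiently small, the displayed inequality above becomes a contradiction for every $\theta \in \neighborhood_\eta$, ruling out $\wh{c}_i(\theta) = c$. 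The complements of the latter two events involve averages of strongly mixing random vectors with sub-Weibull tails (Assumptions \ref{assumptions:inference:mixing}--\ref{assumptions:inference:tails}), so Rio's Fuk-Nagaev inequality, as invoked later in lemma \ref{lemma:mo:deviations}, bounds each by $o(T^{-a})$ for every $a > 0$, uniformly in $i \in [N]$.

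\textbf{Uniformity in $\theta$ and aggregation.} The crucial point is that the enveloping event $E_{i,T} = \bigcup_{c \neq c_i^0} E_{i,T}^{(c)}$ just constructed depends only on the data and on the true parameters, not on $\theta$, so the bound $\mathds{1}(\wh{c}_i(\theta) \neq c_i^0) \leq \mathds{1}(E_{i,T})$ holds uniformly over $\theta \in \neighborhood_\eta$ after a finite union bound over the $|\mathcal{C}| - 1$ choices of $c$. Averaging over $i$, taking expectations, and applying Markov's inequality yields $\sup_{\theta \in \neighborhood_\eta} \frac{1}{N}\sum_i \mathds{1}(\wh{c}_i(\theta) \neq c_i^0) = O_P(\max_i \prob(E_{i,T})) = o_P(T^{-a})$ for every $a > 0$. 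The main technical obstacle is producing tail bounds that are simultaneously uniform in $i \in [N]$ and faster than any polynomial in $T^{-1}$; this is exactly delivered by the combination of exponential mixing $\alpha(t) \leq e^{-bt^{d_1}}$ and the Weibull-type tails $e^{1-(z/f)^{d_2}}$ through Fuk-Nagaev, with all constants controlled uniformly in $i$ by the uniform-in-$i$ form of Assumption \ref{assumptions:inference}.
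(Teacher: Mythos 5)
Your proposal is correct and follows essentially the same strategy as the paper's proof: bound the misclassification indicator by the indicator of a $\theta$-free event built from the separation term $\frac{1}{T}\sum_t (x_{it}'\Delta_c)^2$, the covariate norm average, and $\|\frac{1}{T}\sum_t e_{it}x_{it}\|$, then show each such event has probability $o(T^{-a})$ uniformly in $i$ via the Rio-type concentration inequality for strongly mixing sequences with Weibull tails, and finish with a union bound over $\mathcal{C}$ and Markov's inequality. The only details you elide that the paper spells out are the verification that the products $x_{it}^j x_{it}^{j'}$ inherit the required tail bound and the centering of $e_{it}x_{it}$ and $(x_{it}'\Delta_c)^2$ around their (only asymptotically controlled) means, but these are routine and do not change the argument.
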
 

\begin{proof}

First note that for each $i \in [N]$
\begin{align*}
\mathds{1}(\wh{c}_i(\theta) \not = c_i^0) &= \sum_{c \not = c_i^0} \one(\wh{c}_i(\theta) = c) \leq \sum_{c \not = c_i^0} \one \left (\frac{1}{T} \sum_{t=1}^T (y_{it} - x_{it}'\theta(c))^2 \leq \frac{1}{T} \sum_{t =1}^T (y_{it} - x_{it}'\theta(c_i^0))^2 \right) \\
& = \sum_{c \not = c_i^0} \one \left (\frac{1}{T} \sum_{t=1}^T (x_{it}'(\theta^0(c_i^0) - \theta(c)) + e_{it})^2 \leq \frac{1}{T} \sum_{t =1}^T (x_{it}'(\theta^0(c_i^0) - \theta(c_i^0)) + e_{it})^2 \right ) \\ 
&\leq \sum_{c \in \mathcal{C}} \max_{c' \not = c} \one \left (\frac{1}{T} \sum_{t=1}^T (x_{it}'(\theta^0(c') - \theta(c)) + e_{it})^2 \leq \frac{1}{T} \sum_{t =1}^T (x_{it}'(\theta^0(c') - \theta(c')) + e_{it})^2 \right ) \\ 
&\equiv \sum_{c \in \mathcal{C}} \max_{c' \not = c} Z_{ic}(c', \theta)
\end{align*}

We can rewrite inequality inside the indicator as (for $c \in \mathcal{C}$) as 
\begin{align*}
\boundfn(\theta) \equiv \frac{1}{T}\sum_{t=1}^T 2e_{it} x_{it}' (\theta(c') - \theta(c)) + [x_{it}'(\theta^0(c') - \theta(c))]^2 - [x_{it}'(\theta^0(c') - \theta(c'))]^2 \leq 0 
\end{align*}

Then we calculate
\begin{align}
|\boundfn(\theta) &- \boundfn(\theta^0)| \leq \left |\frac{1}{T}\sum_{t=1}^T 2e_{it} x_{it}' (\theta(c') - \theta^0(c') + \theta^0(c) - \theta(c)) \right | \nonumber \\ 
&+ \left | \frac{1}{T} \sum_{t = 1}^T [x_{it}'(\theta^0(c') - \theta(c))]^2 - [x_{it}'(\theta^0(c') - \theta(c'))]^2 - [x_{it}'(\theta^0(c') - \theta^0(c))]^2 \right | \label{equation:boundfn}
\end{align}

The second term is bounded above by
\begin{align*}
\left | \frac{1}{T} \sum_{t = 1}^T [x_{it}'(\theta^0(c') - \theta(c'))]^2 \right | &+ \left | \frac{1}{T} \sum_{t = 1}^T [x_{it}'(\theta^0(c') - \theta^0(c) + \theta^0(c) - \theta(c))]^2 - [x_{it}'(\theta^0(c') - \theta^0(c))]^2 \right | \\ 
&\leq \left | \frac{1}{T} \sum_{t = 1}^T [x_{it}'(\theta^0(c') - \theta(c'))]^2 \right | + \left | \frac{1}{T} \sum_{t = 1}^T [x_{it}'(\theta^0(c) - \theta(c))]^2 \right | \\
&+ 2 \left | \frac{1}{T} \sum_{t = 1}^T [x_{it}'(\theta^0(c') - \theta^0(c))] [x_{it}'(\theta^0(c) - \theta(c))] \right |
\end{align*}

Using $\theta \in \neighborhood_{\eta}$ and applying the triangle inequality, Cauchy-Schwarz, and assumption \ref{consistency:compact}, the last expression can be bounded above by
\begin{align*}
2 \eta^2 \frac{1}{T}\sum_t \|x_{it}\|^2 + 2M\eta \frac{1}{T}\sum_{t=1}^T \|x_{it}\|^2 = 2 \eta(M + \eta) \frac{1}{T}\sum_{t=1}^T \|x_{it}\|^2 \leq  4 \eta M \frac{1}{T}\sum_{t=1}^T \|x_{it}\|^2 
\end{align*}

Similarly, one can show that the first term in \ref{equation:boundfn} is bounded by $4 \eta \left \| \frac{1}{T} \sum_{t = 1}^T e_{it} x_{it} \right \|$.

This shows that for any $c \not = c'$ 
\begin{align*}
\sup_{\theta \in \neighborhood_{\eta}} &Z_{ic}(c', \theta) \leq \sup_{\theta \in \neighborhood_{\eta}} \one(\boundfn(\theta^0) \leq |\boundfn(\theta) - \boundfn(\theta^0) |) \\
&\leq \one \left (\boundfn(\theta^0) \leq 4 \eta M \frac{1}{T}\sum_t \|x_{it}\|^2 + 4 \eta \left \| \frac{1}{T} \sum_{t = 1}^T e_{it} x_{it} \right \| \right ) \\
&= \one \left ( \frac{1}{T}\sum_{t=1}^T 2e_{it} x_{it}' (\theta^0(c') - \theta^0(c)) + [x_{it}'(\theta^0(c') - \theta^0(c))]^2 \leq 4 \eta M \frac{1}{T}\sum_t \|x_{it}\|^2 + 4 \eta \left \| \frac{1}{T} \sum_{t = 1}^T e_{it} x_{it} \right \| \right ) \\
&\leq \one \left ( \frac{1}{T}\sum_{t=1}^T [x_{it}'(\theta^0(c') - \theta^0(c))]^2 \leq 4 \eta M \frac{1}{T}\sum_t \|x_{it}\|^2 + (4 \eta + 2M) \left \| \frac{1}{T} \sum_{t = 1}^T e_{it} x_{it} \right \| \right )
\end{align*}
Where $\text{Diam}(\Theta) \leq M$ by assumption \ref{consistency:compact}.
Let $M'$ be the constant from \ref{assumptions:inference:xbound}
Then taking expectations
\begin{align} 
& \max_{i} \mathbb{E} \sup_{\theta \in \neighborhood_{\eta}} Z_{ic}(c', \theta) \nonumber \\
&\leq  \max_{i} \prob \left ( \frac{1}{T}\sum_{t=1}^T [x_{it}'(\theta^0(c') - \theta^0(c))]^2 \leq 4 \eta M M' + (4 \eta + 2M) \eta \right ) \nonumber \\
&+ \max_{i}\prob \left ( \frac{1}{T}\sum_t \|x_{it}\|^2 > M' \right ) +  \max_{i} \prob \left (  \left \| \frac{1}{T} \sum_{t = 1}^T e_{it} x_{it} \right \| > \eta \right ) \label{equation:zbound} 
\end{align}

To bound these terms we will use Lemma B.5 from BM, which is an application of \cite{Rio2017}, on concentration of strongly mixing sequences. 
We restate the lemma here 

\begin{lemma}[BM Lemma B.5] \label{lemma:bonhomme}
Let $z_t$ be a strongly mixing process with zero mean, with strong mixing coefficients $\alpha(t)$ satisfying \ref{assumptions:inference:mixing} and tails $\prob(|z_t| > z) \leq e^{1 - (z/f)^{d_2}}$. 
Then for all $a > 0$ and $z > 0$, we have as $T \to \infty$

\[
T^a \cdot \prob \left ( \left | \frac{1}{T} \sum_{t=1}^T z_t \right  | > z \right ) \leq r(T) = o(1)  
\]

Moreover, the function $r$ only depends on the constants $b, f, d_1, d_2$ from assumption \ref{assumptions:inference:mixing} and \ref{assumptions:inference:tails}.
\end{lemma}

We want to apply this result to the terms in \ref{equation:zbound} above. 
Observe that if $\{x_{it}\}$ is strongly mixing with mixing coefficients $\alpha(t)$ then $\{(x_{it}'(\theta^0(c) - \theta^0(c')))^2\}$ is also strongly mixing with coefficients uniformly bounded above by $\alpha(t)$. 
This follows because continuous transformations can only decrease the mixing coefficients.
For completeness, we can show that the tail assumptions in \ref{assumptions:inference:tails} imply that that $z_t \equiv (x_{it}'(\theta(c) - \theta(c')))^2 - E(x_{it}'(\theta(c) - \theta(c')))^2$ also satisfies the tail bound required in the lemma. 
Let $\Delta \theta \equiv \theta^0(c) - \theta^0(c')$ and recall $p = \text{dim}(x_{it})$, then 
\begin{align}
\prob((x_{it}'(\theta(c) - \theta(c')))^2 &- E(x_{it}'(\theta(c) - \theta(c')))^2 > z) = \prob(\Delta \theta'(x_{it}x_{it}' - Ex_{it}x_{it}') \Delta \theta > z) \nonumber \\
&= \prob \left (\sum_{j, j'} \Delta \theta^j \Delta \theta^{j'}(x_{it}^k x_{it}^{k'} - Ex_{it}^k x_{it}^{k'}) > z \right ) \nonumber \\
&\leq \sum_{k, k'} \prob \left (|(x_{it}^k x_{it}^{k'} - Ex_{it}^k x_{it}^{k'})| > \frac{z}{p^2 (M')^2} \right ) \label{equation:xsquare_tail} 
\end{align}

Note that $\prob(|Z| > z) \leq e^{1 - (z/f)^{d_2}}$ does not imply that $C \cdot \prob(|Z| > z)$ satisfies a tail bound of the same form (possibly with different constants $f, d_2$) if $C > 1$. 
However, a calculation shows that for any $C > 1$, there exist $f', d_2'$ such for all $z > 0$, $\min(1, Ce^{1 - (z/f)^{d_2}}) \leq \min(1, e^{1 - (z/f')^{d_2'}})$, so this is not a problem. 
This shows that the final term in \ref{equation:xsquare_tail} above satisfies a tail bound of the required form. \\

We now apply the lemma to each of the terms in equation \ref{equation:zbound}. 
Choose $\eta$ such that $4 \eta M M' + (4 \eta + 2M) \eta < \frac{1}{3}\clusterdistmin$.
Let $g_{it} \equiv E(x_{it}'(\theta^0(c') - \theta^0(c)))^2$ and $T'$ such that $\min_i \frac{1}{T} \sum_{t=1}^{T'} g_{it} > (1/2) \clusterdistmin$, using assumption \ref{assumptions:inference:limits}. 
Then for $T > T'$, the first term in $\ref{equation:zbound}$ is 
\begin{align*}
\max_i \prob \left ( \frac{1}{T}\sum_t \left ([x_{it}'(\theta^0(c') - \theta^0(c))]^2 - g_{it} \right ) \leq 4 \eta M M' + (4 \eta + 2M) \eta - \frac{1}{T} \sum_t g_{it} \right ) \\
\leq \max_i \prob \left ( \left | \frac{1}{T}\sum_t [x_{it}'(\theta^0(c') - \theta^0(c))]^2 - g_{it} \right | \geq \frac{1}{6} \clusterdistmin \right ) = o(T^{-a})
\end{align*}

where the last line follows from applying lemma \ref{lemma:bonhomme} with $z_{it} = [x_{it}'(\theta^0(c') - \theta^0(c))]^2 - g_{it}$. 
A similar argument using assumptions \ref{assumptions:inference:mixing}, \ref{assumptions:inference:tails}, \ref{assumptions:inference:limits} on the process $\{e_{it} x_{it}\}_t$ shows that the second term in equation $\ref{equation:zbound}$ is also $o(T^{-a})$, and the final term is just as assumption \ref{assumptions:inference:xbound}. \\

Then for $\epsilon > 0$, the Markov inequality gives
\begin{align*}
\prob \left (T^a \sup_{\theta \in \neighborhood_{\eta}} \frac{1}{N} \sum_{i=1}^N \one(\wh{c}_i(\theta) \not = c_i^0) >  \epsilon \right ) &\leq T^a \frac{1}{\epsilon} \mathbb{E} \sup_{\theta \in \neighborhood_{\eta}} \frac{1}{N} \sum_{i = 1}^N \sum_{c \in \mathcal{C}} \max_{c' \not = c} Z_{ic}(c', \theta) \\
&\leq T^a \frac{1}{\epsilon} \frac{1}{N} \sum_{i = 1}^N \sum_{c \in \mathcal{C}} \sum_{c' \not = c} \mathbb{E} \sup_{\theta \in \neighborhood_{\eta}} Z_{ic}(c', \theta) \\
&\leq T^a \frac{1}{\epsilon} \frac{1}{N} \sum_{i = 1}^N |\mathcal{C}|^2 \max_{c \not = c'} \max_{i \in [N]} \mathbb{E} \sup_{\theta \in \neighborhood_{\eta}} Z_{ic}(c', \theta) = o(1) \\
\end{align*}

This completes the proof of the lemma. 
\end{proof}

In what follows, we let $\eta$ satisfy the conditions posited in \ref{lemma:cluster_errors}. 
Recall the sample risk with oracle cluster membership $\qoracle \equiv \wh{Q}(\theta, \gamma^0)$. 
We show that for every $a > 0$, $\sup_{\theta \in \mathcal{N}_{\eta}} (\wh{Q} - \qoracle)(\theta) = o_P(T^{-a})$. 
For any $\theta \in \neighborhood_{\eta}$, we can write
\begin{align}
| (\wh{Q} - \qoracle)(\theta) | &= \left | \frac{1}{NT} \sum_{i, t} [y_{it} - x_{it}'\theta(\wh{c}_i(\theta))]^2 - [y_{it} - x_{it}'\theta(c_i^0)]^2 \right | \leq \left | \frac{1}{NT} \sum_{i, t} 2e_{it} x_{it}'(\theta(c_i^0) - \theta(\wh{c}_i(\theta))) \right | \nonumber \\
&+ \left | \frac{1}{NT} \sum_{i, t} [x_{it}'(\theta^0(c_i^0) - \theta(c_i^0))]^2  - [x_{it}'(\theta^0(c_i^0) - \theta(\wh{c}_i(\theta)))]^2   \right | \label{equation:qqtilde}
\end{align}

The first term on the right hand side is bounded above by
\begin{align*}
\frac{1}{N} \sum_{i} \bigg | (\theta(c_i^0) - \theta(\wh{c}_i(\theta)))' \frac{1}{T} \sum_t & 2e_{it} x_{it} \one(\wh{c}_i(\theta) \not = c_i^0) \bigg | \leq \frac{2M}{N} \sum_{i}  \one(\wh{c}_i(\theta) \not = c_i^0) \left \|\frac{1}{T}  \sum_t e_{it} x_{it}  \right \| \\
&\leq \left ( \frac{1}{N} \sum_i \one(\wh{c}_i(\theta) \not = c_i^0) \right )^{1/2} \left (\frac{1}{N} \sum_i \left \| \frac{1}{T} \sum_t e_{it} x_{it} \right \|^2 \right )^{1/2} \\
&= o_P(T^{-(2a)/2}) \left ( \frac{1}{NT^2} \sum_{i} \sum_{t, s} e_{it} e_{is} x_{it}'x_{is} \right )^{1/2} \\
&= o_P(T^{-(2a)/2}) o_P(1) = o_P(T^{-a})
\end{align*}

where the last line follows by lemma \ref{lemma:cluster_errors} and assumption \ref{consistency:dependence}. 
The second term in equation \ref{equation:qqtilde} can be expanded as 
\begin{align*}
\left | \frac{1}{NT} \sum_{i, t} [x_{it}'(\theta^0(c_i^0) - \theta(\wh{c}_i(\theta))  + \theta(\wh{c}_i(\theta))  - \theta(c_i^0))]^2 - [x_{it}'(\theta^0(c_i^0) - \theta(\wh{c}_i(\theta)))]^2 \right | \\
\leq \left | \frac{1}{NT} \sum_{i, t} 2 x_{it}'(\theta^0(c_i^0) - \theta(\wh{c}_i(\theta))) x_{it}'(\theta(\wh{c}_i(\theta))  - \theta(c_i^0)) \right | + \left | \frac{1}{NT} \sum_{i, t} (x_{it}'(\theta(\wh{c}_i(\theta)) - \theta(c_i^0)))^2 \right | 
\end{align*}

For instance, the second term can be rewritten
\begin{align}
\bigg | \frac{1}{N} \sum_{i} \frac{1}{T} &\sum_t (x_{it}'(\theta(\wh{c}_i(\theta)) - \theta(c_i^0)))^2 \one(\wh{c}_i(\theta) \not = c_i^0) \bigg | \leq \left | \frac{M^2}{N} \sum_{i} \one(\wh{c}_i(\theta) \not = c_i^0) \frac{1}{T} \sum_t \|x_{it} \|^2  \right | \nonumber \\
&\leq M^2 \left ( \frac{1}{N} \sum_i \one(\wh{c}_i(\theta) \not = c_i^0) \right )^{\frac{1}{2}} \left (\frac{1}{N} \sum_i \left ( \frac{1}{T} \sum_t \|x_{it}\|^2 \right )^2 \right )^{\frac{1}{2}} \leq o_P(T^{-a}) \label{equation:inference:Op1}
\end{align}

where the last inequality uses lemma \ref{lemma:cluster_errors} and assumption \ref{assumptions:inference:xnorm}.
It follows that 

\begin{equation} \label{equation:qoracle_equivalence}
\sup_{\theta \in \neighborhood_{\eta}} |(\wh{Q} - \qtilde)(\theta)| = o_P(T^{-a}) 
\end{equation}

We claim that $\thetaoracle - \theta^0 = o_P(1)$. 
Note that since $\qoracle(\theta) = \wh{Q}(\theta, \gamma^0)$, it suffices to check that the assumptions in \ref{assumptions:consistency} hold for $\Gamma' \equiv \{\gamma^0\}$.
The only thing we need to check is assumption \ref{consistency:eigenvalue}, which is clear since $\{\gamma^0\} \subset \Gamma$ implies $\inf_{c', \gamma \in \{\gamma^0\}} \max_{c} \rho(c, c', \gamma) \geq  \inf_{c', \gamma \in \Gamma} \max_{c} \rho(c, c', \gamma) \geq \delta - o_P(1)$ as $N, T \to \infty$ by assumption \ref{consistency:eigenvalue}. 
This shows $\thetaoracle - \theta^0 = o_P(1)$. \\

Next, we will show that for any $a > 0$

\begin{equation} \label{equation:conv_on_qoracle}
\qoracle(\wh{\theta}) - \qoracle(\thetaoracle) = o_P(T^{-a}) 
\end{equation}

Let $a > 0$ and $\epsilon > 0$. 
Define the event $E_T \equiv \{ T^a (\qoracle(\wh{\theta}) - \qoracle(\thetaoracle)) > \epsilon\}$.  
\begin{align*}
\prob(E_T) \leq \prob(E_T \cap \{\wh{\theta}, \thetaoracle \in \neighborhood_{\eta}\}) + \prob(\wh{\theta} \not \in \neighborhood_{\eta} \; \text{or} \; \thetaoracle \not \in \neighborhood_{\eta}) = \prob(E_T \cap \{\wh{\theta}, \thetaoracle \in \neighborhood_{\eta}\}) + o(1)
\end{align*}

The final equality follows from a union bound and consistency of $\wh{\theta}$ and $ \thetaoracle$.
On the event $E_T \cap \{\wh{\theta}, \thetaoracle \in \neighborhood_{\eta}\}$, we have
\begin{align*}
0 \leq \qoracle(\wh{\theta}) - \qoracle(\thetaoracle) &= (\qoracle(\wh{\theta}) - \qconc(\wh{\theta})) + (\qconc(\wh{\theta}) - \qconc(\thetaoracle)) + (\qconc(\thetaoracle) - \qoracle(\thetaoracle)) \\
&\leq 2 \sup_{\theta \in \neighborhood_{\eta}} |(\qoracle - \qconc)(\theta)|
\end{align*}

where we used that $(\qconc(\wh{\theta}) - \qconc(\thetaoracle)) \leq 0$ by the definition of $\wh{\theta}$. 
Then using the inequality above, apparently 
\[
\prob(E_T) \leq \prob \left (  T^a \cdot 2 \sup_{\theta \in \neighborhood_{\eta}} |(\qoracle - \qconc)(\theta)| > \epsilon \right ) + o(1) = o(1)
\]

by equation \ref{equation:qoracle_equivalence}.  
This completes the proof of \ref{equation:conv_on_qoracle}. 
We now show a curvature lower bound for $\qoracle$.
For every $1 \leq \ell \leq G$ and each $x \in [k_{\ell}]$, $\thetaoracle \in \argmin_{\theta \in \Theta} \qoracle(\theta)$ implies 

\begin{equation} \label{equation:qoracle_foc}
0 = \nabla_{\theta_{\ell x}} \qoracle(\thetaoracle) = \frac{2}{NT} \sum_{i: c_{i \ell}^0 = x} \sum_{t=1}^T (y_{it} - x_{it}' \thetaoracle(c_i^0)) x_{it}^{\ell}
\end{equation}

Define $\erroracle_{it} \equiv (y_{it} - x_{it}'\thetaoracle(c_i^0)$ and compute
\begin{align*}
\qoracle(\wh{\theta}) - \qoracle(\thetaoracle) &= \frac{1}{NT} \sum_{i, t} (y_{it} - x_{it}' \wh{\theta}(c_i^0))^2 - \frac{1}{NT} \sum_{i, t}(y_{it} - x_{it}' \thetaoracle(c_i^0))^2 \\
&=  \frac{1}{NT} \sum_{i, t} (y_{it} - x_{it}'\thetaoracle(c_i^0) + x_{it}' [\thetaoracle(c_i^0) - \wh{\theta}(c_i^0)])^2 - \frac{1}{NT} \sum_{i, t}(y_{it} - x_{it}' \thetaoracle(c_i^0))^2 \\
&=  \frac{1}{NT} \sum_{i, t} (x_{it}' [\thetaoracle(c_i^0) - \wh{\theta}(c_i^0)])^2 + \frac{1}{NT} \sum_{i, t} \erroracle_{it} x_{it}'[\thetaoracle(c_i^0) - \wh{\theta}(c_i^0)]
\end{align*}

We claim that the second term is identically zero.
Define a map\footnote{For $S_1$ and $S_2$ subsets of the same vector space, we define $S_1 - S_2 \equiv \{s_1 - s_2: s_i \in S_i, i=1,2\}$.} $F: \Theta - \Theta \to \mathbb{R}$ by $F(\theta) = \sum_{i, t} \erroracle_{it} x_{it}'\theta(c_i^0)$.
Note that for any $1 \leq \ell \leq G$, we can write 
\begin{align*}
F(\theta) &= \sum_{i=1}^N \sum_{t=1}^T \erroracle_{it} x_{it}'\theta(c_i^0) = \sum_{t=1}^T \sum_{x \in [k_{\ell}]} \sum_{i: c^0_{i\ell} =x} \erroracle_{it} x_{it}'\theta(c_i^0) = \sum_{t=1}^T \sum_{x \in [k_{\ell}]} \sum_{i: c^0_{i\ell} =x} \sum_{\ellaux} \erroracle_{it} \lag x^{\ellaux}_{it}, \theta^{\ellaux}(c_i^0) \rag \\
&= \sum_{t=1}^T \sum_{x \in [k_{\ell}]} \sum_{i: c^0_{i\ell} =x} \sum_{\ellaux \not = \ell} \erroracle_{it} \lag x^{\ellaux}_{it}, \theta^{\ellaux}(c_i^0) \rag +  \sum_{x \in [k_{\ell}]} \sum_{i: c^0_{i\ell} =x} \sum_{t=1}^T \erroracle_{it} \lag x^{\ell}_{it}, \theta^{\ell}(c_i^0) \rag \\
&= \sum_{t=1}^T \sum_{x \in [k_{\ell}]} \sum_{i: c^0_{i\ell} =x} \sum_{\ellaux \not = \ell} \erroracle_{it} \lag x^{\ellaux}_{it}, \theta^{\ellaux}(c_i^0) \rag  
\end{align*}

where we have used that $\sum_{i: c^0_{i\ell} =x} \sum_{t=1}^T \erroracle_{it} (x^{\ell}_{it})'\theta^{\ell}(c_i^0) = 0$ for each $x$ by the first order condition \ref{equation:qoracle_foc}. 
Since the last expression doesn't involve $\theta^{\ell}$, we conclude that for any $\theta \in \text{Dom}(F)$, the equality $F(\theta^{\ell}, \theta^{-\ell}) = F(0, \theta^{-\ell})$ holds.
Applying this fact inductively, we find that $F = F(0) = 0$ identically. 
In particular, $F(\thetaoracle - \wh{\theta}) = 0$, which is what we needed to show. 
Then similar to the proof of \ref{thm:consistency}, we calculate 
\begin{align*}
\qoracle(\wh{\theta}) - \qoracle(\thetaoracle) &= \frac{1}{NT} \sum_{i, t} (x_{it}' [\thetaoracle(c_i^0) - \wh{\theta}(c_i^0)])^2 = \sum_{c \in \mathcal{C}} (\thetaoracle(c) - \wh{\theta}(c))' \left (\frac{1}{NT} \sum_{i, t} \one(c_i^0 = c) x_{it}x_{it}' \right )'(\thetaoracle(c) - \wh{\theta}(c)) \\
&\geq \sum_{c \in \mathcal{C}} \|\thetaoracle(c) - \wh{\theta}(c)\|^2 \lambda_{min}(M_{NT}^c) \geq \sum_{c \in \mathcal{C}} \|\thetaoracle(c) - \wh{\theta}(c)\|^2 \min_{c' \in \mathcal{C}} \lambda_{min}(M_{NT}^{c'}) 
\end{align*}

Define $W_{NT} \equiv \min_{c' \in \mathcal{C}} \lambda_{min}(M_{NT}^{c'})$, so that $W_{NT}^c \geq 0$ by positive semi-definiteness of $M_{NT}^c$ for all $N, T, c$. 
Also denote $E_{NT} = \{W_{NT} > \evalboundtwo / 2\}$.
Then by assumption \ref{assumptions:inference:evals}, $\prob(E_{NT}) = o(1)$. 
We have 
\begin{align*}
\sum_{c \in \mathcal{C}} \|\thetaoracle(c) - \wh{\theta}(c)\|^2 \min_{c' \in \mathcal{C}} \lambda_{min}(M_{NT}^{c'}) &= \sum_{c \in \mathcal{C}} \|\thetaoracle(c) - \wh{\theta}(c)\|^2 (\evalboundtwo / 2 + (W_{NT} - \evalboundtwo / 2)) \\
&\geq \sum_{c \in \mathcal{C}} \|\thetaoracle(c) - \wh{\theta}(c)\|^2 (\evalboundtwo/2 + (W_{NT} - \evalboundtwo / 2) \one(W_{NT} < \evalboundtwo /2)) \\
&=\sum_{c \in \mathcal{C}} \|\thetaoracle(c) - \wh{\theta}(c)\|^2 \evalboundtwo/2 + o_P(T^{-a})
\end{align*}

In the last line we used the compactness assumption \ref{consistency:compact}, the fact that $|W_{NT} - \evalboundtwo/2| \leq \evalboundtwo / 2$ on $E_{NT}^c$, and $T^a \one(E_{NT}^c) = o_P(1)$ for any $a > 0$ since $\prob(E_{NT}) \to 1$ by assumption \ref{assumptions:inference:evals}.  
Combining this with equation \ref{equation:conv_on_qoracle} shows that $\sup_{c \in \mathcal{C}} \|\thetaoracle(c) - \wh{\theta}(c) \| = o_P(T^{-a})$, which completes the proof of part \ref{equation:equivalence} of the theorem. \\

For the second part of the theorem \ref{equation:cluster_assignment} on cluster assignment, note that for $\eta$ satisfying the conditions in lemma \ref{lemma:cluster_errors}, using the bounds developed in the proof of the lemma we find that 

\begin{align*}
\prob(\exists i: \, \wh{c}_i \not = c_i^0) &\leq \prob ( \exists i: \, \exists \theta \in \neighborhood_{\eta}: \, \wh{c}_i(\theta) \not = c_i^0 \, \text{and} \, \wh{\theta} \in \neighborhood_{\eta}) + \prob(\wh{\theta} \not \in \neighborhood_{\eta})\\
&\leq \sum_i  \prob(\exists \theta \in \neighborhood_{\eta}: \wh{c}_i(\theta) \not = c_i^0) + o(1) = \sum_i \mathbb{E}[\sup_{\theta \in \neighborhood_{\eta}} \one(\wh{c}_i(\theta) \not = c_i^0)] + o(1)\\ 
&\leq \sum_i \mathbb{E} \sup_{\theta \in \neighborhood_{\eta}} \sum_{c \in \mathcal{C}} \sum_{c' \not = c} Z_{ic}(c', \theta) + o(1) \leq \sum_i \sum_{c \in \mathcal{C}} \sum_{c' \not = c} \mathbb{E} \sup_{\theta \in \neighborhood_{\eta}} Z_{ic}(c', \theta) + o(1) \\
&= o(NT^{-a}) + o(1)
\end{align*}

This completes the proof of the theorem.

\subsection{Proof of Theorem \ref{thm:cp}} \label{proof:cp} 

In this section, we prove consistency of model selection for the Cp criterion defined in the main text. 
The assumptions of theorem \ref{thm:cp} (stated in assumption \ref{assumptions:mo}) are imposed everywhere in this section. 
First we need some additional definitions. 
Let $\Theta^k = \prod_{\ell} \mathbb{R}^{d_{\ell} \times k_{\ell}}$ be the parameter space for a model with $k = (k_1, \dots, k_{\numgroups})$ clusters. 
Let $\mathcal{C}_k = \prod_i [k_i]$ and $\Gamma_{k} = \left [ \, [N] \to \mathcal{C}_k \, \right ]$ denote the set of possible cluster labels and cluster labelings of the cross-sectional units, where we may have $k \not = k^0$, the true number of clusters in each group. \\

Define $\baserisk = \samplerisk(\theta^0, \gamma^0) = \frac{1}{NT} \sum_{i, t} e_{it}^2$ to be the sample risk evaluated at the true model.
We begin with the following lemma on the sample risk of different models.  

\begin{lemma} \label{lemma:cp:risk_order}
The following hold
\begin{enumerate}[label=(\roman*), itemsep=1mm]
\item If $k = k^0$, then $\samplerisk(k) - \baserisk = O_p(\frac{1}{NT})$
\item If $k > k^0$, then $\samplerisk(k) - \baserisk = o_p(T^{-1 + 3 \epsilon})$
\item If $k$ is such that $k_i < k_i^0$ for some $i$, then $\samplerisk(k) - \baserisk = \Omega(1) + o_p(1)$ 
\end{enumerate}
\end{lemma}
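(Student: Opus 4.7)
I would split the three cases and reuse the uniform convergence $\samplerisk - \qtilde = o_p(1)$ from Lemma \ref{unif:conv}, together with the identity $\qtilde(\theta,\gamma) - \baserisk = \frac{1}{NT}\sum_{i,t}(x_{it}'(\theta(c_i)-\theta^0(c_i^0)))^2$ already used in the proof of Theorem \ref{thm:consistency}. Case (i) is handled by reducing to the oracle via Theorem \ref{thm:inference}; case (ii) by combining the averaged rate in \eqref{equation:overspecified_thm:avg_loss} of Theorem \ref{thm:overspecified} with Cauchy-Schwarz; case (iii) by a Pigeonhole argument. The main obstacle is case (iii), where we need the combinatorial gap $|\mathcal{C}_k| < |\mathcal{C}_{k^0}|$ to translate into a uniform lower bound.

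\textbf{Case (i): $k = k^0$.} Theorem \ref{thm:inference} gives $\samplerisk(k^0) = \qoracle(\thetaoracle) + o_p(T^{-a})$ for any $a>0$. Then I repeat the first-order-condition computation from the end of the proof of Theorem \ref{thm:inference} with $\wh{\theta}$ replaced by $\theta^0$: the map $\theta \mapsto \frac{1}{NT}\sum \erroracle_{it}x_{it}'\theta(c_i^0)$ vanishes identically, so the cross term in $\qoracle(\thetaoracle) - \baserisk$ simplifies and
\[
\qoracle(\thetaoracle) - \baserisk = -\frac{1}{NT}\sum_{i,t}(x_{it}'(\thetaoracle-\theta^0)(c_i^0))^2.
\]
Because $\|\thetaoracle(c)-\theta^0(c)\| = O_p(1/\sqrt{NT})$ (from the oracle normal equations, using assumption \ref{assumptions:mo:ci_uncorrelated} for the score and assumption \ref{assumption:mo:lambdamin} for invertibility), and $\|M_{NT}^c\|_{op} = O_p(1)$, summing over the finitely many $c \in \mathcal{C}_{k^0}$ gives $O_p(1/(NT))$. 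Under $\log N = o(T^\epsilon)$, the residual $o_p(T^{-a})$ is absorbed by choosing $a$ large.

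\textbf{Case (ii): $k > k^0$.} Embedding any $k^0$-model into the larger $k$-space by cluster replication yields $\samplerisk(k) \leq \samplerisk(k^0)$, so the upper bound follows from part (i). For the lower bound, set $\Delta_i = \theta^0(c_i^0) - \thetak(\wh{c}_i^k)$ and expand
\[
\samplerisk(k) - \baserisk = \frac{2}{NT}\sum_{i,t}e_{it}x_{it}'\Delta_i + \frac{1}{NT}\sum_{i,t}(x_{it}'\Delta_i)^2.
\]
The quadratic term is non-negative, and Cauchy-Schwarz across $i$ bounds the absolute value of the cross term by $2\bigl(\frac{1}{N}\sum_i\|\Delta_i\|^2\bigr)^{1/2}\bigl(\frac{1}{N}\sum_i\|\frac{1}{T}\sum_t e_{it}x_{it}\|^2\bigr)^{1/2}$. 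Equation \eqref{equation:overspecified_thm:avg_loss} controls the first factor by $o_p(T^{-1/2+3\epsilon/2})$, and the mixing/tail assumptions \ref{assumptions:inference:mixing}--\ref{assumptions:inference:tails} together with assumption \ref{consistency:dependence} give $O_p(T^{-1/2})$ for the second. The product is $o_p(T^{-1+3\epsilon})$, finishing (ii).

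\textbf{Case (iii): some $k_i < k_i^0$.} This is the genuinely new step and where I expect the main obstacle. Using Lemma \ref{unif:conv} and the decomposition that appears in the proof of Theorem \ref{thm:consistency},
\[
\samplerisk(k) - \baserisk \geq \qtilde(\wh{\theta}^k,\wh{\gamma}^k) - \baserisk - o_p(1) \geq \inf_{\theta \in \Theta^k,\, \gamma \in \Gamma_k}\sum_{c,c'}\rho(c,c',\gamma)\|\theta(c)-\theta^0(c')\|^2 - o_p(1).
\]
For each true $c' \in \mathcal{C}_{k^0}$ and each $\gamma \in \Gamma_k$, let $c^*(c',\gamma) = \argmax_c \rho(c,c',\gamma)$; invoking the natural extension of assumption \ref{consistency:eigenvalue} to $\Gamma_k$ gives $\rho(c^*(c',\gamma), c', \gamma) \geq \delta - o_p(1)$ uniformly. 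Since $|\mathcal{C}_k| < |\mathcal{C}_{k^0}|$, the map $c' \mapsto c^*(c',\gamma)$ cannot be injective, so there are distinct $c_1',c_2' \in \mathcal{C}_{k^0}$ with $c^*(c_1') = c^*(c_2') = x^*$. The triangle inequality plus assumption \ref{consistency:separation} then yield
\[
\|\theta(x^*)-\theta^0(c_1')\|^2 + \|\theta(x^*)-\theta^0(c_2')\|^2 \geq \tfrac{1}{2}\|\theta^0(c_1')-\theta^0(c_2')\|^2 \geq \tfrac{1}{2}d_*^2,
\]
where $d_* \equiv \min_{\ell, a \neq b}\|\theta^0_{\ell a}-\theta^0_{\ell b}\| > 0$. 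Hence $\samplerisk(k)-\baserisk \geq \frac{\delta d_*^2}{2} - o_p(1) = \Omega(1) + o_p(1)$. The two subtle points are (a) confirming that the eigenvalue bound $\rho(c^*,c',\gamma) \geq \delta - o_p(1)$ survives when $\gamma$ ranges over $\Gamma_k$ rather than the true-cluster index set, and (b) noting that $c^*$ depends on $\gamma$ but not on $\theta$, which is exactly what makes the Pigeonhole step work uniformly before the infimum over $\theta$ is taken.
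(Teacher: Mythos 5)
Your cases (i) and (ii) are essentially sound and track the paper's own route: (i) reduces to the oracle (the paper instead packages this as Lemma \ref{lemma:model:qk}, a direct decomposition of $\samplerisk(\wh{\theta},\wh{\gamma})-\baserisk$ that handles misclassified units explicitly, but both paths use the same ingredients); (ii) is exactly the second statement of Corollary \ref{cor:mo:rates} --- note you only need the one-sided bound $\samplerisk(k)\leq\baserisk$ (feasibility of $(\theta^0,\gamma^0)$ after embedding) plus the Cauchy--Schwarz control of the cross term, and the second factor is $o_p(T^{-1/2+\epsilon})$ from Lemma \ref{lemma:mo:deviations} rather than $O_p(T^{-1/2})$, which still lands inside $o_p(T^{-1+3\epsilon})$.

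Case (iii) has a genuine gap: your pigeonhole is applied to the wrong sets. The hypothesis is only that $k_\ell < k_\ell^0$ for \emph{some} block $\ell$; the remaining blocks may be over-specified, so $|\mathcal{C}_k| = \prod_j k_j$ need not be smaller than $|\mathcal{C}_{k^0}| = \prod_j k_j^0$. For instance with $\numgroups=2$, $k^0=(2,2)$ and $k=(1,6)$ we have $|\mathcal{C}_k|=6>4=|\mathcal{C}_{k^0}|$, and the map $c'\mapsto c^*(c',\gamma)$ can perfectly well be injective, so no collision $c^*(c_1')=c^*(c_2')$ is guaranteed and your lower bound evaporates. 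Such mixed $k$ are exactly the ones summed over in the first block of the model-selection proof, so they cannot be excluded. The fix is to exploit the blockwise additive separability $\min_{x\in\mathcal{C}^k}\|\theta(x)-\theta^0(c')\|^2=\sum_{j}\min_{x_j\in[k_j]}\|\theta_j(x_j)-\theta^0_j(c'_j)\|^2 \geq \min_{x_\ell\in[k_\ell]}\|\theta_\ell(x_\ell)-\theta^0_\ell(c'_\ell)\|^2$ and run the pigeonhole on the single under-specified block: the map $j\mapsto\argmin_{i\in[k_\ell]}\|\wh{\theta}^k_{\ell i}-\theta^0_{\ell j}\|$ from $[k_\ell^0]$ into $[k_\ell]$ cannot be injective, and the triangle inequality with assumption \ref{consistency:separation} then forces $\min_{x}\|\wh{\theta}^k(x)-\theta^0(c')\|^2\geq(\clusterdistmin/2)^2$ for every $c'$ with the offending $\ell$-th coordinate. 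This is what the paper does. Your flagged concern (a) --- that the eigenvalue bound must hold with $\gamma$ ranging over $\Gamma_k$ for under-specified $k$ --- is real but is shared by the paper's proof, which implicitly invokes the extension of assumption \ref{consistency:eigenvalue} discussed in appendix \ref{appendix:eigenvalue_discussion} for these $k$ as well.
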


\begin{proof}[Proof of (i) and (iii)]
Statement (i) follows from lemma \ref{lemma:model:qk} in the supplemental appendix. 
We note that if $k = k_0$, then $\wh{\theta}$ satisfies the conditions of lemma \ref{lemma:model:qk} by our inference result theorem \ref{thm:clt} and lemma \ref{lemma:cluster_errors} above on the convergence of average classification risk. \\

For the proof of part (iii), first define $\deltatheta^k_i = \theta^0(c_i^0) - \wh{\theta}^k(\wh{c}^k)$ and recall that 
\begin{equation} \label{equation:model:riskbound_small}
\samplerisk(k) - \baserisk = \frac{1}{NT}\sum_{i, t} (x_{it}'\deltatheta_i^k)^2 +  \frac{1}{NT} \sum_{i, t} e_{it} x_{it}' \deltatheta_i^k 
\end{equation}

The expression $\frac{1}{NT} \sum_{i, t} e_{it} x_{it}' \deltatheta_i$ was already shown to be $o_p(1)$ uniformly over $\deltatheta_i \in \Theta$ in equation \ref{equation:cross_term_bound} in the consistency proof.
Similarly, the first term was analyzed in equation \ref{equation:consistency:curvature}. 
The exact same argument as before shows that for arbitrary $(\theta^k, \gamma^k) \in \Theta^k \times \Gamma^k$
\begin{align*}
&\frac{1}{NT}\sum_{i, t} (x_{it}'(\theta^0(c_i^0) - \theta^k(c_i^k))^2 \\
&= \frac{1}{NT} \sum_{i, t} \sum_{c \in \mathcal{C}^k} \sum_{c' \in \mathcal{C}^{k_0}} (\theta^k(c) - \theta^0(c'))'x_{it}x_{it}'(\theta^k(c) - \theta^0(c')) \mathds{1}(c_i^0 = c')\mathds{1}(c_i^k = c) \\
&= \sum_{c \in \mathcal{C}^k} \sum_{c' \in \mathcal{C}^{k_0}} (\theta^k(c) - \theta^0(c'))' \left ( \frac{1}{NT} \sum_{i, t} x_{it} x_{it}' \mathds{1}(c_i^0 = c')\mathds{1}(c_i^k = c) \right )(\theta^k(c) - \theta^0(c'))  \\
&\geq \sum_{c \in \mathcal{C}^k} \sum_{c' \in \mathcal{C}^{k_0}} \|\theta^k(c) - \theta^0(c')\|^2 \rho(c, c', \gamma) \\
& \geq \sum_{c \in \mathcal{C}^k} \sum_{c' \in \mathcal{C}^{k_0}} \rho(c, c', \gamma) \max_{x \in \mathcal{C}^k} \|\theta^k(x) - \theta^0(c')\|^2 \\
& \geq \sum_{c' \in \mathcal{C}^{k_0}} \min_{\Tilde{c} \in \mathcal{C}^{k_0}} \min_{\gamma^k \in \Gamma^k} \max_{c \in \mathcal{C}^k} \rho(c, \Tilde{c}, \gamma) \min_{x \in \mathcal{C}^k} \|\theta^k(x) - \theta^0(c')\|^2 \\
& \geq (\delta - o_P(1)) \sum_{c' \in \mathcal{C}^{k_0}} \min_{x \in \mathcal{C}^k} \|\theta^k(x) - \theta^0(c')\|^2  
\end{align*}

We claim that $\max_{c' \in \mathcal{C}^{k_0}} \min_{x \in \mathcal{C}^k} \|\wh{\theta}^k(x) - \theta^0(c')\|^2 = \Omega(1)$. 
Let $1 \leq \ell \leq \numgroups$ be such that $k_{\ell} < k^0_{\ell}$ and define $\sigma(j) = \argmin_i \|\wh{\theta}^k_{\ell i} - \theta^0_{\ell j}\|$.
Since $k_{\ell} < k^0_{\ell}$, by the pigeonhole principle $\sigma(j) = \sigma(i)$ for some $i, j \in [k^0_{\ell}]$. 
Then by cluster separation (assumption \ref{assumptions:consistency}) 
\begin{equation*}
0 < \clusterdistmin \leq \|\theta^0_{\ell j} - \theta^0_{\ell i}\| \leq \|\theta^0_{\ell j} - \wh{\theta}^k_{\ell \sigma(j)} \| + \|\wh{\theta}^k_{\ell \sigma(j)} - \wh{\theta}^k_{\ell \sigma(i)} \| + \|\wh{\theta}^k_{\ell \sigma(i)} - \theta^0_{\ell i}\|
\end{equation*}

Since the middle term on the RHS is $0$, $\max (\|\theta^0_{\ell j} - \wh{\theta}^k_{\ell \sigma(j)} \|, \|\theta^0_{\ell i} - \wh{\theta}^k_{\ell \sigma(i)} \|) > \clusterdistmin / 2$.
Without loss suppose the max is achieved at $i$. 
Then for any $c'$ with $c'_{\ell} = i$, we have $\min_{x \in \mathcal{C}^k} \|\wh{\theta}^k(x) - \theta^0(c')\|^2 \geq (\clusterdistmin/2)^2$.
Plugging in $(\wh{\theta}^k, \wh{\gamma}^k)$ into our uniform bound above, we find
\begin{align*}
\frac{1}{NT}\sum_{i, t} (x_{it}'(\theta^0(c_i^0) - \wh{\theta}^k(\wh{c}^k_i))^2 \geq (\delta - o_P(1)) \sum_{c' \in \mathcal{C}^{k_0}} \min_{x \in \mathcal{C}^k} \|\wh{\theta}^k(x) - \theta^0(c')\|^2 \geq  \delta (\clusterdistmin / 2)^2 - o_p(1)  
\end{align*}

where we have used compactness of $\Theta$ in the final line.
Then we have shown that $\samplerisk(k) - \baserisk \geq \delta (\clusterdistmin / 2)^2 + o_p(1)$, which completes the proof of (ii). 
\end{proof}

For the proof of part (i), we need to develop some extra machinery. 
In this section, we denote $\model = (\theta, \gamma) \in \Theta \times \Gamma$, and let $\model^k$ and $\modeltrue$ be parameter, cluster label pairs in $\Theta^k \times \Gamma^k$ and $\Theta^{k_0} \times \Gamma^{k_0}$ respectively.  
We denote $\modelk = (\thetak, \gammak)$ and $\modeltrue = (\theta^0, \gamma^0)$. 
Define
\begin{equation*}
\avgdist(\model, \model') \equiv \frac{1}{N} \sum_i (\theta(c_i) - \theta'(c_i'))^2
\end{equation*}

The following key lemma forms the backbone of our inductive approach for establishing (near) $\sqrt{T}$-consistency for over-specified estimators.

\begin{lemma} \label{lemma:mo:recursion}
Let $k \geq k_0$ and $\rootrate \equiv T^{-\frac{1}{2} + \epsilon}$.
Then for any sequence $\thetarate = o(1)$, we have 
\begin{align}
\avgdist(\modelk, \modeltrue) = O_p(\thetarate) \implies \avgdist(\modelk, \modeltrue) = o_p(\thetarate^{1/2}\rootrate)
\end{align}
\end{lemma}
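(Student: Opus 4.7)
The plan is to exploit optimality of $\modelk$, extract a curvature lower bound on the squared prediction error, and localize the resulting cross-term using the hypothesis $\avgdist(\modelk,\modeltrue) = O_p(\thetarate)$. Since $k \geq k^0$, the truth $\modeltrue$ remains feasible in the over-specified problem (pad with duplicate clusters), so $\samplerisk(\modelk) \leq \samplerisk(\modeltrue)$. Expanding via $y_{it} = x_{it}'\theta^0(c_i^0) + e_{it}$ and writing $\deltatheta_i^k \equiv \thetak(\wh{c}_i^k) - \theta^0(c_i^0)$ gives the basic inequality
\[
\frac{1}{NT}\sum_{i,t}(x_{it}'\deltatheta_i^k)^2 \leq \frac{2}{NT}\sum_{i,t} e_{it} x_{it}'\deltatheta_i^k.
\]
For the LHS I would use the curvature lower bound
$\frac{1}{NT}\sum_{i,t}(x_{it}'\deltatheta_i^k)^2 \geq \left(\min_i \lambda_{\min}(\tfrac{1}{T}\sum_t x_{it}x_{it}')\right) \avgdist(\modelk,\modeltrue) \geq (\eiglowerbound - o_p(1))\avgdist(\modelk,\modeltrue)$,
invoking assumption \ref{assumption:mo:lambdamin} together with uniform (in $i$) concentration of $\frac{1}{T}\sum_t x_{it}x_{it}'$ around its expectation; the uniformity is cheap because $\log N = o(T^\epsilon)$ by assumption \ref{assumptions:mo:epsilon}, so a union bound combined with a pointwise Fuk--Nagaev tail bound suffices.

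The essential work is on the cross term. Decomposing by joint cluster assignment,
\[
\frac{1}{NT}\sum_{i,t} e_{it} x_{it}'\deltatheta_i^k = \sum_{c\in\clusterspace^k}\sum_{c'\in\clusterspace^{k_0}} (\thetak(c)-\theta^0(c'))' U_{c,c'}(\wh\gamma^k),
\]
where $U_{c,c'}(\gamma) \equiv \frac{1}{NT}\sum_{i,t} e_{it}x_{it}\one(\gamma(i)=c)\one(c_i^0=c')$. Using the hypothesis to restrict attention to the high-probability event $\{\avgdist \leq C\thetarate\}$ and applying weighted Cauchy--Schwarz across the $(c,c')$ pairs factors $\sqrt{\thetarate}$ out of the inner products, leaving a bound of the form $|\text{cross term}| \leq O_p(\sqrt{\thetarate}) \cdot \sup_{\gamma\in\Gamma^k}\max_{c,c'}\|U_{c,c'}(\gamma)\|$. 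I would then show $\sup_{\gamma}\max_{c,c'}\|U_{c,c'}(\gamma)\| = o_p(\rootrate)$ via the sharp Fuk--Nagaev deviation inequality of lemma \ref{lemma:mo:deviations}, absorbing the union bound over cluster labelings into the $T^\epsilon$ slack in $\rootrate = T^{-1/2+\epsilon}$. Combining with the curvature bound gives $(\eiglowerbound - o_p(1))\avgdist \leq O_p(\sqrt{\thetarate})\cdot o_p(\rootrate) = o_p(\sqrt{\thetarate}\,\rootrate)$, yielding the claim.

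The main obstacle is the concentration bound for $\sup_{\gamma}\max_{c,c'}\|U_{c,c'}(\gamma)\|$ at the sharp $o_p(T^{-1/2+\epsilon})$ rate. The supremum is over $|\clusterspace^k|^N$ cluster labelings, so a naive union bound over $\gamma$ costs $\exp(o(T^\epsilon))$ by assumption \ref{assumptions:mo:epsilon}, which can be absorbed into $\rootrate$ only if the Fuk--Nagaev inequality of \cite{Rio2011} is applied with tail parameters carefully chosen as a function of the mixing and moment exponents $d_1, d_2$. This is precisely where the paper's sharpening over \cite{LiuOverspecifiedGroups} enters, and is what drives the recursive boosting to its near-optimal fixed point $\avgdist = o_p(T^{-1+2\epsilon})$.
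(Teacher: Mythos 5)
Your overall architecture matches the paper's: use optimality of $\modelk$ plus feasibility of the truth under $k \geq k^0$ to get the basic inequality, lower-bound the quadratic form via $\inf_j \lambda_{min}(\frac{1}{T}\sum_t E[x_{jt}x_{jt}'])$ plus a uniform-in-$i$ concentration of $\frac{1}{T}\sum_t x_{it}x_{it}'$, and extract $\sqrt{\thetarate}$ from the cross term by Cauchy--Schwarz. However, your treatment of the cross term contains a step that fails as described. You propose to control $\sup_{\gamma \in \Gamma^k}\max_{c,c'}\|U_{c,c'}(\gamma)\|$ by a union bound over the $|\clusterspace^k|^N$ cluster labelings, asserting this costs $\exp(o(T^{\epsilon}))$ ``by assumption \ref{assumptions:mo:epsilon}.'' That assumption says $\log N = o(T^{\epsilon})$, not $N = o(T^{\epsilon})$; the number of labelings is $\exp(N\log|\clusterspace^k|)$, so the union bound costs $\exp(cN)$, and since $N$ may be as large as $e^{T^{\epsilon/2}}$ under the stated assumptions, this cannot be absorbed by the Fuk--Nagaev tail at deviation scale $T^{-1/2+\epsilon}$. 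As written, the key concentration claim is unjustified.

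The fix --- and what the paper actually does --- is to avoid any supremum over $\gamma$ entirely. Write the cross term as $\frac{1}{N}\sum_i \lag \deltatheta_i^k, \frac{1}{T}\sum_t e_{it}x_{it}\rag$ and apply Cauchy--Schwarz over units $i$ (not over cells $(c,c')$), giving $\left(\frac{1}{N}\sum_i\|\deltatheta_i^k\|^2\right)^{1/2}\sup_{i\in[N]}\left\|\frac{1}{T}\sum_t e_{it}x_{it}\right\| = O_p(\thetarate^{1/2})\,o_p(\rootrate)$, where the supremum over the $N$ per-unit scores is exactly what lemma \ref{lemma:mo:deviations} delivers at cost $\log N = o(T^{\epsilon})$. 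Note that your own quantity already satisfies $\|U_{c,c'}(\gamma)\| \leq \sup_i\|\frac{1}{T}\sum_t e_{it}x_{it}\|$ deterministically for every $\gamma$, so the uniformity over labelings you were trying to prove by union bound is free once you bound things unit-by-unit; no sharpened chaining over $\Gamma^k$ is needed. The same remark applies to the deviation term $\frac{1}{N}\sum_i (\deltatheta_i^k)'\left(\frac{1}{T}\sum_t (x_{it}x_{it}' - E[x_{it}x_{it}'])\right)\deltatheta_i^k$ in your curvature step, which the paper bounds by $O_p(\thetarate)\,o_p(\rootrate)$ using the operator-norm version of the same per-unit supremum.
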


\begin{proof}
In what follows, let $(\thetak, \gammak) = \argmin_{\theta \in \Theta^k, \gamma \in \Gamma^k} Q(\theta, \gamma)$ and again let $\deltatheta^k_i = (\thetak(\wh{c}_i^k) - \theta^0(c_i^0))$. 
With $\qtilde$ defined as in our consistency proof in equation \ref{equation:qtilde_def}, we have 
\begin{align*}
|\qtilde(\thetak, \gammak) - \wh{Q}(\thetak, \gammak)| &= \left | -2 \frac{1}{N} \sum_i \left \lag \deltatheta^k_i,  \left ( \frac{1}{T} \sum_t e_{it} x_{it} \right ) \right \rag \right | \lesssim \frac{1}{N} \sum_i \|\deltatheta^k_i \| \left \|\frac{1}{T} \sum_t e_{it} x_{it} \right \| \\
&\leq \left ( \frac{1}{N} \sum_i  \|\deltatheta^k_i\|^2 \right )^{1/2} \left (\frac{1}{N} \sum_i \left \| \frac{1}{T} \sum_t e_{it} x_{it} \right \|^2 \right )^{1/2}\\
&\leq \left ( \frac{1}{N} \sum_i  \|\deltatheta^k_i\|^2 \right )^{1/2} \left (\sup_{i \in [N]} \left \| \frac{1}{T} \sum_t e_{it} x_{it} \right \| \right )\\
&= O_p(\thetarate^{1/2}) o_p(\rootrate) = o_p(\thetarate^{1/2} \rootrate)
\end{align*}

The second to last equality holds by our assumption and applying lemma \ref{lemma:mo:deviations}.  
Now we reason
\begin{align*}
\qtilde(\thetak, \gammak) &= \wh{Q}(\thetak, \gammak) + [\qtilde(\thetak, \gammak) - \wh{Q}(\thetak, \gammak)] \leq \wh{Q}(\thetak, \gammak) + |\qtilde(\thetak, \gammak) - \wh{Q}(\thetak, \gammak)| \\
&= \wh{Q}(\thetak, \gammak) +  o_p(\thetarate^{1/2} \rootrate)\\
&\leq \wh{Q}(\theta^0, \gamma^0) + o_p(\thetarate^{1/2} \rootrate) = \qtilde(\theta^0, \gamma^0) + o_p(\thetarate^{1/2} \rootrate) 
\end{align*}

The inequality holds because $k \geq k^0 \implies$ $(\theta^0, \gamma^0)$ is in the parameter space of the misspecified estimator.\footnote{Specifically, there exist $\theta^k \in \Theta^k$ and $\gamma^k \in \Gamma^k$ such that the $N \times p$ matrix with ith row $\theta^0(\gamma^0(i)) = \theta^k(\gamma^k(i))$ for all $i \in [N]$}
This shows that $0 \leq \qtilde(\thetak, \gammak) - \qtilde(\theta^0, \gamma^0) \leq o_P(\thetarate^{1/2} \rootrate)$. 
Then by above we have 
\begin{align*}
o_p(\thetarate^{1/2} \rootrate) &\geq \qtilde(\thetak, \gammak) - \qtilde(\theta^0, \gamma^0) = \frac{1}{NT} \sum_{i, t} (x_{it}'\deltatheta^k_i)^2 = \frac{1}{N} \sum_i (\deltatheta^k_i)' \left ( \frac{1}{T} \sum_t x_{it} x_{it}' \right )\deltatheta^k_i \\
&= \frac{1}{N} \sum_i (\deltatheta^k_i)' \left ( \frac{1}{T} \sum_t E[x_{it} x_{it}'] \right )\deltatheta^k_i + \frac{1}{N} \sum_i (\deltatheta^k_i)' \left ( \frac{1}{T} \sum_t (x_{it} x_{it}' - E[x_{it} x_{it}']) \right )\deltatheta^k_i \\
&\geq \frac{1}{N} \sum_i (\deltatheta^k_i)' \left ( \frac{1}{T} \sum_t E[x_{it} x_{it}'] \right )\deltatheta^k_i - \left | \frac{1}{N} \sum_i (\deltatheta^k_i)' \left ( \frac{1}{T} \sum_t (x_{it} x_{it}' - E[x_{it} x_{it}']) \right )\deltatheta^k_i \right |  
\end{align*}

Now again applying the triangle inequality, Cauchy-Schwarz, and the definition of an operator norm we have 
\begin{align*}
\left | \frac{1}{N} \sum_i (\deltatheta^k_i)' \left ( \frac{1}{T} \sum_t (x_{it} x_{it}' - E[x_{it} x_{it}']) \right )\deltatheta^k_i \right | &\leq \frac{1}{N} \sum_i \|\deltatheta^k_i\|^2 \left \|  \frac{1}{T} \sum_t (x_{it} x_{it}' - E[x_{it} x_{it}']) \right \| \\
&\leq \frac{1}{N} \sum_i \|\deltatheta^k_i\|^2 \sup_{j \in [N]} \left \|  \frac{1}{T} \sum_t (x_{jt} x_{jt}' - E[x_{jt} x_{jt}']) \right \|  \\
&=O_p(\thetarate) o_p(\rootrate) = o_p(\thetarate \cdot \rootrate)
\end{align*}

where the last equality uses lemma \ref{lemma:mo:deviations}. 
Then continuing the chain of inequalities above we have 
\begin{align*}
 o_p(\thetarate^{1/2} \rootrate) &\geq \qtilde(\thetak, \gammak) - \qtilde(\theta^0, \gamma^0) \geq \frac{1}{N} \sum_i  \|\deltatheta^k_i\|^2 \min_{j \in [N]} \lambda_{min} \left ( \frac{1}{T} \sum_t E[x_{jt} x_{jt}'] \right ) - o_p(\thetarate \cdot \rootrate) 
\end{align*}

By assumption $\thetarate = o(1)$, so collecting the $o_p$ terms on the LHS and defining $\lambda_{NT}$ to be the eigenvalue term on the RHS, we have
\begin{align*}
o_p(\thetarate^{1/2} \rootrate) &\geq \frac{1}{N} \sum_i  \|\deltatheta^k_i\|^2 \lambda_{NT} \geq \frac{1}{N} \sum_i  \|\deltatheta^k_i\|^2 (\eiglowerbound / 2 + (\lambda_{NT} - \eiglowerbound / 2) \one(\lambda_{NT} \leq \eiglowerbound / 2)) \\  
&\geq \eiglowerbound / 2 \frac{1}{N} \sum_i  \|\deltatheta^k_i\|^2 - \left | \frac{1}{N} \sum_i \|\deltatheta^k_i\|^2 (\lambda_{NT} - \eiglowerbound / 2) \one(\lambda_{NT} \leq \eiglowerbound / 2)) \right |  \\  
&\geq \eiglowerbound / 2 \frac{1}{N} \sum_i  \|\deltatheta^k_i\|^2 - (\eiglowerbound /2) M^2  \one(\lambda_{NT} \leq \eiglowerbound / 2)) \geq \eiglowerbound / 2 \frac{1}{N} \sum_i  \|\deltatheta^k_i\|^2 - o(\thetarate^{1/2} \rootrate) 
\end{align*}

The second to last inequality follows by assumption \ref{assumptions:mo:evals} and compactness. 
The final inequality holds because indicator functions that converge to $0$ do so at arbitrary rate. 
This completes the proof of the lemma. 
\end{proof}

\begin{cor} \label{cor:mo:rates}
For any $k \geq k_0$
\begin{align}
\avgdist(\modelk, \modeltrue) &= o_p(T^{-1 + 3 \epsilon}) \\
\samplerisk(k) - \samplerisk^0 &= o_p(T^{-1 + 3 \epsilon}) 
\end{align}
\end{cor}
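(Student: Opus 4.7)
The plan is to iterate Lemma \ref{lemma:mo:recursion} to bootstrap a crude rate on $\avgdist(\modelk, \modeltrue)$ up to the target $o_p(T^{-1+3\epsilon})$. Writing $\thetarate = T^\alpha$, each application of the lemma replaces an $O_p(T^\alpha)$ bound with $o_p(T^{\alpha/2 + (-1/2+\epsilon)})$, so on the exponent scale the update is the affine contraction $\alpha \mapsto \alpha/2 + (-1/2+\epsilon)$ with unique fixed point $\alpha^\star = -1+2\epsilon$. Since $\alpha_n - \alpha^\star = (\alpha_0 - \alpha^\star)\, 2^{-n}$ decays geometrically, any target $\alpha < -1+3\epsilon$ is reached after a number of iterations depending only on $\epsilon$ (specifically $n > \log_2((1-2\epsilon)/\epsilon)$ steps, starting from $\alpha_1 = -1/2+\epsilon$).

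First I would establish the base case $\avgdist(\modelk, \modeltrue) = o_p(1)$, which then supplies $\avgdist = O_p(\thetarate_0)$ for some deterministic $\thetarate_0 = o(1)$. Because $k \geq k^0$, the true model $(\theta^0,\gamma^0)$ embeds in the larger parameter space $\Theta^k \times \Gamma^k$, so the minimizing property of $(\thetak,\gammak)$ combined with the uniform bound $\sup |\qhat - \qtilde| = o_p(1)$ (an extension of Lemma \ref{unif:conv} to $\Theta^k \times \Gamma^k$, relying only on Assumption \ref{consistency:dependence}) yields $\qtilde(\modelk) - \qtilde(\modeltrue) = o_p(1)$. The per-unit eigenvalue curvature step from the proof of Lemma \ref{lemma:mo:recursion}, using Assumption \ref{assumption:mo:lambdamin} and the sample-to-population concentration of Lemma \ref{lemma:mo:deviations}, then translates this risk gap directly into $\avgdist = o_p(1)$.

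Next I would iterate. A single application of Lemma \ref{lemma:mo:recursion} upgrades $\thetarate_0 = o(1)$ (possibly subpolynomial) to $\thetarate_1 = o(\thetarate_0^{1/2}\rootrate) = o(T^{-1/2+\epsilon})$, which is polynomial; subsequent iterates admit polynomial exponents governed by the contraction above, and after finitely many iterations the exponent falls strictly below $-1+3\epsilon$, giving the first claim. For the sample-risk claim, the proof of Lemma \ref{lemma:mo:recursion} in fact already shows that both $\qtilde(\modelk) - \qtilde(\modeltrue)$ and the cross term $\tfrac{1}{NT}\sum_{i,t} e_{it} x_{it}' \deltatheta_i^k$ are $o_p(\thetarate^{1/2}\rootrate)$ whenever $\avgdist = O_p(\thetarate)$. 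Since $\samplerisk(k) - \baserisk$ is the sum of these two pieces (up to sign and a factor of two), plugging in the final iterate yields the matching $o_p(T^{-1+3\epsilon})$ bound.

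The main obstacle is the base case. The consistency argument behind Theorem \ref{thm:consistency} only directly controls $\max_{c'} \min_{x \in \mathcal{C}^k} \|\thetak(x) - \theta^0(c')\|^2$, which does not upper bound $\avgdist$, because the estimated cluster labels $\wh{c}^k_i$ are chosen to minimize the sample criterion rather than parameter distance to $\theta^0(c_i^0)$. The remedy is to trade the $\delta$-type bound of Assumption \ref{assumptions:mo:evals} for the pointwise-in-$i$ eigenvalue lower bound of Assumption \ref{assumption:mo:lambdamin}, which bounds the quadratic form $\tfrac{1}{N}\sum_i (\deltatheta_i^k)'(\tfrac{1}{T}\sum_t x_{it} x_{it}')\deltatheta_i^k$ below by a constant multiple of $\avgdist$ and hence converts the risk gap into the desired assignment-weighted rate. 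A secondary bookkeeping point is that converting the $o_p$ output of one iteration into the $O_p$ input of the next requires selecting a deterministic majorizing sequence, but this is routine and the $\epsilon$-slack in the target exponent absorbs any resulting loss.
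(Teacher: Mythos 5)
Your proposal is correct and follows essentially the same route as the paper: iterate Lemma \ref{lemma:mo:recursion} so that the exponent contracts under $\alpha \mapsto \alpha/2 - 1/2 + \epsilon$ toward the fixed point $-1+2\epsilon$, stop after finitely many steps once below $-1+3\epsilon$, and then obtain the sample-risk bound from the decomposition of $\samplerisk(k) - \baserisk$ into the quadratic term and the cross term already controlled in that lemma's proof. The only cosmetic difference is your extra care over the base case: the paper simply starts the induction from $\avgdist(\modelk,\modeltrue) = O_p(1)$ via compactness of $\Theta$ (exponent $c_0 = \epsilon$), so the separate $o_p(1)$ consistency step you outline, while valid, is not needed.
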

\begin{proof} 
We claim that for all $r \geq 0$, we have $\avgdist(\modelk, \modeltrue) = O_p \left (T^{c_r} \right )$, where $c_r = -(1 - \frac{1}{2^r}) + \epsilon \sum_{j = 0}^r 2^{-j}$.
The proof is by induction. 
The base case $c_0 = \epsilon$ is immediate since $\avgdist(\modelk, \modeltrue) = O_p(1)$ by compactness of $\Theta$.
Assume the statement is true for all $0 \leq m \leq r$, then by lemma \ref{lemma:mo:recursion}
\begin{align*}
\avgdist(\modelk, \modeltrue) = O_p(T^{c_r}) \implies \avgdist(\modelk, \modeltrue) = o_p \left (T^{c_r / 2} \cdot T^{-1/2 + \epsilon}\right)
\end{align*}

and $c_r/2 - 1/2 + \epsilon = -(1/2 - \frac{1}{2^{r + 1}}) + \epsilon \sum_{j = 0}^r 2^{-j - 1} - 1/2 + \epsilon = -(1 - \frac{1}{2^{r + 1}}) + \epsilon \sum_{j = 0}^{r + 1} 2^{-j}$, which completes the induction.  
In particular, the first statement of the corollary holds as soon as $2^{-r} \leq \epsilon$.
For the second statement of the corollary, recall that
\begin{align*}
\samplerisk(k) - \samplerisk^0 = \frac{1}{NT}\sum_{i, t} (x_{it}'\deltatheta_i^k)^2 +  \frac{1}{NT} \sum_{i, t} e_{it} x_{it}' \deltatheta_i^k 
\end{align*}

The proof of lemma \ref{lemma:mo:recursion}, showed that $\avgdist(\modelk, \modeltrue) = o_p(\thetarate) \implies  \frac{1}{NT} \sum_{i, t} e_{it} x_{it}' \deltatheta_i^k = o_p(\thetarate^{1/2} \rootrate)$. 
Under the same conditions 
\begin{align*}
\frac{1}{NT}\sum_{i, t} (x_{it}'\deltatheta_i^k)^2 \leq \frac{1}{NT}\sum_{i, t} \|x_{it}\|^2 \|\deltatheta_i^k\|^2  &\leq \frac{1}{N}\sum_i \|\deltatheta_i^k\|^2  \sup_{j \in [N]} \frac{1}{T} \sum_t \|x_{jt}\|^2 \\
&\leq o_p(\thetarate) O_p(1) = o_p(\thetarate)
\end{align*}

That $\sup_{i \in [N]} \frac{1}{T} \sum_t \|x_{it}\|^2$ is $O_p(1)$ can easily be shown by a union bound in combination with assumption \ref{assumptions:inference:xbound} (as long as $NT^{-a} = o(1)$ for some $a > 0$).
Putting this together, we get that $\samplerisk(k) - \samplerisk^0 = o_p(T^{-1 + 3 \epsilon}) + o_p(T^{-1/2 + 3 \epsilon/2 - 1/2 + \epsilon}) = o_p(T^{-1 + 3 \epsilon})$. 
This completes the proof of the corollary and of the first part of lemma \ref{lemma:cp:risk_order}. 
\end{proof}

\begin{prop} \label{prop:mo:maxmin_rate}
For any $k \geq k^0$
\begin{align}
\forall c \in \clusterspace^{k_0} \quad \min_{x \in \mathcal{C}^k} \|\wh{\theta}^k(x) - \theta^0(c)\|^2 &= o_p(T^{-1 + 3 \epsilon}) 
\end{align}
\end{prop}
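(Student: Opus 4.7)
The plan is to combine the curvature lower bound on $\qtilde$ used in the proof of Theorem \ref{thm:consistency} (now invoked with the over-specified eigenvalue condition \ref{assumptions:mo:evals}) with the average-distance rate that Corollary \ref{cor:mo:rates} has just established. The key observation is that the curvature calculation in the consistency proof actually controls the worst-case object $\max_{c' \in \clusterspace^{k_0}} \min_{x \in \clusterspace^k} \|\thetak(x) - \theta^0(c')\|^2$, so any rate for $\qtilde(\thetak, \gammak) - \qtilde(\theta^0, \gamma^0)$ lifts verbatim to a rate on that quantity. Nothing fundamentally new is required beyond these two ingredients.

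\textbf{Step 1 (upper bound on $\qtilde$).} Writing $\deltatheta_i^k = \thetak(\wh{c}_i^k) - \theta^0(c_i^0)$, equation \ref{equation:qtilde_def} gives
\[
\qtilde(\thetak, \gammak) - \qtilde(\theta^0, \gamma^0) = \frac{1}{NT}\sum_{i,t}(x_{it}'\deltatheta_i^k)^2 \leq \Bigl(\sup_{j \in [N]} \frac{1}{T}\sum_t \|x_{jt}\|^2\Bigr) \cdot \frac{1}{N}\sum_i \|\deltatheta_i^k\|^2.
\]
The first factor is $O_p(1)$ by a union bound combined with assumption \ref{assumptions:inference:xbound} (exactly as recorded at the end of the proof of Corollary \ref{cor:mo:rates}), and the second factor is $\avgdist(\modelk, \modeltrue) = o_p(T^{-1+3\epsilon})$ by Corollary \ref{cor:mo:rates} itself. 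Hence $\qtilde(\thetak, \gammak) - \qtilde(\theta^0, \gamma^0) = o_p(T^{-1+3\epsilon})$.

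\textbf{Step 2 (curvature lower bound).} Repeating the calculation around equation \ref{equation:consistency:curvature} with $c \in \clusterspace^k$ and $c' \in \clusterspace^{k_0}$, and for each fixed $c'$ retaining only the summand that attains $\max_{c \in \clusterspace^k} \rho(c, c', \gammak)$, yields
\[
\qtilde(\thetak, \gammak) - \qtilde(\theta^0, \gamma^0) \geq \sum_{c' \in \clusterspace^{k_0}} \Bigl(\max_{c \in \clusterspace^k} \rho(c, c', \gammak)\Bigr) \min_{x \in \clusterspace^k} \|\thetak(x) - \theta^0(c')\|^2.
\]
Assumption \ref{assumptions:mo:evals}, applied uniformly in $c' \in \clusterspace^{k_0}$ and $\gamma \in \Gamma^k$, lower-bounds the $\rho$-factor by $\delta - o_p(1)$, so that
\[
\qtilde(\thetak, \gammak) - \qtilde(\theta^0, \gamma^0) \geq (\delta - o_p(1)) \max_{c' \in \clusterspace^{k_0}} \min_{x \in \clusterspace^k} \|\thetak(x) - \theta^0(c')\|^2.
\]

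Chaining the two bounds delivers $\max_{c' \in \clusterspace^{k_0}} \min_{x \in \clusterspace^k} \|\thetak(x) - \theta^0(c')\|^2 = o_p(T^{-1+3\epsilon})$, which is equivalent to the proposition since $\clusterspace^{k_0}$ is finite. I do not anticipate a real obstacle: the heavy lifting sits in Corollary \ref{cor:mo:rates} and in the curvature step recycled from the consistency proof. The only conceptual point is that the curvature inequality naturally produces a max-min (not an average) distance, so it upgrades the average rate from Corollary \ref{cor:mo:rates} to the worst-case rate needed here essentially for free.
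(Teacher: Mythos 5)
Your proposal is correct and follows essentially the same route as the paper: an $o_p(T^{-1+3\epsilon})$ upper bound on $\qtilde(\thetak,\gammak)-\qtilde(\theta^0,\gamma^0)$ obtained from the average-distance rate in Corollary \ref{cor:mo:rates}, chained against the curvature lower bound $\sum_{c'}\max_{c}\rho(c,c',\gammak)\min_{x}\|\thetak(x)-\theta^0(c')\|^2$ and assumption \ref{assumptions:mo:evals}. The only cosmetic difference is that the paper handles the division by the $\delta-o_p(1)$ factor via its indicator-function trick plus compactness, which you elide but which poses no real difficulty.
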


\begin{proof}
Applying corollary \ref{cor:mo:rates}, we find that
\begin{align*}
o_p(T^{-1 + 3 \epsilon}) &\geq \qtilde(\thetak, \gammak) - \qtilde(\theta^0, \gamma^0) = \frac{1}{NT}\sum_{i, t} (x_{it}'(\theta^0(c_i^0) - \wh{\theta}^k(\cest^k))^2 \\
& \geq \min_{\Tilde{c} \in \mathcal{C}^{k_0}} \min_{\gamma^k \in \Gamma^k} \max_{c \in \mathcal{C}^k} \rho(c, \Tilde{c}, \gamma) \max_{c' \in \mathcal{C}^{k_0}} \min_{x \in \mathcal{C}^k} \|\theta^k(x) - \theta^0(c')\|^2 \\
&= \max_{c' \in \mathcal{C}^{k_0}} \min_{x \in \mathcal{C}^k} \|\theta^k(x) - \theta^0(c')\|^2 (\delta / 2 + (\rho^k_{NT} - \delta/2) \one(\rho^k_{NT} - \delta/2 \leq 0)) \\ 
&= (\delta / 2) \cdot \max_{c' \in \mathcal{C}^{k_0}} \min_{x \in \mathcal{C}^k} \|\theta^k(x) - \theta^0(c')\|^2 - o_p(T^{-1 + 3 \epsilon}) \\ 
&\implies \max_{c \in \mathcal{C}^{k_0}} \min_{x \in \mathcal{C}^k} \|\wh{\theta}^k(x) - \theta^0(c)\|^2 = o_p(T^{-1 + 3 \epsilon}) 
\end{align*}

The final equality again follows by compactness of $\Theta$, positivity of $\rho^k_{NT}$, and because indicator functions that converge to $0$ (in probability) do so at arbitrary rates. 
Since the square norm above is additively separable in the norms of each block of the covariate vector, for any $c, c' \in \clusterspace^{k_0}$ with $c_{\ell} = c'_{\ell}$, we must have $(\argmin_{x \in \clusterspace^k} \|\wh{\theta}^k(x) - \theta^0(c)\|^2)_{\ell} = (\argmin_{x \in \clusterspace^k} \|\wh{\theta}^k(x) - \theta^0(c')\|^2)_{\ell}$. 
This shows that setting $\sigma_{\ell}(a) = (\argmin_{x \in \clusterspace^k} \|\wh{\theta}^k(x) - \theta^0(c)\|^2)_{\ell}$ for any $c \in \clusterspace^{k_0}$ with $c_{\ell} = a$ is well-defined.  
\end{proof}

The following proposition is our analogue of Theorem 3.2 in \cite{LiuOverspecifiedGroups}. 
We use a recursive argument to give a faster rate for the worst case cross-sectional unit error in our setting.

\begin{prop} \label{prop:worst_case_rate}
For any $k \geq k^0$
\begin{align}
\sup_{i \in [N]} \|\thetak(\wh{c}^k_i) - \theta^0(c_i^0)\| = o_p(T^{-\frac{1}{2} + 2 \epsilon})
\end{align}
\end{prop}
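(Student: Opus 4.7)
The plan is a per-unit classification-type argument that exploits the fact that the Lloyd update decouples cluster assignment across $i$: given $\thetak$, the minimizing assignment satisfies $\wh c_i^k \in \argmin_{c \in \mathcal{C}^k} \frac{1}{T}\sum_t (y_{it} - x_{it}'\thetak(c))^2$ separately for each $i$. For every $i$ I would therefore compare $\wh c_i^k$ with the ``good candidate'' $x_i^\ast \equiv \argmin_{x \in \mathcal{C}^k} \|\thetak(x) - \theta^0(c_i^0)\|$ supplied by Proposition \ref{prop:mo:maxmin_rate}. Writing $y_{it} = x_{it}'\theta^0(c_i^0) + e_{it}$, $\Delta_i = \thetak(\wh c_i^k) - \theta^0(c_i^0)$, and $\Delta_i^\ast = \thetak(x_i^\ast) - \theta^0(c_i^0)$, optimality of $\wh c_i^k$ against $x_i^\ast$ yields the basic inequality
\begin{equation*}
\frac{1}{T}\sum_t (x_{it}'\Delta_i)^2 \;\leq\; \frac{1}{T}\sum_t (x_{it}'\Delta_i^\ast)^2 + 2\, (\Delta_i - \Delta_i^\ast)' \Big(\frac{1}{T}\sum_t e_{it} x_{it}\Big).
\end{equation*}

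The next step is to bound all three quantities in this inequality uniformly in $i$. For the LHS, I would combine assumption \ref{assumption:mo:lambdamin} with the operator-norm deviation $\sup_j \| \frac{1}{T}\sum_t (x_{jt}x_{jt}' - E[x_{jt}x_{jt}'])\|_{op} = o_p(1)$ delivered by lemma \ref{lemma:mo:deviations} to conclude that $\inf_j \lambda_{\min}(\frac{1}{T}\sum_t x_{jt}x_{jt}') \geq \eiglowerbound/2$ on an event of probability tending to one, so $\frac{1}{T}\sum_t (x_{it}'\Delta_i)^2 \geq (\eiglowerbound/2)\|\Delta_i\|^2$ uniformly. For the first term on the RHS, $\|\Delta_i^\ast\|^2 \leq \max_{c \in \mathcal{C}^{k_0}} \min_{x \in \mathcal{C}^k} \|\thetak(x) - \theta^0(c)\|^2 = o_p(T^{-1+3\epsilon})$ by Proposition \ref{prop:mo:maxmin_rate} (the max over the finite set $\mathcal{C}^{k_0}$ is free), and $\sup_i \frac{1}{T}\sum_t \|x_{it}\|^2 = O_p(1)$ follows from assumption \ref{assumptions:inference:xbound} plus a union bound permitted by assumption \ref{assumptions:mo:epsilon}, so this term is $o_p(T^{-1+3\epsilon})$. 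For the cross term, Cauchy--Schwarz and $\sup_i \|\frac{1}{T}\sum_t e_{it}x_{it}\| = o_p(T^{-1/2+\epsilon})$ (again lemma \ref{lemma:mo:deviations}) give a bound of $o_p(T^{-1/2+\epsilon})(\|\Delta_i\| + \|\Delta_i^\ast\|)$, uniformly in $i$.

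Setting $M_T = \sup_i \|\Delta_i\|$ and taking suprema reduces the whole argument to the scalar quadratic inequality
\begin{equation*}
(\eiglowerbound/2)\, M_T^2 \;\leq\; o_p(T^{-1+3\epsilon}) + o_p(T^{-1/2+\epsilon})\, M_T,
\end{equation*}
whose solution is $M_T = O_p\!\left(\sqrt{\,o_p(T^{-1+3\epsilon})\,}\right) + o_p(T^{-1/2+\epsilon}) = o_p(T^{-1/2 + 3\epsilon/2}) \subseteq o_p(T^{-1/2 + 2\epsilon})$, which is the claim (the extra room between $3\epsilon/2$ and $2\epsilon$ evidently being absorbed to simplify the statement).

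The main obstacle is the simultaneous uniform-over-$i$ control of the two empirical processes $\frac{1}{T}\sum_t e_{it} x_{it}$ and $\frac{1}{T}\sum_t (x_{it}x_{it}' - E[x_{it}x_{it}'])$: the sharp Fuk--Nagaev-type tail bound in lemma \ref{lemma:mo:deviations}, combined with the sub-polynomial growth constraint $\log N = o(T^\epsilon)$ from assumption \ref{assumptions:mo:epsilon}, is precisely what makes a union bound over the $N$ cross-sectional units harmless at the $T^{-1/2 + \epsilon}$ scale. Without that concentration strength, the per-unit classification argument would be dominated by the $N$-factor from the union bound and the ``worst case'' rate would not improve over the average-in-$i$ rate of Corollary \ref{cor:mo:rates}.
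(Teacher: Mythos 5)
Your proposal is correct and follows essentially the same route as the paper: the per-unit optimality comparison of $\wh{c}_i^k$ against the estimated label nearest to $\theta^0(c_i^0)$, uniform-in-$i$ control of $\frac{1}{T}\sum_t e_{it}x_{it}$ and $\frac{1}{T}\sum_t (x_{it}x_{it}'-E[x_{it}x_{it}'])$ via Lemma \ref{lemma:mo:deviations} with the $\log N = o(T^{\epsilon})$ union bound, and Proposition \ref{prop:mo:maxmin_rate} for the benchmark term. The only difference is the final step: you solve the self-referential quadratic inequality in closed form ($M_T \leq b_T + \sqrt{a_T}$), whereas the paper iterates the same inequality as a recursion in equation \ref{equation:worst_case_recursion}; your one-shot solution is cleaner and in fact delivers the marginally sharper exponent $-\frac{1}{2}+\frac{3\epsilon}{2}$, which, as you note, is absorbed into the stated $o_p(T^{-\frac{1}{2}+2\epsilon})$.
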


\begin{proof}
Define $\wh{Q}_i(\theta, c_i) = \frac{1}{T} \sum_t (y_{it} - x_{it}'\theta(c_i))^2$ and $\qtilde_i(\theta, c_i) = \frac{1}{T} \sum_t (x_{it}'(\theta^0(c_i^0) - \theta(c_i)))^2 + \frac{1}{T} \sum_t e_{it}^2$.
Recall the random cluster mapping $\sigma: \clusterspace^{k_0} \to \clusterspace^k$ defined above. 
Then since $\wh{c}_i$ is the optimal cluster choice given estimated parameters $\wh{\theta}$,  
\begin{align*}
\wh{Q}_i(\thetak, \wh{c}^k_i) &\leq \wh{Q}_i(\thetak, \sigma(c_i^0)) \implies \qtilde_i(\thetak, \wh{c}^k_i) \leq \qtilde_i(\thetak, \sigma(c_i^0)) + (\wh{Q}_i - \qtilde_i)(\thetak, \sigma(c_i^0)) + (\qtilde_i - \wh{Q}_i)(\thetak, \wh{c}^k_i) \\
&\leq \qtilde_i(\thetak, \sigma(c_i^0)) + |\wh{Q}_i - \qtilde_i|(\thetak, \sigma(c_i^0)) + |\qtilde_i - \wh{Q}_i|(\thetak, \wh{c}^k_i) 
\end{align*}

The second term above has
\begin{align*}
\sup_i |\wh{Q}_i - \qtilde_i|(\thetak, \sigma(c_i^0)) &= \sup_i \left | (\theta^0(c_i^0) - \thetak(\sigma(c_i^0)))' \frac{1}{T} \sum_t e_{it} x_{it} \right | \\
&\leq \max_{c \in \mathcal{C}^{k_0}} \min_{x \in \mathcal{C}^k} \|\wh{\theta}^k(x) - \theta^0(c)\| \sup_{i \in [N]} \left \|\frac{1}{T} \sum_t e_{it} x_{it} \right \| \leq o_p(T^{-1 + \frac{5}{2} \epsilon})
\end{align*}

where we apply proposition \ref{prop:mo:maxmin_rate} and lemma \ref{lemma:mo:deviations}.  
Similarly, the third term is 
\begin{align*} 
\sup_i |\qtilde_i - \wh{Q}_i|(\thetak, \wh{c}^k_i) &\leq \sup_{i \in [N]}\|\theta^0(c_i^0) - \thetak(\wh{c}_i^k)\| \sup_{j \in [N]} \left \|\frac{1}{T} \sum_t e_{jt} x_{jt} \right \| \\
&= O_p \left ( \sup_{i \in [N]}\|\theta^0(c_i^0) - \thetak(\wh{c}_i^k)\| \right ) o_p(T^{-\frac{1}{2} + \epsilon})
\end{align*}

Moreover, we reason 
\begin{align*}
\sup_{i \in [N]} |\qtilde_i(\thetak, \sigma(c_i^0)) &- \qtilde_i(\theta^0, c_i^0)| = \sup_{i \in [N]} \frac{1}{T} \sum_t (x_{it}'(\theta^0(c_i^0) - \thetak(\sigma(c_i^0))))^2\\
&\leq \sup_{i \in [N]} \frac{1}{T} \sum_t \|x_{it}\|^2 \|\deltatheta^k(c_i^0, \sigma(c_i^0))\|^2 \leq \sup_{i \in [N]} \frac{1}{T} \sum_t \|x_{it}\|^2 \sup_{j \in [N]} \|\deltatheta^k(c_j^0, \sigma(c_j^0))\|^2 \\
&\leq \sup_{i \in [N]} \frac{1}{T} \sum_t \|x_{it}\|^2 \max_{c \in \clusterspace^{k_0}} \min_{x \in \mathcal{C}^k} \|\wh{\theta}^k(x) - \theta^0(c)\|^2 = O_p(1) o_p(T^{-1 + 3 \epsilon}) 
\end{align*}

That $\sup_{i \in [N]} \frac{1}{T} \sum_t \|x_{it}\|^2$ is $O_p(1)$ can easily be shown by a union bound in combination with assumption \ref{assumptions:inference:xbound} (as long as $NT^{-a} = o(1)$ for some $a > 0$).
Putting this all together, we have
\begin{align*}
0 &\leq \sup_{i \in [N]} [\qtilde_i(\thetak, \wh{c}^k_i) - \qtilde_i(\theta^0, c_i^0)] \\
&\leq \sup_{i \in [N]} [\qtilde_i(\thetak, \sigma(c_i^0)) + \sup_{j \in [N]} [\qtilde_i(\thetak, \wh{c}^k_j) - \qtilde_j(\thetak, \sigma(c_j^0))] - \qtilde_i(\theta^0, c_i^0)]  \\
&= o_p(T^{-1 + 3 \epsilon}) + O_p \left ( \sup_{i \in [N]}\|\theta^0(c_i^0) - \thetak(\wh{c}_i^k)\| \right ) o_p(T^{-\frac{1}{2} + \epsilon})
\end{align*}

where the first $o_p(1)$ is from work above and the second by lemma \ref{lemma:mo:deviations}. 
Now 
\begin{align*}
\sup_{i \in [N]} |\qtilde_i(\thetak, \wh{c}^k_i) - \qtilde_i(\theta^0, c_i^0)| &= \sup_{i \in [N]} |\deltatheta^k_i(\wh{c}^k_i, c_i^0)' \frac{1}{T} \sum_t x_{it} x_{it}'\deltatheta^k_i(\wh{c}^k_i, c_i^0)| \\
&\geq \sup_{i \in [N]} |\deltatheta^k_i(\wh{c}^k_i, c_i^0)' \frac{1}{T} \sum_t E[x_{it} x_{it}']\deltatheta^k_i(\wh{c}^k_i, c_i^0)| \\
&- \sup_{i \in [N]} |\deltatheta^k_i(\wh{c}_i^k, c_i^0)' \frac{1}{T} \sum_t (x_{it} x_{it}' - E[x_{it}x_{it}'])\deltatheta^k_i(\wh{c}_i^k, c_i^0)| \\
&\geq \sup_{i \in [N]} \|\deltatheta^k_i(\wh{c}^k_i, c_i^0)\|^2 \inf_{j \in [N]} \lambda_{min} \left (\frac{1}{T} \sum_t E[x_{jt} x_{jt}'] \right ) - \unifboundxxsup 
\end{align*}

where similar arguments show that
\begin{align*}
\unifboundxxsup = O_p \left ( \sup_{i \in [N]} \|\deltatheta^k_i(\wh{c}^k_i, c_i^0)\|^2 \right ) o_p(T^{-\frac{1}{2} + \epsilon})
\end{align*}

The indicator function trick used in the proof of lemma \ref{lemma:mo:recursion} above then shows that 
\begin{align}
\sup_{i \in [N]} \|\deltatheta^k_i(\wh{c}^k_i, c_i^0)\|^2 &=  O_p \left ( \sup_{i \in [N]} \|\deltatheta^k_i(\wh{c}^k_i, c_i^0)\|^2 \right ) o_p(T^{-\frac{1}{2} + \epsilon}) + o_p(T^{-1 + 3 \epsilon}) \nonumber \\
&+ O_p \left ( \sup_{i \in [N]}\|\deltatheta^k_i(\wh{c}^k_i, c_i^0)\| \right ) o_p(T^{-\frac{1}{2} + \epsilon}) \label{equation:worst_case_recursion}
\end{align}

The remainder of the proof follows by induction. 
For the base case, using compactness in the expression above shows that $\sup_{i \in [N]} \|\deltatheta^k_i(\wh{c}^k_i, c_i^0)\|^2 = o_p(T^{-\frac{1}{2} + \epsilon})$.
The inductive step follows from the recursion in equation \ref{equation:worst_case_recursion}. 
This completes the proof. 
\end{proof}

\subsubsection{Bias}
We also need the following lemma on the order of our proposed bias correction
\begin{lemma} \label{lemma:cp:bias_order}
The following are true 
\begin{enumerate}[label=(\roman*), itemsep=1mm]
\item If $k \geq k^0$, then $\biascorr(k) = O_p(\frac{1}{NT})$
\item If $k$ is such that $k_i < k_i^0$ for some $i$, then $\biascorr(k) = o_p(1)$
\end{enumerate}
\end{lemma}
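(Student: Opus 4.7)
The plan is to split on the two regimes and exploit the rate machinery already built up: for $k \geq k^0$ the true model lies inside the $k$-cluster search space so the orthogonality of fitted residuals is available, while for $k$ with some $k_i < k_i^0$ compactness of $\Theta$ together with law-of-large-numbers control of $\frac{1}{T}\sum_t e_{it}x_{it}$ is enough.

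For part (i), the natural decomposition of $\biascorr(k)$ is (up to the irrelevant constant factor)
\begin{equation*}
\biascorr(k) \;=\; \frac{1}{N}\sum_i (\deltatheta_i^k)'\Bigl(\frac{1}{T}\sum_t e_{it}x_{it}\Bigr),
\qquad \deltatheta_i^k \equiv \thetak(\cest_i^k) - \theta^0(c_i^0),
\end{equation*}
regrouping by true cluster label gives
\begin{equation*}
\biascorr(k) = \sum_{c \in \clusterspace^{k_0}}\Bigl(\thetak(\sigma(c)) - \theta^0(c)\Bigr)' \Bigl(\frac{1}{NT}\sum_{i,t} e_{it}x_{it}\one(c_i^0 = c)\Bigr) + R_{NT},
\end{equation*}
where $\sigma$ is the cluster pairing from Proposition \ref{prop:mo:maxmin_rate} and $R_{NT}$ collects cross-sectional units whose fitted label differs from $\sigma(c_i^0)$. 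The first term is $O_p(\sqrt{1/NT}) \cdot o_p(\sqrt{1/NT}) = o_p(1/NT)$ by Assumption \ref{assumptions:mo:ci_uncorrelated} together with Proposition \ref{prop:mo:maxmin_rate}. For the remainder $R_{NT}$ I would use Proposition \ref{prop:worst_case_rate} to give $\sup_i\|\deltatheta_i^k\|^2 = o_p(T^{-1+4\epsilon})$, Cauchy–Schwarz, and uniform control of $\sup_i\|\frac{1}{T}\sum_t e_{it}x_{it}\|$ from Lemma \ref{lemma:mo:deviations}; this gives a term of order $o_p(T^{-1+3\epsilon})$ which is absorbed once one pulls out the FOC. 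The sharp $O_p(1/NT)$ rate in the end comes from the fact that for indices $i$ with $\cest_i^k = \sigma(c_i^0)$ one can use the first-order condition of $\thetak$, which (as in the proof of Theorem \ref{thm:inference}) produces an orthogonality cancellation; combined with the CLT-order size of $\frac{1}{NT}\sum_{i,t}e_{it}x_{it}\one(c_i^0 = c)$ this closes the bound.

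For part (ii), compactness of $\Theta$ gives $\|\deltatheta_i^k\| \leq 2M$ uniformly, so
\begin{equation*}
|\biascorr(k)| \;\lesssim\; \frac{1}{N}\sum_i \Bigl\|\frac{1}{T}\sum_t e_{it}x_{it}\Bigr\|,
\end{equation*}
and the RHS is $o_p(1)$ by Assumption \ref{assumptions:mo:ci_uncorrelated} applied cluster-by-cluster (together with $|\clusterspace^{k_0}|$ being finite). Note that no curvature or separation is needed here, because the claim is only $o_p(1)$: the absence of the fast rate is precisely because one cannot invoke the FOC when the true parameter is not in the search space.

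The main obstacle is part (i): pushing from the coarse $o_p(T^{-1+3\epsilon})$ rate (which would follow directly from Corollary \ref{cor:mo:rates}) to the parametric $O_p(1/NT)$ rate. This is delicate because when $k > k^0$ the fitted partition may split true clusters arbitrarily, so the FOC only yields orthogonality against the $k$-cluster partition rather than the $k^0$-cluster one. I would address this via the pairing $\sigma$ of Proposition \ref{prop:mo:maxmin_rate} combined with the uniform worst-case rate of Proposition \ref{prop:worst_case_rate}, so that the residual mass on mislabeled units is provably negligible relative to $1/NT$ under the sub-polynomial growth $\log N = o(T^\epsilon)$ of Assumption \ref{assumptions:mo:epsilon}.
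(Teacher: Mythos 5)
There is no proof in the paper to compare against: the paper's proof of this lemma reads, in its entirety, ``TBD, current $\cp$ criterion uses non bias-corrected sample risk,'' and the object $\biascorr(k)$ is in fact never defined anywhere in the text (the stated criterion in equation \ref{equation:cp} contains no bias term). You have therefore had to guess a definition, and your guess --- the cross term $\frac{1}{N}\sum_i (\deltatheta_i^k)'(\frac{1}{T}\sum_t e_{it}x_{it})$ from the risk decomposition --- is the natural one, but the statement you are proving is about an unspecified object, which is worth flagging before anything else.

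Taking your definition as given, part (ii) is fine (compactness plus the corollary to Lemma \ref{lemma:mo:deviations} suffices), but part (i) has a genuine gap. You claim the leading term $\sum_{c}(\thetak(\sigma(c)) - \theta^0(c))'\bigl(\frac{1}{NT}\sum_{i,t} e_{it}x_{it}\one(c_i^0=c)\bigr)$ is $O_p(\sqrt{1/NT})\cdot o_p(\sqrt{1/NT})$, citing Proposition \ref{prop:mo:maxmin_rate} for the second factor. But that proposition only gives $\min_x\|\thetak(x)-\theta^0(c)\|^2 = o_p(T^{-1+3\epsilon})$, i.e.\ $\|\thetak(\sigma(c))-\theta^0(c)\| = o_p(T^{-1/2+3\epsilon/2})$, which is far larger than $(NT)^{-1/2}$ when $N$ grows; the product is then $o_p(N^{-1/2}T^{-1+3\epsilon/2})$, exceeding $1/NT$ by a factor of $\sqrt{N}\,T^{3\epsilon/2}$. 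For $k>k^0$ the paper explicitly emphasizes (Section \ref{section:overspecification}) that $\sqrt{NT}$-consistency of $\thetak$ is \emph{not} available --- that is the point of the over-specification analysis --- so you cannot upgrade the second factor. Nor does the first-order condition rescue you as you suggest: the FOC of $\thetak$ gives orthogonality of the \emph{fitted} residuals against $x_{it}$ over the \emph{fitted} partition $\wh\gamma^k$, whereas the cross term involves the structural errors $e_{it}$ summed over the true partition; the cancellation used in the proof of Theorem \ref{thm:inference} (the map $F$) is of a different object and does not transfer. With the tools available in the paper, the honest conclusion for $k\geq k^0$ is a rate like $o_p(T^{-1+3\epsilon})$ (which is what Corollary \ref{cor:mo:rates} already delivers for the whole of $\samplerisk(k)-\baserisk$), not $O_p(1/NT)$; fortunately the weaker rate is all that the proof of Theorem \ref{thm:cp} actually requires.
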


\begin{proof}
TBD, current $\cp$ criterion uses non bias-corrected sample risk. 
\end{proof}

We are now ready to complete the proof of model selection consistency using our $\cp$ criterion. 
For completeness, suppose that we choose $\wh{k}$ uniformly (independently) at random in the case of a tie. 
Denote $k > k'$ if $k_i \geq k_i'$ for all $i$ and strictly for some index.

\begin{proof}[Proof of Theorem \ref{thm:cp}]
We reason that 
\begin{align*}
\prob(\wh{k} \not = k^0) &\leq \prob(\exists k \not = k^0 \, s.t. \, \cp(k) \leq \cp(k^0)) \\
&\leq \sum_{\substack{k: \exists k_i < k_i^0 \\ k \leq k_{\text{max}}}} \prob(\cp(k) \leq \cp(k^0)) + \sum_{\substack{k > k^0 \\ k \leq k_{\text{max}}}} \prob(\cp(k) \leq \cp(k^0))
\end{align*}

For $k$ in the first summation (with $k_i < k_i^0$ for some $i$), we have
\begin{align*}
\prob (\cp(k) \leq \cp(k^0)) &= \prob \left (\samplerisk(k) - \samplerisk(k^0) + \biascorr(k) - \biascorr(k^0) \leq \sum_i (k_i^0 - k_i) f(N, T) \right)\\
&=\prob \left ([\samplerisk(k) - \baserisk] - [\samplerisk(k^0) - \baserisk] + \biascorr(k) - \biascorr(k^0) \leq \sum_i (k_i^0 - k_i) f(N, T) \right) \\
&=\prob \left (\Omega(1) + o_p(1) + O_p(1/NT) + o_p(1) \leq o(1) \right) = \prob(\Omega(1) \leq o_p(1)) \to 0
\end{align*}

Where we have applied lemmas \ref{lemma:cp:risk_order} and \ref{lemma:cp:bias_order} in the final line. 
Similarly, for $k$ in the second summation (with $k \geq k^0$ and $k_j > k_j^0$ for some $j$) 
\begin{align*}
\prob (\cp(k) \leq \cp(k^0)) &=\prob \left ([\samplerisk(k) - \baserisk] - [\samplerisk(k^0) - \baserisk] + \biascorr(k) - \biascorr(k^0) \leq \sum_i (k_i^0 - k_i) f(N, T) \right) \\
&\leq \prob \left (2 \cdot O_p(1/NT) + o_p(T^{-1 + 3 \epsilon}) \leq -f(N, T) \right)\\
&= \prob \left (O_p \left (\frac{1}{NT^{3 \epsilon}} \right ) + o_p(1) \leq -T^{1-3\epsilon} f(N, T) \right ) \to 0
\end{align*}

since $T^{1-3 \epsilon} f(N, T) \to \infty$ by assumption.
Because each of the sums above is finite, this shows that $\prob(\wh{k} \not = k^0) = o(1)$, which completes the proof.
\end{proof}

\subsection{Variance Estimator Consistency} \label{proof:variance_estimation} 

In this section, we sketch how to adapt Hansen (2007)'s proof of the consistency of the Arellano (1987) HAC variance estimator to the estimator proposed in equation \ref{equation:variance_estimator} under asymptotics where $N, T \to \infty$ jointly. 
To use Hansen's proof, we impose the following assumptions, as well as compactness \ref{consistency:compact} and mixing conditions \ref{assumptions:inference:mixing}.
Throughout, we assume that $1 \in x_{it}$.

\begin{assumption} \label{assumptions:variance} 
Impose the following assumptions
\begin{enumerate}[label={(\alph*)}, ref ={\ref*{assumptions:variance}.(\alph*)}, itemindent=.5pt, itemsep=.5pt]
    \item $(e_{it}, x_{it}, c^0_i$) are cross-sectionally independent \label{assumptions:variance:independence}
    \item There exists $c < \infty$ and $\delta > 0$ such that for all $i, t$, and components $x_{itq}$ of $x_{it}$, we have $\mathbb{E}|x_{itq}|^{4 + \delta} < c$ and $\mathbb{E}|e_{it}|^{4 + \delta} < c$ \label{assumptions:variance:moments}
\end{enumerate} 
\end{assumption}

First note that for any $a \in [k_{\ell}]$ and $b \in [k_s]$ and $\epsilon > 0$ and $\delta > 0$, we have 

\begin{align}
\prob( \|\wh{\Omega}_{\ell a, sb}(\wh{c}) - \Omega_{\ell a, sb} \| > \epsilon) &\leq \prob( \|\wh{\Omega}_{\ell a, sb}(c^0) - \Omega_{\ell a, sb} \| > \epsilon) + \prob \left (\exists i \in [N] \, s.t. \, \wh{c}_i \not = c_i^0 \right ) \nonumber \\
&=\prob( \|\wh{\Omega}_{\ell a, sb}(c^0) - \Omega_{\ell a, sb} \| > \epsilon) + o(1) + O(NT^{-\delta}) \label{equation:variance_decomposition} 
\end{align}

So for consistency it suffices to focus on the estimator $\wh{\Omega}(c^0)$ defined by \ref{equation:variance_estimator} evaluated at the true cluster membership matrix. \\

Let $Z_i = \one(c_{i\ell}^0 = a) \one(c_{is}^0 = b) \in \{0, 1\}$. 
Define $\deltatheta(c) \equiv \theta^0(c) - \wh{\theta}(c)$ and $\deltatheta_i \equiv \deltatheta(c_i^0)$. 
Using this notation, we have
\begin{align*}
\wh{\Omega}_{\ell a, sb} &= \frac{1}{NT} \sum_{i, t, t'} \wh{e}_{it} \wh{e}_{it'}x_{it \ell} x_{it's}' Z_i \\
&= \frac{1}{NT} \sum_{i, t, t'} (e_{it} e_{it'} + e_{it} x_{it'}'\deltatheta_i + x_{it}' \deltatheta_i e_{it'} + \deltatheta_i'x_{it'} x_{it}' \deltatheta_i) x_{it \ell} x_{it's}' Z_i 
\end{align*}

We focus on just one term in the $d_{\ell} \times d_s$ matrix $x_{it\ell}x_{it's}'$, which we denote $x_{itp}x_{it'q}$. 
Then, for instance, the second term in the preceding expansion can be written as 
\begin{align*}
\frac{1}{NT} \sum_i Z_i \left ( \sum_t e_{it} x_{itp} \right ) &\left (\sum_{t'} x_{it'} x_{it'q} \right )' \deltatheta_i \\
&= \frac{1}{NT} \sum_i Z_i \sum_{c \in \mathcal{C}} \one(c_i^0 = c) \left ( \sum_t e_{it} x_{itp} \right ) \left (\sum_{t'} x_{it'} x_{it'q} \right )' \deltatheta_i \\
&=\frac{1}{NT} \sum_{c \in \mathcal{C}} \sum_i Z_i \one(c_i^0 = c) \left ( \sum_t e_{it} x_{itp} \right ) \left (\sum_{t'} x_{it'} x_{it'q} \right )' \deltatheta(c) 
\end{align*}

Note that $\mathcal{C}$ is finite. 
Each term inside the sum $\sum_{c \in \clusterspace}$ above 

\[
\frac{1}{NT} \sum_i Z_i \one(c_i^0 = c) \left ( \sum_t e_{it} x_{itp} \right ) \left (\sum_{t'} x_{it'} x_{it'q} \right )' \deltatheta(c)
\]

has the form of equation (O.2) in the supplementary appendix of \cite{Hansen2007Robust}, up to the extra term $Z_i \one(c_i^0 = c)$.
However, since $E\|Z_i \one(c_i^0 = c) v\| \leq E\|v\|$ for any vector $v$, these extra terms preserve the moment bounds needed for application of the Markov LLN (Hansen, Lemma A.2). \\

To show the moment bound $E\|(\sum_t e_{it} x_{itp})(\sum_t' x_{it'} x_{it'q}) \|^{1 + \delta}$ needed for the Markov LLN, Hansen's Theorem 3 and Lemma A.4 assume decay rates on the mixing coefficients $\alpha(t)$ of $(x_{it}, e_{it})$.
Our exponential mixing assumption \ref{assumptions:inference:mixing} is already sufficient for the polynomial rate used in his proof. 
Thus, Hansen's results apply to show that $\frac{1}{NT} \sum_{i, t, t'} e_{it} x_{it'}'\deltatheta_i x_{it \ell} x_{it's}' = O_p(\frac{1}{\sqrt{n}})$. 
The arguments from Hansen's proof similarly show that under the conditions in assumption \ref{assumptions:variance} the third term in equation \ref{equation:variance_decomposition} is $O_p(\frac{1}{\sqrt{n}})$, the fourth term is $O_p(\frac{1}{n})$, and the first term converges to the limit postulated in \ref{assumptions:clt:omega}.

\section{Supplementary Lemmas} \label{section:supplement}
In the following lemma, we show that $\max_i \var(\ov{x}_i' \Delta \theta) = o(1)$, needed for the proof of \ref{thm:fe:inference}.  
The proof is an application of methods developed in \cite{Rio1993}. 
Also see \cite{Rio2017} for a more complete exposition of covariance inequalities for strongly-mixing processes. 

\begin{lemma} \label{lemma:fe:variance_xi}
Under the fixed effects assumptions \ref{assumptions:fe:inference}, $\max_i \var(\ov{x}_i' \Delta \theta) \to 0$ as $T \to \infty$. 
\end{lemma}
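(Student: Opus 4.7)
The plan is to expand the variance as a double sum of covariances across time and control each covariance via Rio's covariance inequality for strongly mixing sequences, exploiting the exponential mixing rate in assumption \ref{assumptions:inference:mixing} and a uniform moment bound on the components of $x_{it}$. Writing $\ov{x}_i = \frac{1}{T}\sum_t x_{it}$ and $Y_{it} = x_{it}'\Delta\theta$, we have
\begin{equation*}
\var(\ov{x}_i' \Delta \theta) = \frac{1}{T^2}\sum_{t=1}^T \sum_{s=1}^T \cov(Y_{it}, Y_{is}).
\end{equation*}
Since $Y_{it}$ is a fixed linear combination of the components of $x_{it}$, the sequence $\{Y_{it}\}_t$ inherits the strong-mixing coefficients $\alpha(t)$ of $\{x_{it}\}_t$ from assumption \ref{assumptions:inference:mixing}. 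Under the fixed-effects moment assumptions (which I assume imply $\sup_{i,t}\|Y_{it}\|_{2+\delta} \leq K < \infty$ for some $\delta>0$), Rio's covariance inequality (\cite{Rio1993}, or the version in Theorem 1.1 of \cite{Rio2017}) gives
\begin{equation*}
|\cov(Y_{it}, Y_{is})| \;\leq\; C\,\alpha(|t-s|)^{\delta/(2+\delta)} \|Y_{it}\|_{2+\delta} \|Y_{is}\|_{2+\delta} \;\leq\; C'\, \alpha(|t-s|)^{\delta/(2+\delta)},
\end{equation*}
with $C'$ independent of $i$.

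Substituting this bound and grouping by lag $k = |t-s|$ yields
\begin{equation*}
\var(\ov{x}_i' \Delta\theta) \;\leq\; \frac{2 C'}{T}\sum_{k=0}^{T-1} \alpha(k)^{\delta/(2+\delta)}.
\end{equation*}
The exponential mixing condition $\alpha(t) \leq e^{-bt^{d_1}}$ in assumption \ref{assumptions:inference:mixing} makes the series $\sum_{k\geq 0} \alpha(k)^{\delta/(2+\delta)}$ finite, so the right-hand side is $O(1/T)$. Because none of the constants depend on $i$, taking the max over $i \in [N]$ preserves the bound, giving $\max_i \var(\ov{x}_i'\Delta\theta) = O(1/T) \to 0$.

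The main obstacle is not in the covariance bound itself, which is a standard Rio-type estimate, but in ensuring the uniformity in $i$: one needs that (i) the mixing coefficients $\alpha(t)$ apply uniformly across cross-sectional units (which is guaranteed by assumption \ref{assumptions:inference:mixing}) and (ii) the $(2+\delta)$-moments of $Y_{it}$ are bounded uniformly in $i, t$. The latter should follow from the analogue of the tail bound in assumption \ref{assumptions:inference:tails} together with compactness of the parameter space (so that $\|\Delta\theta\|$ is bounded). If the fixed-effects assumptions in \ref{assumptions:fe:inference} instead give an $L^2$ tail rather than a $(2+\delta)$ moment, one would substitute the corresponding quantile-function version of Rio's inequality, $|\cov(Y_{it},Y_{is})| \leq 4\int_0^{\alpha(|t-s|)} Q_{Y_{it}}(u) Q_{Y_{is}}(u)\, du$, and exploit exponential mixing to still obtain a summable bound.
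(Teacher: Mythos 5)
Your argument is correct, and it reaches the same $O(1/T)$ conclusion by the same basic mechanism — a strong-mixing covariance inequality applied to the time average — but the technical route differs from the paper's in a way worth noting. The paper first reduces to the scalar components $\ov{x}_{ik}$ via Cauchy--Schwarz and compactness, then invokes the \emph{quantile-function} form of Rio's variance bound, $\var(\ov{x}_i) \leq \frac{4}{T^2}\sum_{t}\int_0^1 \alpha\inv(u)Q_{x_{it}}^2(u)\,du$; most of its work then goes into handling the non-centered quantile function (via $Q_x(u)\leq |Ex| + Q_{x-Ex}(u)$, which in turn forces it to control $\sup_i E[\frac{1}{T}\sum_t\|x_{it}\|^2]$ separately) and into the explicit bound $Q_{z_{it}}(u)\leq f(1-\log u)^{1/d_2}$ followed by showing $\int_0^1\alpha\inv(u)Q^2(u)\,du<\infty$. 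You instead use the moment-based Davydov form $|\cov(Y_{it},Y_{is})|\leq C\,\alpha(|t-s|)^{\delta/(2+\delta)}\|Y_{it}\|_{2+\delta}\|Y_{is}\|_{2+\delta}$ and sum over lags, which is cleaner: the exponential tail bound in assumption \ref{assumptions:fe:inference:tails} gives uniform moments of \emph{every} order for the centered components, so any $\delta>0$ works and the lag series converges under the exponential mixing rate. The one loose end in your write-up is that the tail assumption controls $x_{it}^j - Ex_{it}^j$, not $x_{it}^j$ itself, so you should apply the covariance inequality to the centered variables $Y_{it}-EY_{it}$ (which is free, since covariance is translation-invariant and $\|\Delta\theta\|$ is bounded by compactness); with that one-line fix your bound holds uniformly in $i$ exactly as claimed, and you avoid the paper's detour through $\sup_i E[\frac{1}{T}\sum_t\|x_{it}\|^2]=O(1)$ entirely.
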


\begin{proof}
For a sequence of mixing coefficients $\{\alpha(t)\}_{t \geq 0}$ define $\alpha \inv(u) = \sum_{t \geq 0} \one(\alpha(t) > u)$ for $0 \leq u \leq 1$.
Also, for scalar random variable $X$ we let $Q_X(u) \equiv \inf\{t \geq 0: P(|X| > t) \leq u\}$ be the reversed quantile function of $|X|$.
First note that
\begin{align*}
\var( \ov{x}_i' \Delta \theta) &= \var \left ( \sum_{k=1}^p \Delta \theta_k \ov{x}_{ik} \right) \leq \sum_{k,j} \var(\ov{x}_{ik})^{1/2} \var(\ov{x}_{ij})^{1/2} |\Delta \theta_k| |\Delta \theta_j| \\
&\leq M^2 \left ( \sum_k \var(\ov{x}_{ik})^{1/2} \right)^2 \leq pM^2 \sum_{k=1}^p \var(\ov{x}_{ik})
\end{align*}

The first inequality is from Cauchy-Schwarz, the second from compactness, and the final from Jensen's inequality. 
Then apparently it suffices to prove that $\max_i \var(\ov{x}_{ik}) \to 0$ as $T \to \infty$ for each $k$. 
Thus, in what follows we assume that $x_{it}$ is a scalar random variable. 
Corollary 1.1 of \cite{Rio2017} gives the bound 
\begin{equation} \label{equation:rio_bound}
\var(\ov{x}_i) \leq \frac{4}{T^2} \sum_{t \geq 0} \int_0^1 \alpha \inv(u) Q^2_{x_{it}}(u) du
\end{equation}

We claim that for any random variable $x \in L^1$, the inequality $Q_{x}(u) \leq |Ex| + Q_{x-Ex}(u)$ holds. 
Note that for $t \geq 0$, 
\begin{align*}
\prob(|x - Ex| > t) \leq u \implies \prob(|x| > t + |Ex|) \leq \prob(||x| - |Ex|| > t) \leq \prob(|x - Ex| > t) \leq u 
\end{align*}

by the reverse triangle inequality. 
Then we have shown that 
\begin{align*}
&\{t + |Ex|: t \geq 0, \prob( |x-Ex|>t) \leq u \} \subset \{t \geq 0: \prob(|x| > t) \leq u\} \\ 
&\implies  \inf \{t \geq 0: \prob(|x| > t) \leq u\} \leq |Ex| + \inf \{t \geq 0: \prob( |x-Ex|>t) \leq u \}\\
&\iff Q_{x}(u) \leq |Ex| + Q_{x - Ex}(u) \leq E|x| + Q_{x - Ex}(u)
\end{align*}

In what follows, denote $z_{it} = x_{it} - E(x_{it})$.
Then using this inequality in \ref{equation:rio_bound}, we get the bound
\begin{align}
T^2\var(\ov{x}_i) &\leq \sum_{t=1}^T E|x_{it}|^2 \int_0^1 \alpha \inv (u) du + 2 \sum_{t=1}^T E|x_{it}| \int_0^1 \alpha \inv(u)Q_{z_{it}}(u) du \nonumber \\
&+ \sum_{t=0}^T \int_0^1 \alpha \inv(u) Q_{z_{it}}(u)^2 du \label{equation:fe:technical_lemma}
\end{align}

where we applied Jensen's inequality to reduce $(E|x|)^2 \leq E|x|^2$. 
For the first term, note that $\int_0^1 \alpha \inv(u) du = \sum_{s \geq 0} \int_0^1 \one(\alpha(s) \geq u) du = \sum_{s \geq 0} \alpha(s)$. 
For the second term, we need a bound on the function $Q_{z_{it}}(u)$ for each $i, t$.
Note that from assumption \ref{assumptions:fe:inference:tails}, we have $\prob(|x_{it} - Ex_{it}| > t) \leq e^{1-(t/f)^{d_2}}$, giving 
\begin{align*}
\sup_{i, t} Q_{z_{it}}(u) &= \sup_{i, t} \inf \{t \geq 0: \prob(|x_{it} - Ex_{it}| > t) \leq u\} \leq \inf \{t \geq 0: e^{1-(t/f)^{d_2}} \leq u \} \\
&= f(1-\log(u))^{1/d_2}
\end{align*}

where the last line is just the inverse of the tail bound.  
We claim that for all $a > 1$ and $u \in (0, 1]$, we have $1- \log(u) \leq a u^{-1/a}$.
Note that $1-log(1) = 1 \leq a(1)^{-1/a} = a$.
Moreover, for all $u \in (0, 1]$, $-\frac{\partial}{\partial u} (1 - \log u) = 1/u \leq u^{-1/a - 1} = -\frac{\partial}{\partial u} au^{-1/a}$. 
This proves the claim. 
Let $a > 2 / d_2 \vee 1 $, then our work shows $(1-\log(u))^{2/d_2} \leq \frac{1}{u^{1-\epsilon(d_2)}}$, for some $\epsilon(d_2) \in (0, 1)$ 
\begin{align*}
\int_0^1 \alpha \inv(u) Q_{z_{it}}(u)^2 du &= \sum_{s \geq 0} \int_0^{\alpha(s)} Q_{z_{it}}(u)^2 \leq \sum_{s \geq 0} \int_0^{\alpha(s)} f^2 a(d_2)^2 u^{-1 + \epsilon(d_2)} du \\
& \leq c(f, d_2) \sum_{s \geq 0} \alpha(s)^{\epsilon(d_2)}du  \leq c(f, d_2) \sum_{s \geq 0} e^{-b\epsilon(d_2) s^{d_1}} \\
&\equiv K(b, f, d_1, d_2) < \infty
\end{align*}

where the final sum can easily be shown to converge by standard methods. 
Moreover, since $d_2$ is arbitrary, writing $(1-\log(u))^{1/d_2} = (1-\log(u))^{2/\wt{d}_2}$, the same proof shows that $\int_0^1 \alpha \inv (u) Q_{z_{it}}(u) du < K'(b, f, d_1, d_2)$. 
Our work showed that $\sum_{s \geq 0} \alpha(s)^{\epsilon} = K(b, f, d_1, d_2) < \infty$ for some $\epsilon \in (0, 1)$.
Then apparently $\sum_{s \geq 0} \alpha(s) \leq K''(b, f, d_1, d_2) < \infty$ for some different constant $K''$, because the sequence spaces $\ell_p$ are nested and increasing in $p > 0$.  \\

The work above in equation \ref{equation:fe:moment_sum} shows that $ \sup_i \frac{1}{T} \sum_t E|x_{it}|^p = O(1)$ for $p=1, 2$. 
Then the decomposition in \ref{equation:fe:technical_lemma} above becomes
\begin{align*}
\var(\ov{x}_i) &\leq \frac{1}{T^2} O(T)O(1) + \frac{2}{T^2} \left (\sum_{t=1}^T E|x_{it}|^2 \right)^{1/2} \left ( \sum_{t=0}^T \left (\int_0^1 \alpha \inv(u)Q_{z_{it}}(u) du \right)^2 \right)^{1/2} \\
&+ \frac{1}{T^2} \sum_{t=0}^T K(b, f, d_1, d_2) \\
&= O(1/T) +  2 \left ( \frac{1}{T} \sum_{t=1}^T E|x_{it}|^2 \right)^{1/2} \left ( \frac{1}{T^3} \sum_{t=0}^T K''(b, f, d_1, d_2)^2 \right)^{1/2} \\
&= O(1/T) + O(1)O(1/T) 
\end{align*}

where all the order statements above hold uniformly in $i$.
This completes the proof. 
\end{proof}

\subsection{Model Selection Lemmas}
The following lemma gives conditions under which the sample risk deviation from the irreducible sample risk $\samplerisk(\wh{\theta}, \wh{\gamma}) - \baserisk$ converges at the rate needed for theorem \ref{thm:cp}. 
\begin{lemma} \label{lemma:model:qk}
Suppose that $(\wh{\theta}, \wh{\gamma}) \in \Theta^{k_0} \times \Gamma^{k_0}$ has rate $\wh{\theta} - \theta^0 = O_p(\rate)$ and satisfies $\cest = \cest(\wh{\theta})$ for all $i$ (as defined in lemma \ref{lemma:cluster_errors}).  
Also, suppose there exists a neighborhood $\mathcal{N}$ of $\theta^0$ such that 
\begin{equation} \label{equation:mo:cluster_errors}
\sup_{\theta \in \mathcal{N}} \frac{1}{N} \sum_{i=1}^N \one(\wh{c}_i(\theta) \not = c_i^0) = o_p(T^{-a}) 
\end{equation}

for any $a > 0$.
Then $\samplerisk(\wh{\theta}, \wh{\gamma}) = O_p(\frac{\rate}{\sqrt{NT}}) + O_p(\rate^2) + o_p(T^{-a})$. 
\end{lemma}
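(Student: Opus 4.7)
\textbf{Proof plan for Lemma \ref{lemma:model:qk}.} The target is a bound on $\samplerisk(\wh{\theta},\wh{\gamma}) - \baserisk$, since the statement as written is clearly relative to the irreducible risk (the discussion above the lemma makes this explicit). Writing $\Delta_i \equiv \theta^0(c_i^0) - \wh{\theta}(\cest)$, the standard algebraic identity gives
\[
\samplerisk(\wh{\theta},\wh{\gamma}) - \baserisk \;=\; \frac{1}{NT}\sum_{i,t} (x_{it}'\Delta_i)^2 \;+\; \frac{2}{NT}\sum_{i,t} e_{it}\, x_{it}'\Delta_i,
\]
which I will call the curvature term and the cross term. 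The central bookkeeping trick will be to split $\Delta_i = \Delta\theta(c_i^0) + [\wh{\theta}(c_i^0) - \wh{\theta}(\cest)]$, where $\Delta\theta(c) \equiv \theta^0(c) - \wh{\theta}(c)$. The first piece is uniformly $O_p(\rate)$ by the hypothesis $\wh{\theta}-\theta^0 = O_p(\rate)$, and the second piece is bounded by compactness and supported on $\{\cest \neq c_i^0\}$, whose average frequency is controlled via the hypothesis \eqref{equation:mo:cluster_errors} once one notes that $\wh{\theta} \in \mathcal{N}$ with probability tending to one (since $\rate = o(1)$).

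For the curvature term, Cauchy-Schwarz yields
\[
\frac{1}{NT}\sum_{i,t} (x_{it}'\Delta_i)^2 \;\leq\; \sup_{i}\frac{1}{T}\sum_t \|x_{it}\|^2 \cdot \frac{1}{N}\sum_i \|\Delta_i\|^2,
\]
where the $\sup_i$ factor is $O_p(1)$ by a union bound combined with assumption \ref{assumptions:inference:xbound}. Using the split above, $\|\Delta_i\|^2 \lesssim \|\Delta\theta(c_i^0)\|^2 + \mathrm{diam}(\Theta)^2\,\one(\cest\neq c_i^0)$, so the $N$-average is $O_p(\rate^2) + o_p(T^{-a})$, giving the curvature contribution $O_p(\rate^2) + o_p(T^{-a})$.

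The cross term is the main place where the $O_p(\rate/\sqrt{NT})$ rate comes from, and it requires assumption \ref{assumptions:mo:ci_uncorrelated} rather than a generic tail bound. Applying the same split,
\[
\frac{1}{NT}\sum_{i,t} e_{it}\, x_{it}'\Delta_i \;=\; \sum_{c\in \clusterspace^{k_0}} \Delta\theta(c)' \left(\frac{1}{NT}\sum_{i,t} e_{it}\, x_{it}\,\one(c_i^0 = c)\right) \;+\; R_{NT},
\]
where the displayed main part is $\sum_c \|\Delta\theta(c)\| \cdot O_p(1/\sqrt{NT}) = O_p(\rate/\sqrt{NT})$ by assumption \ref{assumptions:mo:ci_uncorrelated} and the finiteness of $\clusterspace^{k_0}$. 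The remainder $R_{NT}$ contains the summand $e_{it} x_{it}'[\wh{\theta}(c_i^0) - \wh{\theta}(\cest)]$; bounding it by Cauchy-Schwarz as
\[
|R_{NT}| \;\leq\; \mathrm{diam}(\Theta)\left(\frac{1}{N}\sum_i \one(\cest\neq c_i^0)\right)^{1/2} \left(\frac{1}{NT^2}\sum_i \left\|\sum_t e_{it}x_{it}\right\|^2\right)^{1/2}
\]
gives $R_{NT} = o_p(T^{-a/2}) \cdot O_p(1) = o_p(T^{-a})$ after rechoosing $a$, using assumption \ref{consistency:dependence}.

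The one step that requires care is ensuring that the sup-norm factor $\sup_i \frac{1}{T}\sum_t\|x_{it}\|^2$ really is $O_p(1)$ under the model-selection assumptions; everything else is just careful bookkeeping of the two-piece split for $\Delta_i$ and appealing to the correct hypothesis (\ref{assumptions:mo:ci_uncorrelated} for the $1/\sqrt{NT}$ rate, \eqref{equation:mo:cluster_errors} for the $T^{-a}$ residuals, and compactness via \ref{consistency:compact} for the coefficients of indicator mass). Combining the curvature and cross-term bounds then delivers the claimed $O_p(\rate/\sqrt{NT}) + O_p(\rate^2) + o_p(T^{-a})$.
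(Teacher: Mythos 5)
Your proposal is correct and follows essentially the same route as the paper: the same decomposition of $\samplerisk(\wh{\theta},\wh{\gamma})-\baserisk$ into curvature and cross terms, the same split of $\deltatheta(c_i^0,\cest)$ into a well-classified piece (handled via assumption \ref{assumptions:mo:ci_uncorrelated} for the $O_p(\rate/\sqrt{NT})$ rate and via $\wh{\theta}-\theta^0=O_p(\rate)$ for the $O_p(\rate^2)$ rate) and a misclassified piece controlled by Cauchy--Schwarz together with \eqref{equation:mo:cluster_errors}, \ref{consistency:dependence}, and \ref{assumptions:inference:xnorm}. The only cosmetic difference is that you bound the main curvature term with $\sup_i\frac{1}{T}\sum_t\|x_{it}\|^2=O_p(1)$ (valid under \ref{assumptions:mo:epsilon}) where the paper uses the cross-sectional average, and you correctly note, as the paper's proof implicitly does, that the conclusion is a bound on $\samplerisk(\wh{\theta},\wh{\gamma})-\baserisk$.
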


\begin{proof}
In what follows, denote $\deltatheta(c) = (\theta^0(c) - \wh{\theta}(c))$ and $\deltatheta(c, c') = (\theta^0(c) - \wh{\theta}(c'))$. 
By definition, we have 
\begin{equation} \label{equation:mo:sample_risk_decomp}
\samplerisk(\wh{\theta}, \wh{\gamma}) - \baserisk = \frac{1}{NT} \sum_{i, t} e_{it} x_{it}'\deltatheta(c_i^0, \cest) + \frac{1}{NT} \sum_{i, t} (x_{it}'\deltatheta(\ctrue, \cest))^2
\end{equation}

We consider each term separately. 
The first term is 
\begin{align*}
\frac{1}{NT} \sum_{i, t} e_{it} x_{it}'\deltatheta(c_i^0, \cest) &=  \frac{1}{NT} \sum_{i, t} e_{it} x_{it}'\deltatheta(\ctrue) \one(\cest = \ctrue) + \frac{1}{NT} \sum_{i, t} e_{it} x_{it}'\deltatheta(c_i^0, \cest) \one(\cest \not = \ctrue) \\
&=\frac{1}{NT} \sum_{i, t} e_{it} x_{it}'\deltatheta(\ctrue)\one(\cest = \ctrue) + \frac{1}{NT} \sum_{i, t} e_{it} x_{it}'\deltatheta(c_i^0, \cest) \one(\cest \not = \ctrue) \\
&- \frac{1}{NT} \sum_{i, t} e_{it} x_{it}'\deltatheta(\ctrue)(1-\one(\cest = \ctrue)) \\
&=\frac{1}{NT} \sum_{i, t} e_{it} x_{it}'\deltatheta(\ctrue) + \frac{1}{NT} \sum_{i, t} e_{it} x_{it}'(\deltatheta(c_i^0, \cest) - \deltatheta(\ctrue)) \one(\cest \not = \ctrue) 
\end{align*}

Consider the first term in the final line. 
This can be written
\begin{align*}
\frac{1}{NT} \sum_{i, t} e_{it} x_{it}'\deltatheta(\ctrue) &= \sum_{c \in \clusterspace^{k_0}} \frac{1}{NT} \sum_{i, t} e_{it} x_{it}'\deltatheta(c) \one(\ctrue = c)\\
&= \sum_{c \in \clusterspace^{k_0}} \deltatheta(c)' \left (\frac{1}{NT} \sum_{i, t} e_{it} x_{it} \one(\ctrue = c) \right )\\
&= \sum_{c \in \clusterspace^{k_0}} O_p(\rate)'O_p(1/\sqrt{NT}) = O_p(\frac{\rate}{\sqrt{NT}})
\end{align*}

where we used assumption \ref{assumptions:mo:ci_uncorrelated} in the final line. 
Using Cauchy-Schwarz and recalling $M = \text{Diam}(\Theta)$, the second term can be written
\begin{align*}
\frac{1}{NT} &\sum_{i, t} e_{it} x_{it}'(\deltatheta(c_i^0, \cest) - \deltatheta(\ctrue)) \one(\cest \not = \ctrue) \\
&\leq \frac{1}{NT} \left (\sum_i  \|\wh{\theta}(\ctrue) - \wh{\theta}(\cest)\|^2 \one(\cest \not = \ctrue) \right)^{1/2} \left( \sum_i \left \| \sum_t e_{it} x_{it}\right \|^2 \right)^{1/2} \\
&\leq M^2 \left (\frac{1}{N}\sum_i \one(\cest \not = \ctrue) \right)^{1/2} \left( \frac{1}{NT^2} \sum_{i, t, s} e_{it} e_{is} x_{it}'x_{is} \right)^{1/2} \\ 
&\leq o_p(1) \left (\frac{1}{N}\sum_i \one(\cest(\wh{\theta}) \not = \ctrue) \right)^{1/2}( \one(\wh{\theta} \in \neighborhood) + \one(\wh{\theta} \not \in \neighborhood)) \\
&\leq o_p(1) \left ( \sup_{\theta \in \neighborhood} \frac{1}{N}\sum_i \one(\cest(\theta) \not = \ctrue) \right)^{1/2} + o_p(1) \one(\wh{\theta} \not \in \neighborhood) \\
&\leq o_p(1) o_p(T^{-a}) + o_p(1) o_p(T^{-a}) = o_p(T^{-a})
\end{align*}

The third inequality uses assumption \ref{assumptions:consistency}, and the fourth uses equation \ref{equation:mo:cluster_errors}. 
For the final inequality, note that by consistency of $\wh{\theta}$, the indicator $\one(\wh{\theta} \not \in \neighborhood)$ converges in probability to $0$ at arbitrary rate. 
We deal with the second term in equation \ref{equation:mo:sample_risk_decomp} similarly. 
Note that 
\begin{align*}
\frac{1}{NT} \sum_{i, t} (x_{it}'\deltatheta(\ctrue, \cest))^2 &= \frac{1}{NT} \sum_{i, t} (x_{it}'\deltatheta(\ctrue))^2 \one(\cest = \ctrue) + \frac{1}{NT} \sum_{i, t} (x_{it}'\deltatheta(\ctrue, \cest))^2 \one(\cest \not = \ctrue) \\
&= \frac{1}{NT} \sum_{i, t} (x_{it}'\deltatheta(\ctrue))^2 + \frac{1}{NT} \sum_{i, t} ((x_{it}'\deltatheta(\ctrue, \cest))^2 - (x_{it}'\deltatheta(\ctrue))^2) ( \one(\cest \not = \ctrue))
\end{align*}

Again, we argue the first term above is 
\begin{align*}
\sum_{c \in \clusterspace^{k_0}}\frac{1}{NT} \sum_{i, t} (x_{it}'\deltatheta(c))^2 \leq \sum_{c \in \clusterspace^{k_0}} \|\deltatheta(c)\|^2 \frac{1}{NT} \sum_{i, t}  \|x_{it}\|^2 = O_p(\rate^2) O_p(1)
\end{align*}

where we have used the tail assumption \ref{assumptions:inference:tails} and $\deltatheta(c) = O_p(1/\sqrt{NT})$ for all $c$ in the final equality. 
Now, for instance, we can break the second term into parts and compute 
\begin{align*}
\frac{1}{NT} \sum_{i, t} (x_{it}'\deltatheta(\ctrue))^2  \one(\cest \not = \ctrue) &\leq \frac{1}{NT} \sum_{i, t} \|x_{it}\|^2 \|\deltatheta(\ctrue)\|^2  \one(\cest \not = \ctrue) \\
&\leq M^2 \left ( \frac{1}{N} \sum_i \one(\cest \not = \ctrue) \right)^{1/2} \left (\frac{1}{NT^2}\sum_i \sum_{t, s} \|x_{it}\|^2 \|x_{is}\|^2 \right )^{1/2} \\
&\leq O_p(1)\left ( \frac{1}{N} \sum_i \one(\cest(\wh{\theta}) \not = \ctrue) \right)^{1/2} \leq O_p(1) o_p(T^{-a}) 
\end{align*}

In the final line we have used assumption \ref{assumptions:inference:xnorm} as well as our analysis of the sum of indicator functions above. 
An identical proof shows that $\frac{1}{NT} \sum_{i, t} ((x_{it}'\deltatheta(\ctrue, \cest))^2 = o_p(T^{-a})$. 
Putting this all together gives the claimed result.
\end{proof}

The following lemma is needed to establish rates of convergence for strongly mixing sequences. 

\begin{lemma} \label{lemma:mo:deviations}
Impose the mixing and tail assumptions in \ref{assumptions:inference:mixing} and \ref{assumptions:inference:tails}, and suppose also that $\log N = o(T^{\epsilon})$ for some $\epsilon$ with $d/2 > \epsilon > 0$. 
Then 
\begin{align*}
\sup_{i \in [N]} \left \| \frac{1}{T} \sum_{t=1} (x_{it} x_{it}' - E[x_{it} x_{it}']) \right \|_{2,2} &= o_p(T^{-\frac{1}{2} + \epsilon}) \\
\sup_{i \in [N]} \left \| \frac{1}{T} \sum_{t=1} e_{it} x_{it} \right \| &= o_p(T^{-\frac{1}{2} + \epsilon})
\end{align*}
where the first line uses the standard operator norm.
\end{lemma}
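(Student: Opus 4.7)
The plan is to reduce both inequalities to uniform (over $i \in [N]$) concentration bounds for partial sums of strongly mixing scalar sequences, then combine a Fuk-Nagaev type inequality with a union bound, using $\log N = o(T^{\epsilon})$ to absorb the union bound.

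First, I would reduce to scalar sums. Since $p = \dim(x_{it})$ is fixed, the operator norm of a $p \times p$ matrix and the Euclidean norm of a $p$-vector are each bounded by a constant (depending only on $p$) times the max of their entries. So it suffices to show, for each fixed component pair $j, j' \in [p]$, that
\[
\sup_{i \in [N]} \left| \frac{1}{T} \sum_{t=1}^T \bigl(x_{it}^j x_{it}^{j'} - E[x_{it}^j x_{it}^{j'}]\bigr) \right| = o_p(T^{-1/2+\epsilon}), \qquad \sup_{i \in [N]} \left| \frac{1}{T} \sum_{t=1}^T \bigl(e_{it} x_{it}^j - E[e_{it} x_{it}^j]\bigr) \right| = o_p(T^{-1/2+\epsilon}).
\]
Fix one such scalar summand and call it $z_{it}$. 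Then $\{z_{it}\}_t$ is zero-mean and strongly mixing with coefficients bounded by $\alpha(t) \leq e^{-bt^{d_1}}$ (continuous transformations cannot increase mixing coefficients, as already noted in the proof of Lemma \ref{lemma:cluster_errors}), and by a computation like equation \ref{equation:xsquare_tail}, it satisfies a tail bound of the form $P(|z_{it}| > z) \leq e^{1 - (z/f')^{d_2}}$ with possibly modified constants; for the $e_{it} x_{it}^j$ case this is immediate from assumption \ref{assumptions:inference:tails}.

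Second, I would apply a Fuk-Nagaev inequality for strongly mixing sequences (e.g.\ from \cite{Rio2011} / \cite{Rio2017}, as already used in the overspecification analysis). For sequences with exponential mixing and stretched-exponential tails of the form above, the inequality yields a bound of the shape
\[
P\!\left(\left|\frac{1}{T}\sum_{t=1}^T z_{it}\right| > z\right) \;\leq\; C_1 \exp\!\bigl(-c_1 (Tz)^2/T\bigr) + C_2\, T \exp\!\bigl(-c_2 (Tz)^{d}\bigr),
\]
with $d = d_1 d_2/(d_1+d_2)$. Setting $z = T^{-1/2+\epsilon}$, the exponents become $c_1 T^{2\epsilon}$ and $c_2 T^{(1/2+\epsilon)d}$ respectively. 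The first is positive since $\epsilon > 0$; the second exceeds $\epsilon$ because $(1/2+\epsilon)d - \epsilon = d/2 - \epsilon(1-d) > 0$ under the hypothesis $\epsilon < d/2$ (trivially if $d \geq 1$, and otherwise because $\epsilon < d/2 \leq d/[2(1-d)]$). Note that the constants $C_k, c_k$ do not depend on $i$ because the mixing and tail constants in assumptions \ref{assumptions:inference:mixing}--\ref{assumptions:inference:tails} are uniform in $i$.

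Third, a union bound over $i \in [N]$ and over the finitely many component pairs gives
\[
P\!\left(\sup_{i \in [N]} \left\|\frac{1}{T}\sum_{t=1}^T (x_{it}x_{it}' - E[x_{it}x_{it}'])\right\|_{2,2} > K T^{-1/2+\epsilon}\right) \;\leq\; p^2 N \cdot \bigl[C_1 e^{-c_1 T^{2\epsilon}} + C_2 T e^{-c_2 T^{(1/2+\epsilon)d}}\bigr],
\]
and the log of the right hand side is at most $O(\log N) + O(\log T) - \min(c_1 T^{2\epsilon}, c_2 T^{(1/2+\epsilon)d})$, which tends to $-\infty$ since $\log N = o(T^{\epsilon})$ and both exponents strictly dominate $T^{\epsilon}$. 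The same argument, with $z_{it} = e_{it} x_{it}^j$ in place of the quadratic, yields the second bound. The main obstacle is the bookkeeping around the tail bound for the products $x_{it}^j x_{it}^{j'}$ (already carried out in equation \ref{equation:xsquare_tail}) and checking that the Fuk-Nagaev exponent beats $\log N$ at the sharper rate $T^{-1/2+\epsilon}$; the hypothesis $\epsilon < d/2 \wedge 1/2$ is precisely what makes this comparison work.
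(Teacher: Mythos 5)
Your proposal is correct and follows the same strategy as the paper: reduce the matrix/vector norms to finitely many scalar components by norm equivalence, apply a Merlev\`ede--Peligrad--Rio concentration bound for strongly mixing sequences with stretched-exponential tails, and absorb the union bound over $i \in [N]$ using $\log N = o(T^{\epsilon})$, with $\epsilon < d/2$ guaranteeing that the stretched-exponential term dominates. The one substantive difference is the form of the concentration inequality: the paper works from inequality (1.7) of \cite{Rio2011}, which contains a free parameter $r$ coupling the two terms, and the key step is the choice $r = T^{\epsilon}$, which turns the sub-Gaussian term into roughly $e^{-cT^{\epsilon}}$ and the stretched-exponential term into $e^{-cT^{d/2}}$, so that $\epsilon < d/2$ enters exactly. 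The two-term bound you wrote, with $\exp(-c_1 T z^2)$ and $T\exp(-c_2 (Tz)^{d})$ appearing simultaneously, is not inequality (1.7) for any single choice of $r$ (taking $r=1$ makes the first term merely polynomial; taking $r \asymp Tz^2$ degrades the second exponent to $T^{(1/2-\epsilon)d}$, which would require the stronger condition $\epsilon < d/(2(1+d))$); it instead matches Theorem 1 of the same reference, which is stated for $d<1$. Your arithmetic $(1/2+\epsilon)d - \epsilon = d/2 - \epsilon(1-d) > 0$ under $\epsilon < d/2$ is correct given that version, so the argument goes through, but you should cite the specific inequality you are using (or make the choice of $r$ explicit) rather than a generic ``shape,'' since the viability of the whole union-bound step hinges on which exponents are actually available.
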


\begin{proof}
For $(z_t)_{t \geq 0}$ a mean zero-process satisfying the assumptions in \ref{assumptions:inference:tails} and \ref{assumptions:inference:mixing}, let $s(z)^2 = \sup_t E z_t^2 + 2 \sum_{s > t} |E z_t z_s | < \infty$.
Let $d \equiv \frac{d_1 d_2}{d_1 + d_2}$. 
Then setting $\lambda = \frac{Tz}{4}$ in equation (1.7) in \cite{Rio2011}, for any $r \geq 1$ we have 
\begin{align*}
\prob \left (\frac{1}{T} \left | \sum_t z_t \right | > z \right) \leq 4 \left(1 + \frac{T z^2}{16rs(z)^2} \right)^{-r/2} + \frac{16C}{z} \exp \left(-b \frac{(Tz)^d}{(4fr)^d} \right )
\end{align*}

Where $C$ is a constant only depending on the mixing and tail parameters $b, f, d_1, d_2$. 
In particular, plugging in $z = xT^{-\frac{1}{2} + \epsilon}$ and $r = T^{\epsilon}$ gives 
\begin{align*}
\prob \left (\frac{1}{T^{\frac{1}{2} + \epsilon}} \left | \sum_t z_t \right | > x \right) \leq 4 \left(1 + \frac{T^{\epsilon}x^2}{16s(z)^2} \right)^{-T^{\epsilon}/2} + \frac{16CT^{\frac{1}{2} - \epsilon}}{x} \exp \left(-b \frac{T^{d/2} x^d}{(4f)^d} \right ) \equiv f(T, x, s(z))
\end{align*}

Let $v \equiv \sup_{i, q} s((e_{it}x_{itq})_t)$. 
Then, for instance, applying this to the second expression above we get for any $x > 0$
\begin{align*}
\prob \left ( T^{\frac{1}{2} - \epsilon} \sup_{i \in [N]} \left \| \frac{1}{T} \sum_{t=1} e_{it} x_{it} \right \| > x \right ) &\leq  \prob \left ( T^{\frac{1}{2} - \epsilon} \sup_{i \in [N]} \left \| \frac{1}{T} \sum_{t=1} e_{it} x_{it} \right \|_1 > x \right ) \\
&\leq Np \sup_{q \in [p]} \sup_{i \in [N]} \prob \left (\frac{1}{T^{\frac{1}{2} + \epsilon}} \left | \sum_t e_{it} x_{itq} \right | > x/p \right ) \\
&\lesssim N f(T, x/p, v) \to 0
\end{align*}

as $N, T \to \infty$ if $\log N = o(T^{\epsilon \wedge (d/2)})$ and $\sup_{i, q} s((e_{it}x_{itq})_t)) < \infty$. 
Note that covariance inequalities from \cite{Rio2017} can be used to show that 
\[
\sup_{i} \sup_{1 \leq a \leq p} s((e_{it}x_{ita})_t)) < \infty \quad \text{and} \quad \sup_{i} \sup_{1 \leq a, b \leq p} s((x_{ita}x_{itb} - E[x_{ita} x_{itb}])_t) < \infty
\] 
under the assumptions \ref{assumptions:inference:tails} and \ref{assumptions:inference:mixing} in our setting, as noted in BM. 
This completes the proof for the second term.
For the first term, note that by equivalence of finite-dimensional vector space norms, there is a constant $c(p)$ depending only on the dimension such that
\begin{align*}
\left \| \frac{1}{T} \sum_{t=1} (x_{it} x_{it}' - E[x_{it} x_{it}']) \right \|_{2,2} \leq  c(p) \left \| \frac{1}{T} \sum_{t=1} (x_{it} x_{it}' - E[x_{it} x_{it}']) \right \|_1 
\end{align*}
where $\|A\|_1 \equiv \sum_{i, j} |a_{ij}|$ for a matrix $A \in \mathbb{R}^{p \times p}$. 
The first statement of the lemma then follows by exactly the same argument, substituting $x_{it}x_{it}' - E[x_{it} x_{it}']$ for $e_{it} x_{it}$. 
\end{proof}

\begin{cor}
The following rates hold
\begin{align*}
\frac{1}{N} \sum_{i=1} \left \| \frac{1}{T} \sum_{t=1} (x_{it} x_{it}' - E[x_{it} x_{it}']) \right \| &= o_p(T^{-\frac{1}{2} + \epsilon}) \\
\frac{1}{N} \sum_{i=1} \left \| \frac{1}{T} \sum_{t=1} e_{it} x_{it} \right \| &= o_p(T^{-\frac{1}{2} + \epsilon})
\end{align*}
\end{cor}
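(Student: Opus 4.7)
The plan is to derive this corollary immediately from Lemma \ref{lemma:mo:deviations} via the elementary observation that for any nonnegative numbers $a_1, \ldots, a_N$, the arithmetic mean is dominated by the supremum, i.e.\ $\frac{1}{N}\sum_{i=1}^N a_i \leq \sup_{i \in [N]} a_i$. Applying this pointwise bound with $a_i = \left\| \frac{1}{T}\sum_{t=1}^T (x_{it} x_{it}' - E[x_{it} x_{it}']) \right\|$ for the first display, and with $a_i = \left\| \frac{1}{T}\sum_{t=1}^T e_{it} x_{it} \right\|$ for the second, the $o_p(T^{-1/2 + \epsilon})$ rates transfer directly from the supremum bounds already established in the preceding lemma. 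Convergence in probability is preserved under the deterministic inequality $\frac{1}{N}\sum_i a_i \leq \sup_i a_i$, so no additional probabilistic argument is needed.

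I do not anticipate any real obstacle: the proof is a single line, since the lemma already controls the stronger quantity (the maximum over $i$), which automatically dominates the average. The only subtlety worth flagging is that the same sub-polynomial condition $\log N = o(T^\epsilon)$ from Lemma \ref{lemma:mo:deviations} is implicitly inherited, since the corollary is a consequence of the sup bound rather than a standalone cross-sectional result. A sharper rate could in principle be obtained for the average by exploiting genuine cross-sectional averaging — for instance, by applying a Fuk-Nagaev-type inequality (as in Lemma \ref{lemma:mo:deviations}, drawing on \cite{Rio2011}) directly to $\frac{1}{N}\sum_i \|\cdot\|$ or bounding its second moment via covariance inequalities, thereby avoiding the $\log N$ cost that the union bound absorbed into $T^\epsilon$. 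Such a refinement, however, is unnecessary for the downstream uses of the corollary, where the coarser rate is chained through Cauchy-Schwarz and is not the binding term.
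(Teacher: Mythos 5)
Your proof is correct and is exactly the argument the paper gives: the corollary follows immediately from Lemma \ref{lemma:mo:deviations} by the deterministic bound $\frac{1}{N}\sum_i a_i \leq \sup_{i \in [N]} a_i$ applied to the nonnegative norms. No further comment is needed.
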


\begin{proof}
Immediate by lemma \ref{lemma:mo:deviations}, noting that for any positive real numbers $(a_i)_{i=1}^N$ we have $\frac{1}{N} \sum_i a_i \leq \sup_{i \in [N]} a_i$.
\end{proof}

\section{Extensions and Supplementary Material}
\subsection{Fixed Effects Model} \label{proof:fe} 
In this section, we consider an extension of the main specification with individual fixed effects.

\begin{equation} \label{equation:fixed_effects}
y_{it} = x_{it}' \theta(c_i^0) + a_i + e_{it}
\end{equation}

We propose to estimate equation \ref{equation:fixed_effects} by (1) de-meaning the time series for each cross-sectional unit followed by (2) applying Lloyd's Algorithm to the de-meaned data. 
In other words, defining $\Tilde{z}_{it} \equiv z_{it} - \frac{1}{T} \sum_t z_{it} = z_{it} - \ov{z}_i$ for any variable $z_{it}$, we apply Lloyd's algorithm to the model $\wt{y}_{it} = \wt{x}_{it}'\theta(c_i^0) + \wt{e}_{it}$. \\

The main challenge in extending our results to this setting is that the differencing operation changes the autocorrelation structure of the data, so that the mixing conditions in assumption \ref{assumptions:inference:mixing} may no longer be satisfied. 
In the remainder of this section, we overload notation and let $\wh{\theta}$ and $\wh{\gamma}$ refer to the fixed effects estimates defined by 

\begin{equation} \label{fe:estimator}
 (\wh{\theta}, \wh{\gamma}) = \argmin_{\gamma \in \Gamma, \theta \in \Theta} \frac{1}{NT}\sum_{i=1}^N\sum_{t=1}^T (\wt{y}_{it} - \wt{x}_{it}'\theta(c_i))^2
\end{equation}

Letting the cluster permutation $\sigma_{\ell}$ be defined analagously to equation \ref{equation:permutation_function} in the main text, we have  

\begin{lemma} \label{lemma:fe:labeling}
Under the assumptions in \ref{assumptions:consistency} with $(x_{it}, e_{it})$ replaced by $(\wt{x}_{it}, \wt{e}_{it})$, $\mathbb{P}(\sigma_{\ell} \, \text{invertible}) \to 1$ as $N, T \to \infty$
\end{lemma}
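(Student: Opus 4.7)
The plan is to reduce this lemma to the already-established Lemma \ref{lemma:permutation} by viewing the fixed-effects estimator as an application of Lloyd's algorithm to the demeaned data $(\wt{y}_{it}, \wt{x}_{it})$. Writing $\wt{y}_{it} = \wt{x}_{it}'\theta^0(c_i^0) + \wt{e}_{it}$, the estimator in \ref{fe:estimator} is precisely the baseline estimator of \ref{estimator} evaluated on the transformed sample. Since the statement of the lemma explicitly assumes that the conditions in \ref{assumptions:consistency} hold with $(x_{it}, e_{it})$ replaced by $(\wt{x}_{it}, \wt{e}_{it})$, Theorem \ref{thm:consistency} can be invoked verbatim on the demeaned data, yielding
\[
\max_{c \in \mathcal{C}} \min_{x \in \mathcal{C}} \|\wh{\theta}(x) - \theta^0(c)\|^2 = o_P(1).
\]

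From here the construction of the label correspondence $\sigma_{\ell}$ is identical to the argument in Lemma \ref{lemma:permutation_proof}. First I would verify that $\sigma_{\ell}(x) \equiv \sigma(c)_{\ell}$ is well-defined by exploiting the additive separability of $\min_{x \in \mathcal{C}} \|\wh{\theta}(x) - \theta^0(c)\|^2$ across blocks $\ell$, so that $\sigma(c)_{\ell}$ depends on $c$ only through $c_{\ell}$. This part uses only the block structure \ref{equation:cluster_structure} of $\theta$, which is unaffected by demeaning.

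Next, to show invertibility of $\sigma_{\ell}: [k_{\ell}] \to [k_{\ell}]$, it suffices by finiteness to show injectivity with probability approaching one. For any two clusters $a, b \in [k_{\ell}]$ the triangle inequality gives
\[
\|\theta^0_{\ell a} - \theta^0_{\ell b}\| \leq \|\theta^0_{\ell a} - \wh{\theta}_{\ell \sigma_{\ell}(a)}\| + \|\wh{\theta}_{\ell \sigma_{\ell}(a)} - \wh{\theta}_{\ell \sigma_{\ell}(b)}\| + \|\wh{\theta}_{\ell \sigma_{\ell}(b)} - \theta^0_{\ell b}\|,
\]
and the consistency display above forces the outer terms to be $o_P(1)$ uniformly in $a, b$. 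On the event $\{\sigma_{\ell}(a) = \sigma_{\ell}(b)\}$ the middle term vanishes, so a collision would contradict the separation condition \ref{consistency:separation} (which concerns only $\theta^0$ and is untouched by demeaning) once the $o_P(1)$ term falls below $\distab$. A union bound over the finitely many pairs $(a,b)$ and over $\ell \in [\numgroups]$ then gives $\prob(\sigma_{\ell} \text{ invertible for all } \ell) \to 1$.

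The only substantive step is therefore confirming that invoking Theorem \ref{thm:consistency} on the demeaned sample is legitimate, and this is immediate because the hypothesis of the lemma is precisely that all four parts of \ref{assumptions:consistency} carry over to $(\wt{x}_{it}, \wt{e}_{it})$. The harder work---namely showing that within-group demeaning actually preserves the summability condition \ref{consistency:dependence} and the eigenvalue condition \ref{consistency:eigenvalue}---is not required here and would be addressed separately; for this lemma the labeling result is almost mechanical once consistency of $\wh{\theta}$ on the transformed data is in hand.
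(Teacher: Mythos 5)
Your proposal is correct and matches the paper's own treatment: the paper proves this lemma (and Theorem \ref{thm:fe:consistency}) simply by invoking Lemma \ref{lemma:permutation} and Theorem \ref{thm:consistency} on the de-meaned data $(\wt{y}, \wt{x}, \wt{e})_{it}$, which is exactly your reduction. Your expanded restatement of the well-definedness and injectivity argument for $\sigma_{\ell}$ is faithful to the original proof and adds nothing that conflicts with it.
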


Relabeling $\wh{\theta}_{\ell \sigma(a)} \to \wh{\theta}_{\ell a}$ (which is well-defined w.h.p. as $N, T \to \infty$ by the lemma), we have

\begin{thm} \label{thm:fe:consistency}
Under the assumptions in \ref{assumptions:consistency} with $(x_{it}, e_{it})$ replaced by $(\wt{x}_{it}, \wt{e}_{it})$, for all blocks $\ell$ and $a \in [k_{\ell}]$, we have $\|\theta^0_{\ell a} - \wh{\theta}_{\ell a} \| = o_P(1)$ as $N, T \to \infty$. 
\end{thm}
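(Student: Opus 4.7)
The plan is to reduce Theorem \ref{thm:fe:consistency} to a direct application of Theorem \ref{thm:consistency} on the within-transformed data. First, I would observe that averaging the fixed-effects specification \ref{equation:fixed_effects} over $t$ gives $\ov{y}_i = \ov{x}_i'\theta(c_i^0) + a_i + \ov{e}_i$, so that subtracting term-by-term eliminates the individual effect $a_i$ and produces
\[
\wt{y}_{it} = \wt{x}_{it}'\theta(c_i^0) + \wt{e}_{it}.
\]
This has precisely the form of the baseline model \ref{problem} with $(y_{it}, x_{it}, e_{it})$ replaced throughout by $(\wt{y}_{it}, \wt{x}_{it}, \wt{e}_{it})$, the same true parameter $\theta^0$, and the same true assignment $\gamma^0$. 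Crucially, the fixed-effects estimator defined in \ref{fe:estimator} is then literally the baseline estimator \ref{estimator} applied to this transformed system.

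Given this reduction, the hypothesis of the theorem is exactly that the conditions of Assumption \ref{assumptions:consistency} hold for the de-meaned data. Compactness \ref{consistency:compact} and cluster separation \ref{consistency:separation} concern only $\Theta$ and $\theta^0$ and so carry over without change, while the averaged dependence condition \ref{consistency:dependence} and the eigenvalue condition \ref{consistency:eigenvalue} are imposed directly on the transformed processes by hypothesis. Therefore Lemma \ref{lemma:permutation} applied to the transformed system yields Lemma \ref{lemma:fe:labeling}, and Theorem \ref{thm:consistency} applied to the transformed system yields $\|\theta^0_{\ell a} - \wh{\theta}_{\ell a}\| = o_P(1)$ for every block $\ell$ and every $a \in [k_{\ell}]$ after relabeling by $\sigma_{\ell}$.

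The main obstacle here is conceptual rather than technical: one must ensure that the conditions imposed on $(\wt{x}_{it}, \wt{e}_{it})$ are substantive. In particular, within-transformation annihilates any regressor that is constant over $t$ within a unit, so the eigenvalue condition \ref{consistency:eigenvalue} applied to $\wt{x}_{it}$ implicitly rules out block components that are unit-level constants (for instance, a global intercept, which is absorbed by $a_i$ anyway). Under asymptotics where only $N \to \infty$ with $T$ fixed, the within-transformation also biases $\wh{\theta}$ unless the regressors are strictly exogenous with respect to $e$, but this is not an issue for consistency under $N, T \to \infty$, which is the regime of interest here. Beyond flagging these caveats, no further argument is required: the proof is just the transfer of Theorem \ref{thm:consistency} to the de-meaned system through the model equivalence above.
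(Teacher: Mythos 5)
Your proposal is correct and matches the paper's own proof, which simply states that the result is immediate from Lemma \ref{lemma:permutation} and Theorem \ref{thm:consistency} applied to the de-meaned data $(\wt{y}, \wt{x}, \wt{e})_{it}$. Your additional remarks on the model equivalence after within-transformation and the caveat about time-invariant regressors are sensible elaborations but do not change the argument.
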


\begin{proof}

Immediate from lemma \ref{lemma:permutation} and theorem \ref{thm:consistency} applied to the the de-meaned data $(\wt{y}, \wt{x}, \wt{e})_{it}$.  
\end{proof}

For the analogue of theorem \ref{thm:inference}, we modify assumption \ref{assumptions:inference} to the following 

\begin{assumption} \label{assumptions:fe:inference}
Consider the following assumptions
\begin{enumerate}[label={(\alph*)}, ref ={\ref*{assumptions:fe:inference}.(\alph*)}, itemindent=.5pt, itemsep=.5pt]
    \item $\max_i \frac{1}{T^2} \sum_{t, s} E(\|x_{it}\|^2 \|x_{is}\|^2) = O(1)$ as $T \to \infty$ \label{assumptions:fe:inference:xnorm}
    \item Let \ref{assumptions:inference:evals} hold with $(e, x)_{it}$ replaced by $(\wt{e}, \wt{x})_{it}$ 
    Also, let assumptions \ref{assumptions:inference:mixing} and \ref{assumptions:inference:xbound} on mixing conditions of $x_{it} e_{it}$, and large deviations of $\sum_t \|x_{it}\|^2$ hold exactly as in assumption \ref{assumptions:inference} from the main theorem \label{assumptions:fe:inference:old} 
    \item The uniform limits $\max_{i \in [N]} \frac{1}{T} \sum_t E[e_{it} x_{it}] \to 0$ and $\min_{i \in [N]} \frac{1}{T} \sum_t \mathbb{E}(\wt{x}_{it}'(\theta(c) - \theta(c')))^2 \to \wt{\clusterdist}(c, c')$ hold as $T \to \infty$, and $\wt{d}(c, c') \geq \wt{d}_{min} > 0$ for $c \not = c'$. \label{assumptions:fe:inference:limits}
    \item There exist constants $f$ and $d_2$ such that for all $i \in [N]$ and all $z > 0$, for all components $x_{it}^j$, $x_{it}^{j'}$ of the vector $x_{it}$ we have $\prob(|x_{it}^j x_{it}^{j'}  - E(x_{it}^j x_{it}^{j'}) | > z)$, $\prob( |e_{it} x_{it}^j - E e_{it} x_{it}^j |  > z)$ and $\prob( |x_{it}^j - E x_{it}^j |  > z)$ are bounded above by $e^{1 - (z/f)^{d_2}}$ \label{assumptions:fe:inference:tails} 
    \item The covariate vector $x_{it}$ contains a constant \label{assumptions:fe:inference:one} 
\end{enumerate} 
\end{assumption}

Then, analogously letting $\wt{\theta}$ be the infeasible estimator that minimizes \ref{fe:estimator} with the true cluster identities $c_i^0$ plugged in, we have  

\begin{thm} \label{thm:fe:inference}
Let the assumptions needed for consistency (assumption \ref{assumptions:consistency}) hold with $(x, e)_{it}$ replaced by $(\wt{x}, \wt{e})_{it}$, and let the assumptions in \ref{assumptions:fe:inference} hold.
Then for any $a > 0$ and as $N, T \to \infty$, we have the following theorem

\begin{equation} 
\wh{\theta} = \thetaoracle + o_P(T^{-a})
\end{equation}

Moreover, individual cluster estimates satisfy

\begin{equation} 
\prob \left (\exists i \in [N] \, s.t. \, \wh{c}_i \not = c_i^0 \right ) = o(1) + o(NT^{-a})
\end{equation}
\end{thm}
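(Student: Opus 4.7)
The plan is to transplant the proof of theorem \ref{thm:inference} onto the demeaned system $(\wt{y}_{it}, \wt{x}_{it}, \wt{e}_{it})$, with $\wh{Q}^{fe}(\theta, \gamma) \equiv \frac{1}{NT}\sum_{i,t}(\wt{y}_{it} - \wt{x}_{it}'\theta(c_i))^2$ in place of $\wh{Q}$ and $\qoracle^{fe}(\theta) \equiv \wh{Q}^{fe}(\theta, \gamma^0)$ in place of $\qoracle$. The main obstacle is that demeaning destroys strong mixing: since $\wt{x}_{it}$ depends on the full time series through $\ov{x}_i$, the processes $\{\wt{x}_{it}\}_t$ and $\{\wt{x}_{it}\wt{e}_{it}\}_t$ fail assumption \ref{assumptions:inference:mixing}, so lemma \ref{lemma:bonhomme} (BM's Lemma B.5), which drives the proof of lemma \ref{lemma:cluster_errors}, cannot be invoked directly on the tilded quantities.

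To restore the mixing structure, I will algebraically decompose every demeaned object into a piece involving only the raw sequences $\{x_{it}, e_{it}\}_t$ (to which assumption \ref{assumptions:inference:mixing} applies) plus an additive correction depending on the fixed-in-$t$ averages $\ov{x}_i$ and $\ov{e}_i$. For instance, $\frac{1}{T}\sum_t \wt{e}_{it}\wt{x}_{it} = \frac{1}{T}\sum_t e_{it}x_{it} - \ov{e}_i\,\ov{x}_i$, and $\frac{1}{T}\sum_t (\wt{x}_{it}'v)^2$ expands as a sum of $\frac{1}{T}\sum_t (x_{it}'v)^2$, $(\ov{x}_i'v)^2$, and a mixing cross-product. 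For the mixing components, lemma \ref{lemma:bonhomme} delivers $o(T^{-a})$-rate deviation bounds exactly as in the original proof. For the correction terms, applying lemma \ref{lemma:bonhomme} once more to the scalar mixing sequence $\{x_{it} - Ex_{it}\}_t$ (whose tail bound is assumption \ref{assumptions:fe:inference:tails}) shows that $\max_i |\ov{x}_i - E\ov{x}_i| = o_P(T^{-a})$ via a union bound over $i \in [N]$, which is valid because assumption \ref{assumptions:mo:epsilon}-type growth of $N$ relative to $T$ is implicit in the $o_P(T^{-a})$ conclusion of the theorem; the analogous bound holds for $\ov{e}_i$ using $\{e_{it} \cdot 1\}_t$ with the component of $x_{it}$ identically equal to $1$ from assumption \ref{assumptions:fe:inference:one}.

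With these decompositions, the analogue of lemma \ref{lemma:cluster_errors} on the demeaned data — $\sup_{\theta \in \neighborhood_{\eta}} \frac{1}{N}\sum_i \one(\wh{c}_i(\theta) \not = c_i^0) = o_P(T^{-a})$ — can be re-established by repeating the indicator-function bounding in the original proof verbatim, with $\wt{d}(c, c')$ from assumption \ref{assumptions:fe:inference:limits} playing the role of $\clusterdistmin$ and assumption \ref{assumptions:fe:inference:xnorm} providing the needed $\frac{1}{T}\sum_t \|\wt{x}_{it}\|^2 = O_P(1)$ uniform bound. The remainder of the proof then mirrors theorem \ref{thm:inference} line by line: I bound $\sup_{\theta \in \neighborhood_{\eta}}|\wh{Q}^{fe}(\theta) - \qoracle^{fe}(\theta)| = o_P(T^{-a})$; consistency of $\thetaoracle$ (the oracle minimizer of $\qoracle^{fe}$) follows from theorem \ref{thm:fe:consistency} restricted to $\Gamma = \{\gamma^0\}$; the first-order condition argument establishing $F(\thetaoracle - \wh{\theta}) = 0$ is purely algebraic and goes through unchanged; and the curvature step closes out using the demeaned version of assumption \ref{assumptions:inference:evals} in \ref{assumptions:fe:inference:old}. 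The cluster-assignment statement is then a direct re-application of the cluster-error lemma, exactly as in equation \ref{equation:cluster_assignment}.

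The hardest step will be the uniform control of the $\ov{x}_i$ and $\ov{e}_i$ correction terms at a true $T^{-a}$ polynomial rate: lemma \ref{lemma:fe:variance_xi} only gives an $L^2$ bound, so I must actually re-enter the Fuk-Nagaev machinery of \cite{Rio2017} that underlies lemma \ref{lemma:bonhomme} and verify that the constants it produces are uniform over $i \in [N]$ under assumptions \ref{assumptions:inference:mixing} and \ref{assumptions:fe:inference:tails}. Once these uniform tail bounds are in hand, combining them with a union bound over $i$ — permitted because $N$ must be sub-polynomial in $T$ for the $o_P(T^{-a})$ conclusion — extinguishes the correction terms at the required rate, and the rest of the proof collapses to a mechanical translation of the arguments of theorem \ref{thm:inference} to the tilded data.
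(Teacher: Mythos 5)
Your overall strategy is the same as the paper's: rerun the proof of theorem \ref{thm:inference} on the demeaned data, decomposing every tilded quantity into a raw strongly-mixing piece plus corrections in $\ov{x}_i$ and $\ov{e}_i$ (e.g.\ $\frac{1}{T}\sum_t \wt{e}_{it}\wt{x}_{it} = \frac{1}{T}\sum_t e_{it}x_{it} - \ov{e}_i\ov{x}_i$), and controlling the corrections with the same Fuk--Nagaev machinery, using the constant regressor to handle $\ov{e}_i$. Two execution points diverge from the paper and one of them is a genuine misstep. First, you justify a union bound over $i \in [N]$ by asserting that sub-polynomial growth of $N$ is ``implicit in the $o_P(T^{-a})$ conclusion''; it is not. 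The first conclusion $\wh{\theta} = \thetaoracle + o_P(T^{-a})$ carries no restriction on $N$, and the paper never needs one there: the structure of lemma \ref{lemma:cluster_errors} only requires the \emph{per-unit} tail bounds $\max_i \prob(\cdot) = o(T^{-a})$, which feed into a Markov bound on the average $\frac{1}{N}\sum_i$ without ever multiplying by $N$ (the factor $N$ surfaces only in the second conclusion, which is why it reads $o(NT^{-a})$). Lemma \ref{lemma:bonhomme} already delivers these bounds with constants uniform in $i$, so the union bound should simply be dropped; as written your argument proves a weaker statement or smuggles in an unstated assumption. Second, in the quadratic term the centering is around $E[(\wt{x}_{it}'\Delta\theta)^2]$, which contains the deterministic shift $\var(\ov{x}_i'\Delta\theta)$; this does not admit (and does not need) a $T^{-a}$ tail bound. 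The paper shows $\max_i \var(\ov{x}_i'\Delta\theta) = O(1/T)$ (lemma \ref{lemma:fe:variance_xi}) and simply absorbs it into the separation threshold $\wt{d}_{min}$ for $T$ large, reserving the tail machinery for the centered stochastic factors. Your plan to ``re-enter the Fuk--Nagaev machinery'' for this piece is chasing a rate you do not need and cannot get for a nonrandom quantity.
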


The analogue of theorem \ref{thm:clt} follows immediately from theorem \ref{thm:fe:inference} by replacing $y_{it}$, $x_{it}$ and $e_{it}$ with the appropriate de-meaned variables. 
Specifically, define
\begin{align*}
\covxmatrix_{\ell a, s b} &= \frac{1}{NT} \sum_{i=1}^N \sum_{t=1}^T \wt{x}_{it \ell} \wt{x}_{its}' \one(c_{is} = b) \one(c_{i\ell} = a) \\
\cltvec_{\ell a} &= \frac{1}{NT} \sum_{i=1}^N \sum_{t=1}^T \wt{e}_{it} \one(c_{i \ell} = a) \wt{x}_{it\ell}
\end{align*}

and let

\[
\frac{1}{NT} \sum_{i, j = 1}^N \sum_{t, t' = 1}^T E[(e_{it} - \ov{e}_i)(e_{jt'} - \ov{e}_j) \one(c_{i \ell}^0 = a) \one(c_{js}^0 = b) (x_{it\ell} - \ov{x}_{i\ell}) (x_{jt's} - \ov{x}_{js})'] \to \Omega_{\ell a, s b} \label{equation:fe:covariance}
\]

as $N, T$ goes to infinity (we assume the limit exists), then the analogue of \ref{thm:clt} is 

\begin{thm} \label{thm:fe:clt}
Suppose that the assumptions in \ref{assumptions:clt} are satisfied with $(x, e)_{it}$ replaced by $(\wt{x}, \wt{e})_{it}$ and the matrices $\covxmatrix$ and $\Omega$ as defined above. 
Also let there $r > 0$ such that $\sqrt{N}T^{-r} = o(1) $. 
Then  

\begin{equation}
\sqrt{NT}(\vect(\wh{\theta} - \theta^0)) \convd \mathcal{N}(0, M \inv \Omega M)
\end{equation}
\end{thm}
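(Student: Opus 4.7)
The plan is to reduce Theorem \ref{thm:fe:clt} to the asymptotic behavior of the infeasible oracle estimator $\thetaoracle$ (built from the de-meaned data with true clusters plugged in), exactly in the spirit of Theorem \ref{thm:clt}, and then obtain the CLT for $\thetaoracle$ from the first-order conditions.

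First I would invoke Theorem \ref{thm:fe:inference} to get $\wh{\theta} = \thetaoracle + o_P(T^{-a})$ for every $a > 0$. Under the hypothesis $\sqrt{N}\,T^{-r} = o(1)$, picking any $a > r + 1/2$ makes $\sqrt{NT}\cdot o_P(T^{-a}) = o_P(\sqrt{N}\,T^{1/2 - a}) = o_P(1)$. By Slutsky it therefore suffices to show
\[
\sqrt{NT}\,\vect(\thetaoracle - \theta^0) \convd \mathcal{N}(0,\, M^{-1}\Omega M^{-1}).
\]

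Next I would write down the first-order condition for $\thetaoracle$ in the de-meaned problem; the proposition preceding Assumption \ref{assumptions:clt} gives $\covxmatrix\,\vect(\thetaoracle) = \cltvec$, where $\covxmatrix$ and $\cltvec$ are now constructed from $(\wt{x}, \wt{y})_{it}$ as specified immediately before Theorem \ref{thm:fe:clt}. Plugging in $\wt{y}_{it} = \wt{x}_{it}'\theta^0(c_i^0) + \wt{e}_{it}$ decomposes $\cltvec = \covxmatrix\,\vect(\theta^0) + w$ with $w_{\ell a} = \frac{1}{NT}\sum_{i,t} \wt{e}_{it}\,\one(c_{i\ell}^0 = a)\,\wt{x}_{it\ell}$. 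Assumption \ref{assumptions:clt:prob} (applied to the de-meaned data) gives $\covxmatrix \convp M \succ 0$, so $\covxmatrix$ is invertible w.h.p., and rearranging yields
\[
\sqrt{NT}\,\vect(\thetaoracle - \theta^0) = \covxmatrix^{-1}\,\sqrt{NT}\,w.
\]
Combining the $\convp$ statement for $\covxmatrix$ with $\sqrt{NT}\,w \convd \mathcal{N}(0, \Omega)$ (Assumption \ref{assumptions:clt:clt} for the de-meaned variables) and the continuous mapping theorem gives the desired limiting Gaussian distribution, which together with the oracle-equivalence step above proves the theorem.

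The main obstacle is \emph{not} the algebra above, which is a routine Slutsky-plus-CLT sandwich, but rather the verification of the high-level assumptions in \ref{assumptions:clt} once $(x, e)_{it}$ is replaced by the within-transformed $(\wt{x}, \wt{e})_{it}$. The within transformation couples every period of each unit through $\ov{e}_i$ and $\ov{x}_i$, so one cannot simply inherit strong-mixing properties of $(x_{it}, e_{it})$; one must show that the $O(1/T)$ contamination from the sample means is asymptotically negligible relative to $\sqrt{NT}$-scaled partial sums. This is precisely the role played by Lemma \ref{lemma:fe:variance_xi}, and a similar uniform-in-$i$ control of $\var(\ov{e}_i)$, combined with a standard cross-sectional CLT under the independence/moment conditions of Assumption \ref{assumptions:fe:inference}. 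Since Theorem \ref{thm:fe:clt} imposes these CLT and convergence-in-probability statements as hypotheses, the proof reduces entirely to the three-step reduction outlined above.
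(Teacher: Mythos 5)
Your proposal is correct and follows essentially the same route as the paper, which simply notes that the fixed-effects CLT "follows immediately from Theorem \ref{thm:fe:inference} by replacing $y_{it}$, $x_{it}$ and $e_{it}$ with the appropriate de-meaned variables" — i.e., oracle equivalence at rate $o_P(T^{-a})$ with $a > r + 1/2$, the normal equations $\covxmatrix\,\vect(\thetaoracle) = \cltvec$ for the within-transformed data, and the high-level Assumptions \ref{assumptions:clt:prob} and \ref{assumptions:clt:clt} plus Slutsky. Your closing remark correctly identifies that the substantive verification work (controlling the $\ov{x}_i$, $\ov{e}_i$ contamination, e.g.\ via Lemma \ref{lemma:fe:variance_xi}) lives in the hypotheses rather than in this proof.
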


We propose to use the HAC estimator defined in equation \ref{equation:variance_estimator}, with $\wh{e}_{it}$ replaced by the residuals from the fixed effects problem \ref{fe:estimator}. 

\begin{proof}[Proof of Theorem \ref*{thm:fe:inference}]

Following the same arguments as in the proof of theorem \ref{thm:inference}, equation \ref{equation:zbound} becomes
\begin{align}
&\max_{i} \prob \left ( \frac{1}{T}\sum_{t=1}^T [\wt{x}_{it}'(\theta^0(c') - \theta^0(c))]^2 \leq 4 \eta MK + (4 \eta + 2M) \eta \right ) \nonumber \\
&+ \max_{i}\prob \left ( \frac{1}{T}\sum_t \|\wt{x}_{it}\|^2 > M' \right ) +  \max_{i} \prob \left (  \left \| \frac{1}{T} \sum_{t = 1}^T \wt{e}_{it} \wt{x}_{it} \right \| > \eta \right ) \label{equation:fe:zbound} 
\end{align}

where we have replaced $M'$ with an arbitrary positive constant $K$, to be determined below.
For the second term, note that by assumption \ref{assumptions:inference:xbound}, we have $\max_i \prob(\sum_t \|x_{it}\|^2 > M') = o(T^{-a})$ for each $a > 0$. 
Note that for any $C > 0$, since $z \leq z^2 + 1$ on $\mathbb{R}_{\geq 0}$ 
\begin{align*}
\prob \left (\frac{1}{T} \sum_t \|x_{it} \| > C \right) \leq \prob \left ( \frac{1}{T} \sum_t (\|x_{it} \|^2 + 1) > C \right) \leq \prob \left ( \frac{1}{T} \sum_t \|x_{it} \|^2  > C - 1 \right) 
\end{align*}

Also note that $\|x_{it} - \ov{x}_i\|^2 \leq 2(\|x_{it}\|^2 + \|\ov{x}_i\|^2)$, and $\prob(\|\ov{x}_i\|^2 > C_1) \leq \prob(\frac{1}{T} \sum_t \|x_{it}\| > C_1^{1/2})$ by Cauchy-Schwarz. 
Putting this all together, we find that
\begin{align*}
\prob \left( \frac{1}{T} \sum_t \|x_{it} - \ov{x}_i\|^2 > K \right) &\leq \prob \left (\frac{1}{T} \sum_t \|x_{it} \|^2 > K/4 \right ) + \prob \left (\|\ov{x}_i\|^2 > K/4 \right) \\
&\leq \prob \left (\frac{1}{T} \sum_t \|x_{it} \|^2 > K/4 \right ) + \prob \left(\frac{1}{T} \sum_t \|x_{it}\|^2 > \sqrt{K}/2 - 1\right)\\
&= o(T^{-a})
\end{align*}

The first inequality follows from $\|a + b\|^2 \leq 2(\|a\|^2 + \|b\|^2)$ and a union bound. 
Moreover, the $o(T^{-a})$ statement holds uniformly over $i$ as long as $K \geq 4(M' + 1)^2$, where $M'$ is as in the uniform large deviations bound on $\|x_{it}\|^2$ in assumption \ref{assumptions:inference:xbound}. \\

For the third term, note that $\frac{1}{T} \sum_t \wt{e}_{it} \wt{x}_{it} = \frac{1}{T} \sum_t (e_{it} - \ov{e}_i)(x_{it} - \ov{x}_i) = \frac{1}{T}\sum_t e_{it} x_{it} - \ov{e}_i \ov{x}_i$.
The term $\frac{1}{T} \sum_t e_{it} x_{it}$ is uniformly $o_p(T^{-a})$ for any $a > 0$ by assumption \ref{assumptions:fe:inference:old}, as shown in the proof of \ref{thm:inference}.
For $c > 0$, the second term $\ov{e}_i \ov{x}_i$ has 
\begin{align*}
\prob(\|\ov{e}_i \ov{x}_i\|> c) &\leq \prob(\|\ov{x}_i\| > M' + 1) + \prob \left ( |\ov{e}_i| > \frac{c}{M' + 1} \right) \\
&\leq \prob(\frac{1}{T}\sum_t \|x_{it}\|^2 > M') + \prob \left ( |\ov{e}_i| > \frac{c}{M' + 1} \right) 
\end{align*}

The first term is uniformly $o(T^{-a})$ (in the sense of equation \ref{equation:zbound}) by assumption, and the second term is uniformly $o(T^{-a})$ by the same type of argument in the main proof using assumptions \ref{assumptions:fe:inference:one}, \ref{assumptions:inference:tails}, and \ref{assumptions:inference:limits} to invoke lemma \ref{lemma:bonhomme} on tail bounds for strongly mixing processes. \\

Let $K = 4(M' + 1)^2$ and $\eta$ such that $4 \eta MK + (4 \eta + 2M)\eta < \wt{d}_{min} / 2$. 
With $g_{it} = E((x_{it}'(\theta^0(c)-\theta^0(c')))^2)$ and $T'$ such that $\frac{1}{T'} \sum_{t=1}^{T'} g_{it} \geq \frac{1}{3} \wt{d}_{min}$. 
Then for $T > T'$, the first term is 
\begin{align*}
\prob \left ( \frac{1}{T}\sum_t \left ([\wt{x}_{it}'(\theta^0(c') - \theta^0(c))]^2 - g_{it} \right ) \leq (1/3)\wt{d}_{min} - \frac{1}{T} \sum_t g_{it} \right ) \\
\leq \prob \left ( \left | \frac{1}{T}\sum_t [\wt{x}_{it}'(\theta^0(c') - \theta^0(c))]^2 - g_{it} \right | \geq \frac{1}{6} \wt{d}_{min} \right )
\end{align*}

Setting $\Delta \theta \equiv \theta^0(c') - \theta^0(c)$, we can expand each term in the sum on the left hand side as
\begin{align*}
 [\wt{x}_{it}'(\theta^0(c') - \theta^0(c))]^2 - g_{it} &= ((\wt{x}_{it}' \Delta \theta)^2 - E(\wt{x}_{it}' \Delta \theta)^2) = (x_{it}' \Delta \theta)^2 - E(x_{it}' \Delta \theta)^2 \\
 &- 2(\ov{x}_i' \Delta \theta \Delta \theta' x_{it} - E  \ov{x}_i' \Delta \theta \Delta \theta' x_{it}) + ((\ov{x}_i' \Delta \theta)^2 - E(\ov{x}_i' \Delta \theta)^2) \\
& \equiv B^1_{iT} + B^2_{iT} + ((\ov{x}_i' \Delta \theta)^2 - E(\ov{x}_i' \Delta \theta)^2) 
\end{align*}

Fix $C > 0$. 
The first term has $\prob(\frac{1}{T}\sum_t (x_{it}' \Delta \theta)^2 - E(x_{it}' \Delta \theta)^2 > C) = o(T^{-a})$ uniformly over $i$ by applying lemma \ref{lemma:bonhomme} exactly as in the proof of the main theorem.  
For the second term, note that $\frac{1}{T}\sum_t (\ov{x}_i'\Delta \theta \Delta \theta' x_{it} - \frac{1}{T}\sum_t E \ov{x}_i' \Delta \theta \Delta \theta' x_{it}) = (\ov{x}_i - E\ov{x}_i)'\Delta \theta \Delta \theta' \ov{x}_i$, and 
\begin{align*}
|(\ov{x}_i - E\ov{x}_i)'\Delta \theta \Delta \theta' \ov{x}_i| \leq \|(\ov{x}_i - E\ov{x}_i)\| \|\Delta \theta \Delta \theta' \ov{x}_i \| \leq \|(\ov{x}_i - E\ov{x}_i)\| \|\Delta \theta \|^2 \|\ov{x}_i \| 
\end{align*}

Then 
\begin{align}
\prob(\|(\ov{x}_i - E\ov{x}_i)\| \|\Delta \theta \|^2 \|\ov{x}_i \| > C) &\leq \prob(\|\Delta \theta \|^2 \|\ov{x}_i \| > M^2 (M' + 1)) + \prob \left (\|(\ov{x}_i - E\ov{x}_i)\| > \frac{C}{M^2 (M' + 1)}\right ) \nonumber \\
&\leq \prob(\frac{1}{T} \sum_t \|x_{it}\|^2 > M') + \prob \left (\|(\ov{x}_i - E\ov{x}_i)\| > \frac{C}{M^2 (M' + 1)}\right ) \nonumber \\ 
&= o(T^{-a}) \label{equation:fe:multiplicative}
\end{align}

uniformly in $i$, where the second inequality follows by assumption $\ref{consistency:compact}$ and the same algebra used above to bound $\|\ov{x}_i\|$ in probability using assumption \ref{assumptions:inference:xbound}. 
That the final term is uniformly $o(T^{-a})$ follows by lemma \ref{lemma:bonhomme}, using the tail conditions and strong mixing assumed in \ref{assumptions:fe:inference:old}. 
The final term above is 
\begin{align*}
(\ov{x}_i' \Delta \theta)^2 - E(\ov{x}_i \Delta \theta)^2 &= (\ov{x}_i' \Delta \theta)^2 - \var(\ov{x}_i' \Delta \theta) - (E\ov{x}_i' \Delta \theta)^2\\
&= -\var(\ov{x}_i' \Delta \theta) + (\ov{x}_i' \Delta \theta - E(\ov{x}_i \Delta \theta))(\ov{x}_i' \Delta \theta + E(\ov{x}_i' \Delta \theta))
\end{align*}

Note that $|E(\ov{x}_i' \Delta \theta)| \leq M E \|\ov{x}_i\| \leq M E\frac{1}{T} \sum_t \|x_{it}\| \leq M E\frac{1}{T} \sum_t \|x_{it}\|^2$ by Cauchy-Schwarz, compactness of $\Theta$, and monotonicity of Lp norms, respectively. 
Let $E_{iT} = \one(\frac{1}{T} \sum_t \|x_{it}\|^2 > M')$, then 
\begin{align}
\sup_i E \left ( \frac{1}{T} \sum_t \|x_{it}\|^2 \right ) &\leq \sup_i E \left (\one(E_{it}) \frac{1}{T}\sum_t \|x_{it}\|^2 \right) + M' \nonumber \\
&\leq M' + \sup_i \prob(E_{iT})^{1/2}\left (\frac{1}{T^2} \sum_{t,s} E\|x_{it}\|^2 \|x_{is}\|^2 \right )^{1/2} \nonumber \\
&= M' + o(T^{-a})O(1) = O(1) \label{equation:fe:moment_sum}
\end{align}

where the second inequality uses Cauchy-Schwarz, and the last line uses assumptions \ref{assumptions:inference:xbound} and \ref{assumptions:fe:inference:xnorm}. 
Noting that $\ov{x}_i'\Delta \theta - E(\ov{x}_i' \Delta \theta = o_p(T^{-a})$ by mixing and tail assumptions on $x_{it}$, compactness, and lemma \ref{lemma:bonhomme}, the product term can now be shown to have $\prob[(\ov{x}_i' \Delta \theta - E(\ov{x}_i \Delta \theta))(\ov{x}_i' \Delta \theta + E(\ov{x}_i' \Delta \theta) > C] = o(T^{-a})$ using the same type of argument as in equation \ref{equation:fe:multiplicative}.
Lemma \ref{lemma:fe:variance_xi} in the supplemental appendix shows that $\max_i \var(\ov{x}_i'\Delta \theta) = O(1/T)$ under our assumptions. 
In particular, we can choose $T''$ such that $\max_i \var(\ov{x}_i'\Delta \theta) < (\wt{d}_{min} / 12)$ for all $T > T''$. \\

Finally, define $B^3_{iT} = (\ov{x}_i' \Delta \theta - E(\ov{x}_i \Delta \theta))(\ov{x}_i' \Delta \theta + E(\ov{x}_i' \Delta \theta)$ and $B^k_T \equiv \frac{1}{T} \sum_t B^k_{iT}$. 
Then for $T > \max(T', T'')$, for all $i$ we have  
\begin{align*}
&\prob \left ( \left | \frac{1}{T}\sum_t [\wt{x}_{it}'(\theta^0(c') - \theta^0(c))]^2 - g_{it} \right | \geq \frac{1}{6} \wt{d}_{min} \right ) \leq \prob \left (B^1_T + B^2_T + B_3^T + \var(\ov{x}_i'\Delta \theta) \geq \frac{1}{6}\wt{d}_{min} \right ) \\
&\leq \prob \left (B^1_T + B^2_T + B_3^T + \geq \frac{1}{12}\wt{d}_{min} \right ) \leq \sum_{k=1}^3 \prob \left (B^k_T > \frac{1}{36} \wt{d}_{min} \right) = o(T^{-a}) 
\end{align*}

The first inequality follows from the triangle inequality, the second from $T > T''$ and the final inequality from a union bound. 
The $o(T^{-a})$ holds uniformly in $i$ by the arguments above. 
This complets the proof that equation \ref{equation:fe:zbound} is $o(T^{-a})$ for any $a > 0$. \\

One final issue is the use of assumption \ref{assumptions:inference:xnorm} in equation \ref{equation:inference:Op1}. 
For the fixed effects case, we replaced this assumption with assumption $\ref{assumptions:fe:inference:xnorm}$; however, one can show that $\frac{1}{T^2} \sum_{t, s} \|\wt{x}_{it}\|^2 \|\wt{x}_{is}\|^2 \leq \frac{16}{T^2} \sum_{t, s} \|x_{it}\|^2 \|x_{is}\|^2$, so that $\max_i \frac{1}{T^2} \sum_{t, s} E(\|\wt{x}_{it}\|^2 \|\wt{x}_{is}\|^2) = O(1)$. 
Then $\frac{1}{NT^2} \sum_i \sum_{t, s} \|\wt{x}_{it}\|^2 \|\wt{x}_{is}\|^2 = O_P(1)$ by the Markov inequality.  
The remainder of the proof follows exactly as in the proof of theorem $\ref{thm:inference}$, substituting $(\wt{x}, \wt{e})_{it}$ for $(x, e)_{it}$. 
\end{proof}

\subsection{Discussion of Assumption \ref*{consistency:eigenvalue}} \label{appendix:eigenvalue_discussion}

Let $\mathcal{C}_k = \prod_{\ell} [k_{\ell}]$ and $\Gamma_{k} = \left [ \, [N] \to \mathcal{C}_k \, \right ]$ denote cluster space and possible cluster labelings of the cross-sectional units when we allow $k$ to be misspecified $k \not = k^0$, as in section \ref{section:model}.  
Recall that the general version of assumption \ref{consistency:eigenvalue} requires that there exist a $\delta > 0$ such that  
\[
\rho^k_{NT} \equiv \min_{c' \in \clusterspace^{k_0}} \min_{\gamma \in \Gamma^k} \max_{c \in \clusterspace^k} \lambda_{min} \left (\frac{1}{NT} \sum_{i, t} x_{it} x_{it}' \mathds{1}(c_i^0 = c')\mathds{1}(c_i = c) \right )\geq \delta - o_p(1)   
\]

Consider the inner term $\min_{\gamma \in \Gamma^k} \max_{c \in \clusterspace^k} \rho(c, c', \gamma)$ (recall that $\rho(c, c', \gamma)$ was defined to be the inner minimum eigenvalue). 
One interpretation of this term is given by the following two-period game: (1) An adversary colors each unit $i \in [N]$ using at most $|\clusterspace|$ colors after which (2) the econometrician chooses a color $c^*$ and forms the sample covariance matrix
\[
\frac{1}{N} \sum_{i \in I_c} \frac{1}{T} \sum_{t=1}^T x_{it}x_{it}'
\]
using only units with that color $i \in I_{c^*}$ so that its minimum eigenvalue $\lambda_{min}$ is large. \\ 
   
In the following lemma, we analyze the stochastic convergence of the eigenvalues generated by this process under full independence. 
We will need the following assumptions
\begin{assumption} \label{assumptions:lemmas:eigenvalue}
Consider the following assumptions
\begin{enumerate}[label={(\alph*)}, ref ={\ref*{assumptions:lemmas:eigenvalue}.(\alph*)}, itemindent=.5pt, itemsep=.5pt]
    \item There exists $\sigma^2 > 0$ such that $x_{it} \sim \text{SubG}(\sigma^2)$ for all $i, t$. \label{assumptions:lemmas:eigenvalue:subg}
    \item $x_{it}$ are jointly independent for all $(i, t)$.  
\end{enumerate} 
\end{assumption}

\begin{lemma}
Let assumptions \ref{assumptions:consistency} and \ref{assumptions:lemmas:eigenvalue} hold. 
Then there exists $\delta > 0$ such that 
\begin{align*}
\prob \left (\inf_{c' \in \mathcal{C}} \inf_{\gamma \in \Gamma_k} \sup_{c \in \mathcal{C}_k} \rho(c, c', \gamma) < \delta \right) = O(\poly(N) \cdot e^{-N})
\end{align*}
\end{lemma}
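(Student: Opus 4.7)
The plan is to reduce the stochastic inf-inf-sup over cluster assignments to a deterministic pigeonhole reduction followed by a matrix concentration inequality and an exponential union bound over subsets of $[N]$. The game-theoretic interpretation makes this natural: whatever coloring $\gamma$ the adversary plays on the fiber $I_{c'} := \{i : c_i^0 = c'\}$, some color $c^* \in \clusterspace_k$ must be assigned to a positive fraction of that fiber, and on those units the empirical second-moment matrix concentrates around a positive-definite population version.

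First I would fix an adversarial pair $(c', \gamma)$ and extract by pigeonhole a color $c^*$ such that $S := \{i \in I_{c'} : \gamma(i) = c^*\}$ has $|S| \geq |I_{c'}|/|\clusterspace_k|$. Under a standing balancedness condition $\min_{c'} |I_{c'}|/N \geq \alpha > 0$ (implicit for nondegenerate cluster DGPs), this yields $|S| \geq \beta N$ with $\beta := \alpha/|\clusterspace_k|$, and writing
\[
M(c^*, c', \gamma) = \frac{|S|}{N}\,\bar{M}_S, \qquad \bar{M}_S := \frac{1}{|S|T}\sum_{i \in S}\sum_{t=1}^T x_{it}x_{it}',
\]
gives $\sup_c \rho(c, c', \gamma) \geq \beta\,\lambda_{\min}(\bar{M}_S)$, reducing the task to a uniform lower bound on $\lambda_{\min}(\bar{M}_S)$ over large subsets. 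Second, under independence and the sub-Gaussian hypothesis in \ref{assumptions:lemmas:eigenvalue}, together with the implicit population lower bound $\lambda_{\min}(E[x_{it}x_{it}']) \geq \lambda_0 > 0$, the centered matrices $x_{it}x_{it}' - E[x_{it}x_{it}']$ are sub-exponential with Orlicz norm $\lesssim \sigma^2$, so matrix Bernstein (Tropp, 2012) delivers
\[
\prob\bigl(\lambda_{\min}(\bar{M}_S) \leq \lambda_0/2\bigr) \leq 2p \exp(-c_1 |S| T)
\]
for a constant $c_1 = c_1(\lambda_0, \sigma^2) > 0$.

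Finally, I would union-bound over cluster assignments. Since $\rho(c, c', \gamma)$ depends on $\gamma$ only through the level sets $\{i : \gamma(i) = c,\ c_i^0 = c'\}$, the number of distinct configurations is at most $|\clusterspace| \cdot |\clusterspace_k| \cdot 2^N$, so setting $\delta = \beta \lambda_0 / 2$,
\[
\prob\bigl(\inf_{c', \gamma} \sup_c \rho(c, c', \gamma) < \delta\bigr) \leq 2p \cdot |\clusterspace| \cdot |\clusterspace_k| \cdot 2^N \cdot \exp(-c_1 \beta N T),
\]
which is $O(\poly(N) \cdot e^{-N})$ provided $c_1 \beta T > \log 2 + 1$. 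The main obstacle is precisely this last inequality: the sub-exponential concentration rate must dominate the combinatorial $2^N$ blow-up from union-bounding over cluster colorings, which constrains how small $\delta$ can be taken and requires $T$ not too small relative to the sub-Gaussian parameter and the balancedness constant. A secondary technicality is identifying which side conditions (true-cluster balancedness and nondegeneracy of $E[x_{it}x_{it}']$) must be imposed implicitly for the stated bound to even make sense — without them the inf-inf-sup can be driven to zero for purely combinatorial reasons.
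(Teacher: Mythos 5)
The paper does not actually prove this lemma: the proof environment in appendix \ref{appendix:eigenvalue_discussion} reads ``TBD,'' so there is no argument of the paper's to compare yours against. Taken on its own terms, your strategy --- pigeonhole a color $c^*$ receiving at least a $1/|\clusterspace_k|$ fraction of the fiber $I_{c'}$, apply a sub-exponential matrix concentration bound to the resulting submatrix average $\bar{M}_S$, and union-bound over the at most $2^N$ subsets $S$ that can arise --- is the natural one and the bookkeeping is right: the per-subset failure probability $\exp(-c_1\beta NT)$ dominates the $2^N$ combinatorial factor once $T$ exceeds a fixed constant depending on $(\lambda_0,\sigma^2,\beta)$, which is harmless under $N,T\to\infty$ asymptotics. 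Your two flagged side conditions are genuine rather than cosmetic: with only assumptions \ref{assumptions:consistency} and \ref{assumptions:lemmas:eigenvalue} as literally stated, the conclusion can fail (take $x_{it}\equiv 0$, which is sub-Gaussian and independent, or let one true cluster contain a vanishing fraction of units), so a complete proof must add (i) balancedness, $\min_{c'}|\{i:c_i^0=c'\}|\geq\alpha N$, and (ii) a uniform population lower bound $\inf_{i,t}\lambda_{min}(E[x_{it}x_{it}'])\geq\lambda_0>0$, where uniformity over $(i,t)$ matters because the subset $S$ selected by the adversary is arbitrary. The one technical point to nail down is the concentration inequality itself: the bounded-matrix form of matrix Bernstein does not apply directly to $x_{it}x_{it}'$, so you need the Bernstein-moment (sub-exponential) variant or a truncation step; for the fixed deviation $\lambda_0/2$ this still yields the $\exp(-c_1|S|T)$ rate your union bound requires.
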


\begin{proof}
TBD. 
\end{proof}

\subsection{Computation} \label{appendix_computation}

As described in section \ref{section:model}, to solve problem \ref{problem} we primarily rely on Lloyd's algorithm, which performs coordinate descent on $\Theta \times \Gamma$. 
It is well known that this problem may be nonconvex, so in general coordinate descent will only yield a local minimum. 
To mitigate this issue, we rely on multiple random initializations.
In our simulations we choose initial $\theta \sim \mathcal{N}(0, \sigma^2 I)$ and each $c_{i\ell} \sim \text{Unif}([k_{\ell}])$ independently. \\ 

\textbf{Convergence Over Initializations} - In this section, we give some evidence on the convergence of our algorithm for different data-generating processes. 
Given $1 \leq v \leq S$ random initializations, let $(\wh{\theta}_v, \wh{\gamma}_v)$ be the estimator achieved on the $v^{th}$ initialization.   
Define $\qopt_s \equiv \min_{1 \leq v \leq s} \samplerisk(\wh{\theta}_v, \wh{\gamma}_v)$ and $\thetaopt_s$ to be the estimator that achieves $\qopt_s$ (out of the first $s$ initializations).
Define the mean relative errors
\begin{equation} \label{equation:computation:relative_errors}
r_Q(s) = \mathbb{E} \left [\frac{\qopt_s - \qopt_S}{\qopt_S} \right ] \quad r_{\theta}(s) = \mathbb{E} \left [\frac{\|\thetaopt_s - \thetaopt_S\|}{\|\thetaopt_S\|} \right ]
\end{equation}
where each expectation is taken over the joint distribution of $(x_{it}, y_{it})$ and the sequence of random initializations $(\theta, \gamma)^{init}_s$. 
Monte Carlo approximations of the paths $r_Q(s)$ and $r_{\theta}(s)$ for DGP's mirroring those used in section \ref{section:monte_carlo} are shown in figures \ref{figure:computation:convergence_angle}, \ref{figure:computation:convergence_nt}, and \ref{figure:computation:convergence_cluster}. 
Note in particular that ``angle'' refers to a measure of cluster separation, as in the simulation design in section \ref{section:monte_carlo}.
In table \ref{table:computation:relative_error}, we show the number of initializations required to achieve $0.1\%$ relative error for each DGP.
Each simulation reports results up to $S=200$, calculated using $200$ independent sample paths.\\  

The results show that problem \ref{problem} becomes significantly easier as $(N, T)$ increases.
Problems with fewer, well-separated clusters also converge more quickly to a stable solution (specifically, no improvements with additional random initializations up to $S$). 
In particular, $r_Q(50) \leq 0.01\%$ for all DGP's.
Thus, we use $S = 50$ for our Monte Carlo simulations, reported in section \ref{section:monte_carlo}. 
There is a large literature in computer science on heuristics for the least-squares partitioning problem, as well as some recent work on exact methods. 
See BM Appendix S1 and the references therein for more details. \\ 

\textbf{Algorithm Hyperparameters} - The hyperparameters for our implementation are $(S, tol, itermax)$, where $(tol, itermax)$ define a stopping rule for coordinate descent. 
With $j$ denoting the number of update cycles ((2) and (3) in our algorithm), we stop if either $j > itermax$ or  $\|\wh{\theta}_j - \wh{\theta}_{j - 1}\| < tol$. 
We use $tol=1\cdot 10^{-8}$ and $itermax=400$. 
We found solutions to be very insensitive to both hyperparameters for $tol > 1 \cdot 10^{-6}$ and $itermax > 100$. 
We use $S=50$, as described above. \\ 

\clearpage

\section{Tables and Figures} \label{section:tables} 

\begin{table}[htbp]
  \centering
  \caption{Effect of Cluster Separation}
    \begin{tabular}{lllllllllllll}
          &       &       &       &       &       &       &       &       &       &       &       &  \\
    \midrule
          &       & \multicolumn{2}{c}{Coverage} &       & \multicolumn{2}{c}{Bootstrap Coverage} &       & \multicolumn{2}{c}{Param. MSE} &       & \multicolumn{2}{c}{Cluster Loss} \\
    \multicolumn{1}{c}{Angle ($\alpha$)} &       & \multicolumn{1}{c}{AR(1)} & \multicolumn{1}{c}{HK} &       & \multicolumn{1}{c}{AR(1)} & \multicolumn{1}{c}{HK} &       & \multicolumn{1}{c}{AR(1)} & \multicolumn{1}{c}{HK} &       & \multicolumn{1}{c}{AR(1)} & \multicolumn{1}{c}{HK} \\
\cmidrule{1-1}\cmidrule{3-4}\cmidrule{6-7}\cmidrule{9-10}\cmidrule{12-13}    \multicolumn{1}{c}{1.57} &       & \multicolumn{1}{c}{0.90} & \multicolumn{1}{c}{0.89} &       & \multicolumn{1}{c}{0.88} & \multicolumn{1}{c}{0.87} &       & \multicolumn{1}{c}{0.047} & \multicolumn{1}{c}{0.053} &       & \multicolumn{1}{c}{0.044} & \multicolumn{1}{c}{0.050} \\
    \multicolumn{1}{c}{1.26} &       & \multicolumn{1}{c}{0.86} & \multicolumn{1}{c}{0.84} &       & \multicolumn{1}{c}{0.83} & \multicolumn{1}{c}{0.81} &       & \multicolumn{1}{c}{0.055} & \multicolumn{1}{c}{0.061} &       & \multicolumn{1}{c}{0.074} & \multicolumn{1}{c}{0.083} \\
    \multicolumn{1}{c}{0.94} &       & \multicolumn{1}{c}{0.75} & \multicolumn{1}{c}{0.72} &       & \multicolumn{1}{c}{0.73} & \multicolumn{1}{c}{0.69} &       & \multicolumn{1}{c}{0.059} & \multicolumn{1}{c}{0.067} &       & \multicolumn{1}{c}{0.13} & \multicolumn{1}{c}{0.14} \\
    \multicolumn{1}{c}{0.63} &       & \multicolumn{1}{c}{0.53} & \multicolumn{1}{c}{0.50} &       & \multicolumn{1}{c}{0.51} & \multicolumn{1}{c}{0.49} &       & \multicolumn{1}{c}{0.058} & \multicolumn{1}{c}{0.064} &       & \multicolumn{1}{c}{0.22} & \multicolumn{1}{c}{0.23} \\
    \multicolumn{1}{c}{0.31} &       & \multicolumn{1}{c}{0.25} & \multicolumn{1}{c}{0.25} &       & \multicolumn{1}{c}{0.24} & \multicolumn{1}{c}{0.24} &       & \multicolumn{1}{c}{0.052} & \multicolumn{1}{c}{0.059} &       & \multicolumn{1}{c}{0.37} & \multicolumn{1}{c}{0.38} \\
    \multicolumn{1}{c}{0.16} &       & \multicolumn{1}{c}{0.20} & \multicolumn{1}{c}{0.19} &       & \multicolumn{1}{c}{0.20} & \multicolumn{1}{c}{0.19} &       & \multicolumn{1}{c}{0.048} & \multicolumn{1}{c}{0.054} &       & \multicolumn{1}{c}{0.44} & \multicolumn{1}{c}{0.44} \\
    \midrule
    \multicolumn{13}{l}{\textit{Notes: N=150, T=10}} \\
    \end{tabular}%
  \label{table:separation}%
\end{table}%

\begin{table}[htbp]
  \centering
  \caption{Effect of Sample Size (N, T)}
   \begin{adjustbox}{width=1.1\columnwidth,center} 
    \begin{tabular}{r|cccccccccccccccccc}
    \multicolumn{1}{c}{} &       &       &       &       &       &       &       &       &       &       &       &       &       &       &       &       &       &  \\
    \midrule
    \multicolumn{1}{r}{} &       & \multicolumn{5}{c}{Param. MSE} &       & \multicolumn{5}{c}{Coverage (Analytical)} &       & \multicolumn{5}{c}{Cluster Loss} \\
\cmidrule{3-7}\cmidrule{9-13}\cmidrule{15-19}    \multicolumn{1}{l}{Errors} & T     & N=50  & 100   & 150   & 200   & 250   &       & 50    & 100   & 150   & 200   & 250   &       & 50    & 100   & 150   & 200   & 250 \\
    \midrule
          & 5     & 0.154 & 0.140 & 0.137 & 0.134 & 0.133 &       & 0.761 & 0.763 & 0.772 & 0.768 & 0.756 &       & 0.134 & 0.125 & 0.125 & 0.123 & 0.123 \\
    \multicolumn{1}{l|}{AR(1)} & 10    & 0.051 & 0.047 & 0.047 & 0.046 & 0.046 &       & 0.873 & 0.889 & 0.898 & 0.899 & 0.901 &       & 0.046 & 0.044 & 0.044 & 0.044 & 0.044 \\
          & 15    & 0.021 & 0.019 & 0.018 & 0.018 & 0.018 &       & 0.909 & 0.927 & 0.939 & 0.927 & 0.926 &       & 0.018 & 0.017 & 0.017 & 0.017 & 0.017 \\
          & 20    & 0.009 & 0.008 & 0.008 & 0.008 & 0.007 &       & 0.916 & 0.939 & 0.935 & 0.935 & 0.935 &       & 0.007 & 0.007 & 0.007 & 0.007 & 0.007 \\
          & 25    & 0.005 & 0.004 & 0.004 & 0.004 & 0.003 &       & 0.930 & 0.937 & 0.936 & 0.941 & 0.943 &       & 0.003 & 0.003 & 0.003 & 0.003 & 0.003 \\
    \multicolumn{1}{r}{} &       &       &       &       &       &       &       &       &       &       &       &       &       &       &       &       &       &  \\
          & 5     & 0.160 & 0.146 & 0.144 & 0.143 & 0.140 &       & 0.718 & 0.748 & 0.738 & 0.738 & 0.736 &       & 0.137 & 0.130 & 0.130 & 0.130 & 0.128 \\
    \multicolumn{1}{l|}{HK} & 10    & 0.059 & 0.055 & 0.053 & 0.052 & 0.051 &       & 0.856 & 0.875 & 0.886 & 0.888 & 0.887 &       & 0.053 & 0.051 & 0.050 & 0.049 & 0.049 \\
          & 15    & 0.027 & 0.024 & 0.023 & 0.023 & 0.022 &       & 0.905 & 0.914 & 0.916 & 0.920 & 0.917 &       & 0.023 & 0.022 & 0.021 & 0.022 & 0.021 \\
          & 20    & 0.013 & 0.011 & 0.011 & 0.010 & 0.010 &       & 0.915 & 0.927 & 0.930 & 0.936 & 0.941 &       & 0.010 & 0.010 & 0.010 & 0.009 & 0.009 \\
          & 25    & 0.007 & 0.006 & 0.005 & 0.005 & 0.005 &       & 0.924 & 0.936 & 0.946 & 0.946 & 0.942 &       & 0.004 & 0.004 & 0.005 & 0.004 & 0.004 \\
    \bottomrule
    \end{tabular}%
    \end{adjustbox}
  \label{table:nt}%
\end{table}%

\begin{table}[htbp]
  \centering
  \caption{Effect of Number of Clusters $(k_1, k_2)$}
   \begin{adjustbox}{width=1.1\columnwidth,center} 
    \begin{tabular}{lrrrrrrrrrrrrrr}
          &       &       &       &       &       &       &       &       &       &       &       &       &       &  \\
    \midrule
    \multicolumn{1}{c}{\# Clusters} & \multicolumn{2}{c}{Coverage} &       & \multicolumn{2}{c}{Bootstrap Coverage} &       & \multicolumn{2}{c}{Param. MSE} &       & \multicolumn{2}{c}{Function MSE} &       & \multicolumn{2}{c}{Cluster Loss} \\
    \multicolumn{1}{c}{(k1, k2)} & \multicolumn{1}{c}{AR(1)} & \multicolumn{1}{c}{HK} &       & \multicolumn{1}{c}{AR(1)} & \multicolumn{1}{c}{HK} &       & \multicolumn{1}{c}{AR(1)} & \multicolumn{1}{c}{HK} &       & \multicolumn{1}{c}{AR(1)} & \multicolumn{1}{c}{HK} &       & \multicolumn{1}{c}{AR(1)} & \multicolumn{1}{c}{HK} \\
\cmidrule{2-3}\cmidrule{5-6}\cmidrule{8-9}\cmidrule{11-12}\cmidrule{14-15}    \multicolumn{1}{c}{(1, 2)} & \multicolumn{1}{c}{0.90} & \multicolumn{1}{c}{0.88} &       & \multicolumn{1}{c}{0.88} & \multicolumn{1}{c}{0.86} &       & \multicolumn{1}{c}{0.026} & \multicolumn{1}{c}{0.030} &       & \multicolumn{1}{c}{0.026} & \multicolumn{1}{c}{0.094} &       & \multicolumn{1}{c}{0.035} & \multicolumn{1}{c}{0.040} \\
    \multicolumn{1}{c}{(2, 2)} & \multicolumn{1}{c}{0.86} & \multicolumn{1}{c}{0.83} &       & \multicolumn{1}{c}{0.83} & \multicolumn{1}{c}{0.81} &       & \multicolumn{1}{c}{0.054} & \multicolumn{1}{c}{0.063} &       & \multicolumn{1}{c}{0.054} & \multicolumn{1}{c}{0.184} &       & \multicolumn{1}{c}{0.074} & \multicolumn{1}{c}{0.085} \\
    \multicolumn{1}{c}{(2, 3)} & \multicolumn{1}{c}{0.84} & \multicolumn{1}{c}{0.81} &       & \multicolumn{1}{c}{0.82} & \multicolumn{1}{c}{0.79} &       & \multicolumn{1}{c}{0.070} & \multicolumn{1}{c}{0.078} &       & \multicolumn{1}{c}{0.070} & \multicolumn{1}{c}{0.218} &       & \multicolumn{1}{c}{0.090} & \multicolumn{1}{c}{0.100} \\
    \multicolumn{1}{c}{(3, 3)} & \multicolumn{1}{c}{0.81} & \multicolumn{1}{c}{0.78} &       & \multicolumn{1}{c}{0.80} & \multicolumn{1}{c}{0.76} &       & \multicolumn{1}{c}{0.085} & \multicolumn{1}{c}{0.095} &       & \multicolumn{1}{c}{0.085} & \multicolumn{1}{c}{0.258} &       & \multicolumn{1}{c}{0.106} & \multicolumn{1}{c}{0.119} \\
    \multicolumn{1}{c}{(3, 4)} & \multicolumn{1}{c}{0.80} & \multicolumn{1}{c}{0.76} &       & \multicolumn{1}{c}{0.78} & \multicolumn{1}{c}{0.74} &       & \multicolumn{1}{c}{0.099} & \multicolumn{1}{c}{0.109} &       & \multicolumn{1}{c}{0.099} & \multicolumn{1}{c}{0.281} &       & \multicolumn{1}{c}{0.118} & \multicolumn{1}{c}{0.131} \\
    \multicolumn{1}{c}{(4, 4)} & \multicolumn{1}{c}{0.77} & \multicolumn{1}{c}{0.73} &       & \multicolumn{1}{c}{0.76} & \multicolumn{1}{c}{0.72} &       & \multicolumn{1}{c}{0.114} & \multicolumn{1}{c}{0.123} &       & \multicolumn{1}{c}{0.114} & \multicolumn{1}{c}{0.306} &       & \multicolumn{1}{c}{0.132} & \multicolumn{1}{c}{0.143} \\
    \midrule
    \multicolumn{15}{l}{\textit{Notes: N=150, T=10}} \\
    \end{tabular}%
    \end{adjustbox}
  \label{table:clusters}%
\end{table}%

\begin{table}[htbp]
  \centering
  \caption{Effect of Misspecified Blocking of Covariates - Estimation with $B^0=1$ and $B=2$}
    \begin{tabular}{lrrrrrrr}
          &       &       &       &       &       &       &  \\
    \midrule
    \multicolumn{1}{c}{Errors} & \multicolumn{1}{c}{\# Clusters} &       & \multicolumn{2}{c}{Param. MSE} &       & \multicolumn{2}{c}{Function MSE} \\
          & \multicolumn{1}{c}{(k1, k2)} &       & \multicolumn{1}{c}{B=1} & \multicolumn{1}{c}{B=2} &       & \multicolumn{1}{c}{B=1} & \multicolumn{1}{c}{B=2} \\
\cmidrule{2-2}\cmidrule{4-5}\cmidrule{7-8}          & \multicolumn{1}{c}{(1, 2)} &       & 0.027 & 0.026 &       & 0.077 & \multicolumn{1}{c}{0.026} \\
          & \multicolumn{1}{c}{(2, 2)} &       & 0.058 & 0.054 &       & 0.157 & \multicolumn{1}{c}{0.054} \\
    \multicolumn{1}{c}{AR(1)} & \multicolumn{1}{c}{(2, 3)} &       & 0.079 & 0.070 &       & 0.206 & \multicolumn{1}{c}{0.070} \\
          & \multicolumn{1}{c}{(3, 3)} &       & 0.108 & 0.085 &       & 0.273 & \multicolumn{1}{c}{0.085} \\
          & \multicolumn{1}{c}{(3, 4)} &       & 0.137 & 0.099 &       & 0.334 & \multicolumn{1}{c}{0.099} \\
          & \multicolumn{1}{c}{(4, 4)} &       & 0.163 & 0.114 &       & 0.383 & \multicolumn{1}{c}{0.114} \\
          &       &       &       &       &       &       &  \\
          &       &       & \multicolumn{1}{c}{B=1} & \multicolumn{1}{c}{B=2} &       & \multicolumn{1}{c}{B=1} & \multicolumn{1}{c}{B=2} \\
\cmidrule{2-2}\cmidrule{4-5}\cmidrule{7-8}          & \multicolumn{1}{c}{(1, 2)} &       & 0.031 & \multicolumn{1}{c}{0.030} &       & 0.097 & \multicolumn{1}{c}{0.094} \\
          & \multicolumn{1}{c}{(2, 2)} &       & 0.068 & \multicolumn{1}{c}{0.063} &       & 0.200 & \multicolumn{1}{c}{0.184} \\
    \multicolumn{1}{c}{HK} & \multicolumn{1}{c}{(2, 3)} &       & 0.090 & \multicolumn{1}{c}{0.078} &       & 0.252 & \multicolumn{1}{c}{0.218} \\
          & \multicolumn{1}{c}{(3, 3)} &       & 0.122 & \multicolumn{1}{c}{0.095} &       & 0.335 & \multicolumn{1}{c}{0.258} \\
          & \multicolumn{1}{c}{(3, 4)} &       & 0.149 & \multicolumn{1}{c}{0.109} &       & 0.393 & \multicolumn{1}{c}{0.281} \\
          & \multicolumn{1}{c}{(4, 4)} &       & 0.174 & \multicolumn{1}{c}{0.123} &       & 0.441 & \multicolumn{1}{c}{0.306} \\
    \midrule
    \multicolumn{8}{l}{\textit{Notes: N=150, T=10}} \\
    \end{tabular}%
  \label{table:misspec}%
\end{table}%

\begin{table}[htbp]
  \centering
  \caption{Effect of Dimension Imbalance}
   \begin{adjustbox}{width=1\columnwidth,center} 
    \begin{tabular}{lrrrrrrrrrrrrrr}
          &       &       &       &       &       &       &       &       &       &       &       &       &       &  \\
    \midrule
    \multicolumn{1}{c}{dim} & \multicolumn{2}{c}{Coverage-large} &       & \multicolumn{2}{c}{Coverage-small} &       & \multicolumn{2}{c}{Cluster loss-small} &       & \multicolumn{2}{c}{Cluster loss-large} &       & \multicolumn{2}{c}{Param. MSE} \\
    \multicolumn{1}{c}{(m, p-m)} & \multicolumn{1}{c}{AR(1)} & \multicolumn{1}{c}{HK} &       & \multicolumn{1}{c}{AR(1)} & \multicolumn{1}{c}{HK} &       & \multicolumn{1}{c}{AR(1)} & \multicolumn{1}{c}{HK} &       & \multicolumn{1}{c}{AR(1)} & \multicolumn{1}{c}{HK} &       & \multicolumn{1}{c}{AR(1)} & \multicolumn{1}{c}{HK} \\
\cmidrule{2-3}\cmidrule{5-6}\cmidrule{8-9}\cmidrule{11-12}\cmidrule{14-15}    \multicolumn{1}{c}{(1, 11)} & \multicolumn{1}{c}{0.921} & \multicolumn{1}{c}{0.918} &       & \multicolumn{1}{c}{0.599} & \multicolumn{1}{c}{0.538} &       & \multicolumn{1}{c}{0.156} & \multicolumn{1}{c}{0.170} &       & \multicolumn{1}{c}{0.000} & \multicolumn{1}{c}{0.000} &       & \multicolumn{1}{c}{0.009} & \multicolumn{1}{c}{0.009} \\
    \multicolumn{1}{c}{(2, 10)} & \multicolumn{1}{c}{0.930} & \multicolumn{1}{c}{0.931} &       & \multicolumn{1}{c}{0.841} & \multicolumn{1}{c}{0.840} &       & \multicolumn{1}{c}{0.069} & \multicolumn{1}{c}{0.060} &       & \multicolumn{1}{c}{0.001} & \multicolumn{1}{c}{0.001} &       & \multicolumn{1}{c}{0.008} & \multicolumn{1}{c}{0.007} \\
    \multicolumn{1}{c}{(3, 9)} & \multicolumn{1}{c}{0.932} & \multicolumn{1}{c}{0.931} &       & \multicolumn{1}{c}{0.917} & \multicolumn{1}{c}{0.910} &       & \multicolumn{1}{c}{0.029} & \multicolumn{1}{c}{0.028} &       & \multicolumn{1}{c}{0.001} & \multicolumn{1}{c}{0.001} &       & \multicolumn{1}{c}{0.006} & \multicolumn{1}{c}{0.006} \\
    \multicolumn{1}{c}{(4, 8)} & \multicolumn{1}{c}{0.935} & \multicolumn{1}{c}{0.930} &       & \multicolumn{1}{c}{0.922} & \multicolumn{1}{c}{0.916} &       & \multicolumn{1}{c}{0.016} & \multicolumn{1}{c}{0.019} &       & \multicolumn{1}{c}{0.001} & \multicolumn{1}{c}{0.001} &       & \multicolumn{1}{c}{0.005} & \multicolumn{1}{c}{0.005} \\
    \multicolumn{1}{c}{(5, 7)} & \multicolumn{1}{c}{0.931} & \multicolumn{1}{c}{0.934} &       & \multicolumn{1}{c}{0.927} & \multicolumn{1}{c}{0.934} &       & \multicolumn{1}{c}{0.006} & \multicolumn{1}{c}{0.007} &       & \multicolumn{1}{c}{0.003} & \multicolumn{1}{c}{0.002} &       & \multicolumn{1}{c}{0.005} & \multicolumn{1}{c}{0.005} \\
    \multicolumn{1}{c}{(6, 6)} & \multicolumn{1}{c}{0.936} & \multicolumn{1}{c}{0.938} &       & \multicolumn{1}{c}{0.939} & \multicolumn{1}{c}{0.935} &       & \multicolumn{1}{c}{0.003} & \multicolumn{1}{c}{0.003} &       & \multicolumn{1}{c}{0.004} & \multicolumn{1}{c}{0.004} &       & \multicolumn{1}{c}{0.004} & \multicolumn{1}{c}{0.004} \\
    \midrule
    \textit{Notes: N=150, T=10} &       &       &       &       &       &       &       &       &       &       &       &       &       &  \\
    \end{tabular}%
    \end{adjustbox}
  \label{table:imbalance}%
\end{table}%

\begin{table}[htbp]
  \centering
  \caption{Effect of Growing Model Dimension}
    \begin{tabular}{lrrrrrrrrrrrrrr}
          &       &       &       &       &       &       &       &       &       &       &       &       &       &  \\
\cmidrule{1-12}    \multicolumn{1}{c}{dim (p)} & \multicolumn{2}{c}{Coverage} &       & \multicolumn{2}{c}{Param. MSE} &       & \multicolumn{2}{c}{Function MSE} &       & \multicolumn{2}{c}{Cluster Loss} &       &       &  \\
    \multicolumn{1}{c}{Error} & \multicolumn{1}{c}{indep} & \multicolumn{1}{c}{AR(1)} &       & \multicolumn{1}{c}{indep} & \multicolumn{1}{c}{AR(1)} &       & \multicolumn{1}{c}{indep} & \multicolumn{1}{c}{AR(1)} &       & \multicolumn{1}{c}{indep} & \multicolumn{1}{c}{AR(1)} &       &       &  \\
\cmidrule{2-3}\cmidrule{5-6}\cmidrule{8-9}\cmidrule{11-12}    \multicolumn{1}{c}{1} & \multicolumn{1}{c}{0.93} & \multicolumn{1}{c}{0.92} &       & \multicolumn{1}{c}{0.046} & \multicolumn{1}{c}{0.078} &       & \multicolumn{1}{c}{0.023} & \multicolumn{1}{c}{0.065} &       & \multicolumn{1}{c}{0.011} & \multicolumn{1}{c}{0.019} &       &       &  \\
    \multicolumn{1}{c}{2} & \multicolumn{1}{c}{0.94} & \multicolumn{1}{c}{0.91} &       & \multicolumn{1}{c}{0.047} & \multicolumn{1}{c}{0.060} &       & \multicolumn{1}{c}{0.045} & \multicolumn{1}{c}{0.077} &       & \multicolumn{1}{c}{0.012} & \multicolumn{1}{c}{0.015} &       &       &  \\
    \multicolumn{1}{c}{3} & \multicolumn{1}{c}{0.92} & \multicolumn{1}{c}{0.93} &       & \multicolumn{1}{c}{0.051} & \multicolumn{1}{c}{0.059} &       & \multicolumn{1}{c}{0.067} & \multicolumn{1}{c}{0.095} &       & \multicolumn{1}{c}{0.012} & \multicolumn{1}{c}{0.014} &       &       &  \\
    \multicolumn{1}{c}{4} & \multicolumn{1}{c}{0.91} & \multicolumn{1}{c}{0.92} &       & \multicolumn{1}{c}{0.058} & \multicolumn{1}{c}{0.063} &       & \multicolumn{1}{c}{0.098} & \multicolumn{1}{c}{0.122} &       & \multicolumn{1}{c}{0.014} & \multicolumn{1}{c}{0.015} &       &       &  \\
    \multicolumn{1}{c}{5} & \multicolumn{1}{c}{0.92} & \multicolumn{1}{c}{0.90} &       & \multicolumn{1}{c}{0.064} & \multicolumn{1}{c}{0.068} &       & \multicolumn{1}{c}{0.127} & \multicolumn{1}{c}{0.150} &       & \multicolumn{1}{c}{0.015} & \multicolumn{1}{c}{0.016} &       &       &  \\
\cmidrule{1-12}    \multicolumn{15}{l}{\textit{Notes: N=150, T=10}} \\
    \end{tabular}%
  \label{table:dimension}%
\end{table}%

\begin{table}[htbp]
  \centering
    \caption{Number of Initializations for 0.1\% Rel. Error}
    \begin{tabular}{rrrrrrrrr}
          &       &       &       &       &       &       &       &  \\
    \midrule
          &       &       &       &       &       &       &       &  \\
    \multicolumn{1}{c}{(N, T)} &       & \multicolumn{1}{c}{$s_{\theta}$} & \multicolumn{1}{c}{$s_q$} &       & \multicolumn{1}{c}{Angle} &       & \multicolumn{1}{c}{$s_{\theta}$} & \multicolumn{1}{c}{$s_q$} \\
\cmidrule{1-1}\cmidrule{3-4}\cmidrule{6-6}\cmidrule{8-9}    \multicolumn{1}{c}{(20, 10)} &       & \multicolumn{1}{c}{84} & \multicolumn{1}{c}{9} &       & \multicolumn{1}{c}{1} &       & \multicolumn{1}{c}{14} & \multicolumn{1}{c}{0} \\
    \multicolumn{1}{c}{(50, 10)} &       & \multicolumn{1}{c}{65} & \multicolumn{1}{c}{1} &       & \multicolumn{1}{c}{0.8} &       & \multicolumn{1}{c}{98} & \multicolumn{1}{c}{0} \\
    \multicolumn{1}{c}{(100, 10)} &       & \multicolumn{1}{c}{28} & \multicolumn{1}{c}{0} &       & \multicolumn{1}{c}{0.6} &       & \multicolumn{1}{c}{56} & \multicolumn{1}{c}{0} \\
    \multicolumn{1}{c}{(150, 10)} &       & \multicolumn{1}{c}{9} & \multicolumn{1}{c}{0} &       & \multicolumn{1}{c}{0.4} &       & \multicolumn{1}{c}{109} & \multicolumn{1}{c}{2} \\
    \multicolumn{1}{c}{(250, 10)} &       & \multicolumn{1}{c}{7} & \multicolumn{1}{c}{0} &       & \multicolumn{1}{c}{0.2} &       & \multicolumn{1}{c}{139} & \multicolumn{1}{c}{4} \\
          &       &       &       &       & \multicolumn{1}{c}{0.1} &       & \multicolumn{1}{c}{157} & \multicolumn{1}{c}{5} \\
          &       &       &       &       &       &       &       &  \\
    \multicolumn{1}{c}{(N, T)} &       & \multicolumn{1}{c}{$s_{\theta}$} & \multicolumn{1}{c}{$s_q$} &       & \multicolumn{1}{c}{K} &       & \multicolumn{1}{c}{$s_{\theta}$} & \multicolumn{1}{c}{$s_q$} \\
\cmidrule{1-1}\cmidrule{3-4}\cmidrule{6-6}\cmidrule{8-9}    \multicolumn{1}{c}{(50, 5)} &       & \multicolumn{1}{c}{163} & \multicolumn{1}{c}{10} &       & \multicolumn{1}{c}{3} &       & \multicolumn{1}{c}{14} & \multicolumn{1}{c}{1} \\
    \multicolumn{1}{c}{(50, 15)} &       & \multicolumn{1}{c}{17} & \multicolumn{1}{c}{0} &       & \multicolumn{1}{c}{4} &       & \multicolumn{1}{c}{48} & \multicolumn{1}{c}{0} \\
    \multicolumn{1}{c}{(50, 20)} &       & \multicolumn{1}{c}{9} & \multicolumn{1}{c}{1} &       & \multicolumn{1}{c}{5} &       & \multicolumn{1}{c}{153} & \multicolumn{1}{c}{1} \\
    \multicolumn{1}{c}{(50, 25)} &       & \multicolumn{1}{c}{1} & \multicolumn{1}{c}{0} &       & \multicolumn{1}{c}{6} &       & \multicolumn{1}{c}{130} & \multicolumn{1}{c}{4} \\
          &       &       &       &       & \multicolumn{1}{c}{7} &       & \multicolumn{1}{c}{163} & \multicolumn{1}{c}{4} \\
    \midrule
    \multicolumn{9}{r}{} \\
    \end{tabular}%
  \label{table:computation:relative_error}%
\end{table}%

\clearpage

\begin{figure}
\centering
\caption{Algorithm Convergence and Cluster Separation}
  \begin{subfigure}[b]{0.45\textwidth}
    \includegraphics[width=\textwidth]{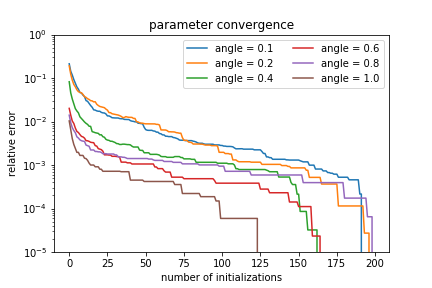}
    \caption{$r_{\theta}(s)$}
  \end{subfigure}
  \begin{subfigure}[b]{0.45\textwidth}
    \includegraphics[width=\textwidth]{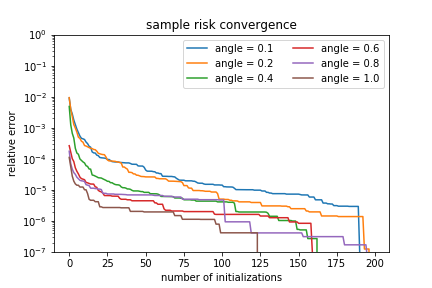}
    \caption{$r_{Q}(s)$}
  \end{subfigure}
  \label{figure:computation:convergence_angle}
\end{figure}

\begin{figure}
\centering
\caption{Algorithm Convergence and $(N, T)$}
  \begin{subfigure}[b]{0.45\textwidth}
    \includegraphics[width=\textwidth]{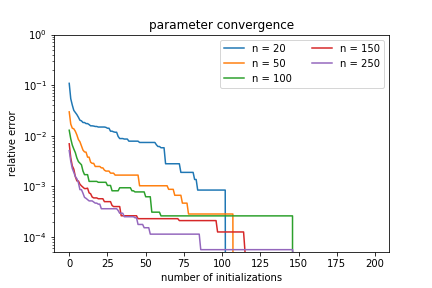}
    \caption{$r_{\theta}(s)$}
  \end{subfigure}
  \begin{subfigure}[b]{0.45\textwidth}
    \includegraphics[width=\textwidth]{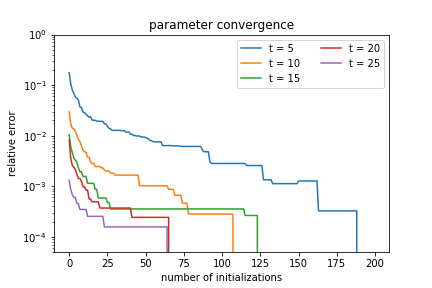}
    \caption{$r_{\theta}(s)$}
  \end{subfigure}
  \label{figure:computation:convergence_nt}
\end{figure}

\begin{figure}
\centering
\caption{Algorithm Convergence and Number of Clusters $k = k_1 + k_2$}
  \begin{subfigure}[b]{0.45\textwidth}
    \includegraphics[width=\textwidth]{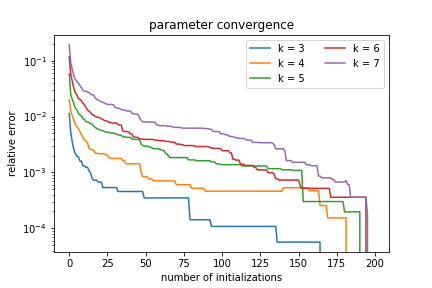}
    \caption{$r_{\theta}(s)$}
  \end{subfigure}
  \begin{subfigure}[b]{0.45\textwidth}
    \includegraphics[width=\textwidth]{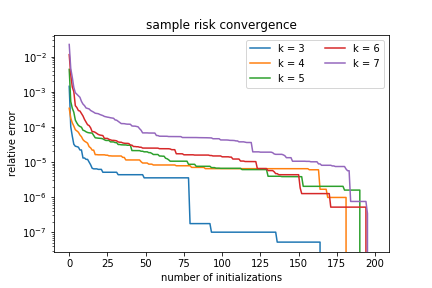}
    \caption{$r_{Q}(s)$}
  \end{subfigure}
  \label{figure:computation:convergence_cluster}
\end{figure}

\clearpage


\begin{thebibliography}{}

\bibitem[\protect\citeauthoryear{Ando and Bai}{Ando and
  Bai}{2016}]{AndoBai2016}
Ando, T. and J.~Bai (2016).
\newblock Panel data models with grouped factor structure under unknown group
  membership.
\newblock {\em Journal of Applied Econometrics\/}~{\em 31}, 163--191.

\bibitem[\protect\citeauthoryear{Arellano}{Arellano}{1987}]{Arellano1987}
Arellano, M. (1987).
\newblock Computing robust standard errors for within-groups estimators.
\newblock {\em Oxford Bulletin of Economics and Statistics\/}.

\bibitem[\protect\citeauthoryear{Bonhomme and Manresa}{Bonhomme and
  Manresa}{2015}]{BonhommeManresa2015}
Bonhomme, S. and E.~Manresa (2015, May).
\newblock Grouped patterns of heterogeneity in panel data.
\newblock {\em Econometrica\/}~{\em 83\/}(3), 1147--1184.

\bibitem[\protect\citeauthoryear{Bonhomme and Manresa}{Bonhomme and
  Manresa}{2019}]{BonhommeManresa2019Discretizing}
Bonhomme, S. and E.~Manresa (2019).
\newblock Discretizing unobserved heterogeneity.
\newblock {\em Manuscript\/}.

\bibitem[\protect\citeauthoryear{Buchinsky, Hahn, and Hotz}{Buchinsky
  et~al.}{2005}]{BuchinskyHotz2005}
Buchinsky, Hahn, and Hotz (2005).
\newblock Cluster analysis: A tool for preliminary structural analysis.
\newblock {\em Manuscript\/}.

\bibitem[\protect\citeauthoryear{Candes and Soltanolkotabi}{Candes and
  Soltanolkotabi}{2012}]{Candes2012}
Candes, E.~J. and M.~Soltanolkotabi (2012).
\newblock A geometric analysis of subspace clustering with outliers.
\newblock {\em The Annals of Statistics\/}~{\em 40\/}(4), 2195--2238.

\bibitem[\protect\citeauthoryear{Chen, Leng, and Wang}{Chen
  et~al.}{2019}]{Chen2019}
Chen, H., X.~Leng, and W.~Wang (2019).
\newblock Latent group structures with heterogeneous distributions:
  Identification and estimation.
\newblock {\em Manuscript\/}.

\bibitem[\protect\citeauthoryear{Cheng, Schorfheide, and Shao}{Cheng
  et~al.}{2019}]{Cheng2019}
Cheng, X., F.~Schorfheide, and P.~Shao (2019).
\newblock Clustering for multi-dimensional heterogeneity.
\newblock {\em Manuscript\/}.

\bibitem[\protect\citeauthoryear{Dzemski and Okui}{Dzemski and
  Okui}{2018}]{DzemskiOkui2018}
Dzemski, A. and R.~Okui (2018).
\newblock Confidence set for group membership.
\newblock {\em Manuscript\/}.

\bibitem[\protect\citeauthoryear{Gelman and Hill}{Gelman and
  Hill}{2007}]{GelmanHill2007}
Gelman, A. and J.~Hill (2007).
\newblock Data analysis using regression and multilevel/hierarchical models.
\newblock {\em Cambridge University Press\/}.

\bibitem[\protect\citeauthoryear{Hansen}{Hansen}{2007}]{Hansen2007Robust}
Hansen, C.~B. (2007).
\newblock Asymptotic properties of a robust variance matrix estimator for panel
  data when t is large.
\newblock {\em Journal of Econometrics\/}~{\em 141}, 597--620.

\bibitem[\protect\citeauthoryear{Ke, Li, and Zhang}{Ke et~al.}{2016}]{Ke2016}
Ke, Y., J.~Li, and W.~Zhang (2016).
\newblock Structure identification in panel data analysis.
\newblock {\em The Annals of Statistics\/}~{\em 44\/}(3).

\bibitem[\protect\citeauthoryear{Ke, Fan, and Wu}{Ke et~al.}{2015}]{Ke2015}
Ke, Z.~T., J.~Fan, and Y.~Wu (2015).
\newblock Homogeneity pursuit.
\newblock {\em Journal of the American Statistical Association\/}~{\em
  110\/}(509), 175--194.

\bibitem[\protect\citeauthoryear{Lian, Qiao, and Zhang}{Lian
  et~al.}{2019}]{Lian2019}
Lian, H., X.~Qiao, and W.~Zhang (2019).
\newblock Homogeneity pursuit in single index models based on panel data
  analysis.
\newblock {\em Manuscript\/}.

\bibitem[\protect\citeauthoryear{Lin and Ng}{Lin and Ng}{2012}]{LinNg2012}
Lin, C.-C. and S.~Ng (2012).
\newblock Estimation of panel data models with parameter heterogeneity when
  group membership is unknown.
\newblock {\em Journal of Econometric Methods\/}~{\em 1\/}(1), 42--55.

\bibitem[\protect\citeauthoryear{Liu, Shang, Zhang, and Zhou}{Liu
  et~al.}{2019}]{LiuOverspecifiedGroups}
Liu, R., Z.~Shang, Y.~Zhang, and Q.~Zhou (2019).
\newblock Identification and estimation in panel models with overspecified
  number of groups.
\newblock {\em Journal of Econometrics\/}.

\bibitem[\protect\citeauthoryear{Lloyd}{Lloyd}{1982}]{Lloyd1982}
Lloyd, S.~P. (1982).
\newblock Least squares quantization in pcm.
\newblock {\em IEEE Transactions on Information Theory\/}~{\em 28\/}(2).

\bibitem[\protect\citeauthoryear{Merlevede, Peligrad, and Rio}{Merlevede
  et~al.}{2011}]{Rio2011}
Merlevede, F., M.~Peligrad, and E.~Rio (2011).
\newblock A bernstein type inequality and moderate deviations for weakly
  dependent sequences.
\newblock {\em Probability Theory and Related Fields\/}~{\em 151}, 435--474.

\bibitem[\protect\citeauthoryear{Rio}{Rio}{1993}]{Rio1993}
Rio, E. (1993).
\newblock Covariance inequalities for strongly mixing processes.
\newblock {\em Annales de l'I.H.P.\/}.

\bibitem[\protect\citeauthoryear{Rio}{Rio}{2017}]{Rio2017}
Rio, E. (2017).
\newblock Asymptotic theory of weakly dependent random processes.
\newblock {\em Springer-Verlag\/}.

\bibitem[\protect\citeauthoryear{Serban and Wasserman}{Serban and
  Wasserman}{2005}]{Wasserman2005CATS}
Serban, N. and L.~Wasserman (2005).
\newblock Cats: Clustering after transformation and smoothing.
\newblock {\em Journal of the American Statistical Association\/}~{\em
  100\/}(471).

\bibitem[\protect\citeauthoryear{Sp{\"a}th}{Sp{\"a}th}{1979}]{Spath1979}
Sp{\"a}th, H. (1979).
\newblock Algorithm 39 clusterwise linear regression.
\newblock {\em Computing\/}~{\em 22\/}(367-373).

\bibitem[\protect\citeauthoryear{Su, Shi, and Phillips}{Su
  et~al.}{2016}]{Su2016}
Su, L., Z.~Shi, and P.~C.~B. Phillips (2016).
\newblock Identifying latent structures in panel data.
\newblock {\em Econometrica\/}~{\em 84\/}(6), 2215--2264.

\bibitem[\protect\citeauthoryear{Sun}{Sun}{2005}]{Sun2005}
Sun, Y. (2005).
\newblock Estimation and inference in panel structure models.
\newblock {\em Manuscript\/}.

\bibitem[\protect\citeauthoryear{Vogt and Linton}{Vogt and
  Linton}{2019}]{LintonVogt2019}
Vogt, M. and O.~Linton (2019).
\newblock Multiscale clustering of nonparametric regression curves.
\newblock {\em Manuscript\/}.

\bibitem[\protect\citeauthoryear{Wang, Phillips, and Su}{Wang
  et~al.}{2016}]{Su2016Homogeneity}
Wang, W., P.~C.~B. Phillips, and L.~Su (2016).
\newblock Homogeneity pursuit in panel data models: theory and application.
\newblock {\em Journal of Applied Econometrics\/}.

\bibitem[\protect\citeauthoryear{Yamamoto and Terada}{Yamamoto and
  Terada}{2014}]{Yamamoto2014}
Yamamoto, M. and Y.~Terada (2014).
\newblock Functional factorial k-means analysis.
\newblock {\em Computational Statistics and Data Analysis\/}~{\em 79},
  133--148.

\bibitem[\protect\citeauthoryear{Zhang, Wang, and Zhu}{Zhang
  et~al.}{2019}]{Zhang2019}
Zhang, Y., H.~J. Wang, and Z.~Zhu (2019).
\newblock Quantile-regression-based clustering for panel data.
\newblock {\em Journal of Econometrics\/}~{\em 213}, 54--67.

\end{thebibliography}

\end{document}